\newcommand{\Gu}[1]{{\color{blue}{#1}}}
\newcommand{\cmark}{\ding{51}}%
\newcommand{\xmark}{\ding{55}}%
\definecolor{darkblue}{rgb}{0.0, 0.0, 0.55}
\definecolor{chicago-maroon}{RGB}{128,0,0}
\definecolor{northwestern-purple}{RGB}{82,0,99}
\theoremstyle{definition}
\DeclareFontFamily{U}{mathx}{\hyphenchar\font45}
\DeclareFontShape{U}{mathx}{m}{n}{
      <5> <6> <7> <8> <9> <10>
      <10.95> <12> <14.4> <17.28> <20.74> <24.88>
      mathx10
      }{}
\DeclareSymbolFont{mathx}{U}{mathx}{m}{n}
\DeclareMathAccent{\widecheck}{0}{mathx}{"71}
\DeclareMathAccent{\wideparen}{0}{mathx}{"75}
\def\Levy{L\'{e}vy}
\def\diff{\mathrm{d}}
\newtheorem{assumption}{Assumption}
\newtheorem{theorem}{Theorem}
\newtheorem{lemma}{Lemma}
\newtheorem{remark}{Remark}
\newtheorem{proposition}{Proposition}
\newcommand{\op}{\mathrm{op}}
\newcommand{\tr}{\mathrm{tr}}
\newcommand{\norm}[2]{\lVert#1\rVert_#2}
\newcommand{\B}[1]{#1_{[k]}}
\newcommand{\ST}[1]{#1_{[k]1}}
\newcommand{\SC}[1]{#1_{[k]0}}
\newcommand{\ProjBeta}[1]{\beta_{[k]}(#1)}
\newcommand{\var}{\mathrm{var}}
\newcommand{\Sigmak}[1]{\Sigma_{[k]#1}}
\newcommand{\HSigmak}[1]{\tilde{\Sigma}_{[k]#1}}
\newcommand{\Sigmakx}{\Sigma_{[k]}}
\newcommand{\HSigmakx}{\hat{\Sigma}_{[k]}}
\newcommand{\tilx}{\tilde{X}}
\newcommand{\sgij}{\overline{G}_{ij}}
\newcommand{\Res}{\mathrm{Res}}
\newcommand{\SigmaX}{\Sigma}
\newcommand{\HSigmaX}{\hat{\Sigma}}
\newcommand{\zetaSec}[1]{\zeta^2_{\mathrm{II},#1}}
\newcommand{\HzetaSec}[1]{\hat{\zeta}^2_{\mathrm{II},#1}}
\def\Holder{H\"{o}lder}
\def\hat{\widehat}
\def\tilde{\widetilde}
\def\Ek{E_{[k]}}
\def\Eka{E_{[k]1}}
\def\unadj{\mathrm{unadj}}
\def\oracle{\mathrm{ora}}
\title{Assumption-lean covariate adjustment under covariate adaptive randomization when $p = o (n)$}
\author{Yujia Gu$^1$\thanks{email: \href{guyj@mail.tsinghua.edu.cn}{guyj@mail.tsinghua.edu.cn} The authors are listed in alphabetical order.} \ \ Lin Liu\orcidlink{0000-0002-9883-7962}$^{2}$\thanks{email: \href{linliu@sjtu.edu.cn}{linliu@sjtu.edu.cn} Corresponding author. The authors would like to thank Xin Zhang (Pfizer), \href{https://math.sjtu.edu.cn/faculty/chengwang/index.html}{Cheng Wang} and \href{https://github.com/Cinbo-Wang}{Xinbo Wang} for helpful discussions.} \ \ Wei Ma\orcidlink{0000-0002-2952-7944}$^3$\thanks{email: \href{mawei@ruc.edu.cn}{mawei@ruc.edu.cn}}}
\date{\small
$^1$Department of Statistics and Data Science, Tsinghua University, Beijing, China \\
$^2$Institute of Natural Sciences, MOE-LSC, School of Mathematical Sciences, SJTU-Yale Joint Center for Biostatistics and Data Science, Shanghai Jiao Tong University, Shanghai, China \\
$^3$Institute of Statistics and Big Data, Renmin University of China, Beijing, China \\
\vspace{1em}
\today
}
\begin{document}

\begin{bibunit}[apalike]

\maketitle

\begin{abstract}
\small Adjusting for (baseline) covariates with working regression models becomes standard practice in the analysis of randomized clinical trials (RCT). When the dimension $p$ of the covariates is large relative to the sample size $n$, specifically $p = o (n)$, adjusting for covariates even in a linear working model by ordinary least squares can yield overly large bias, defeating the purpose of improving efficiency. This issue arises when no structural assumptions are imposed on the outcome model, a scenario that we refer to as the \emph{assumption-lean setting}. Several new estimators have been proposed to address this issue. However, they focus mainly on simple randomization under the finite-population model, not covering covariate adaptive randomization (CAR) schemes under the superpopulation model. Due to improved covariate balance between treatment groups, CAR is more widely adopted in RCT; and the superpopulation model fits better when subjects are enrolled sequentially or when generalizing to a larger population is of interest. Thus, there is an urgent need to develop procedures in these settings, as the current regulatory guidance provides little concrete direction. In this paper, we fill this gap by demonstrating that an adjusted estimator based on second-order $U$-statistics can almost unbiasedly estimate the average treatment effect and enjoy a guaranteed efficiency gain if $p = o (n)$. In our analysis, we generalize the coupling technique commonly used in the CAR literature to $U$-statistics and also obtain several useful results for analyzing inverse sample Gram matrices by a delicate leave-$m$-out analysis, which may be of independent interest. Both synthetic and semi-synthetic experiments are conducted to demonstrate the superior finite-sample performance of our new estimator compared to popular benchmarks.
\end{abstract}

{\small\textbf{Keywords:} covariate adaptive randomization, covariate adjustment, leave-$m$-out analysis, randomized experiments, $U$-statistics}

\doublespacing

\newpage

\section{Introduction}
\label{sec:intro}

Randomization is widely regarded as the gold standard for evaluating treatment effects from clinical trials and interventional studies in other disciplines. Although simple randomization is straightforward to implement in practice and is optimal in theory under certain criteria \citep{bai2023randomize}, it can nevertheless lead to imbalances in the distributions of (baseline) covariates in different treatment groups and, in turn, to accidental bias and loss of efficiency in estimating treatment effects \citep{student1938comparison, efron1971forcing, kasy2016experimenters, banerjee2020theory}. 

As a remedy, randomization schemes that leverage the information in covariates, in particular covariate adaptive randomization (CAR), are now routinely adopted in practice, especially in clinical settings \citep{Lin2015}. When balancing stratified covariates, stratified block randomization \citep{zelen1974randomization} and minimization \citep{Taves1974,Pocock1975} are commonly deployed in randomized clinical trials (RCT). Stratified block randomization defines sets of strata of a subset of covariates and allocates units within each stratum using block randomization. This method is widely used in the design of RCT, accounting for approximately 70\% of all cases \citep{ciolino2019ideal, Lin2015}. Minimization was initially designed to balance covariates over their marginal distributions \citep{Taves1974, Pocock1975}, but it has also been generalized to control various types of imbalance measures, including marginal and/or within-stratum imbalances \citep{hu2012asymptotic}. Other randomization strategies that fall under the purview of CAR include stratified biased coin design \citep{efron1971forcing} and various model-based approaches \citep{Atkinson1982,Begg1980}. A rather comprehensive discussion of these randomization schemes in RCT can be found in \citet{Rosenberger2015}. We also refer those who are interested in the use of CAR in disciplines beyond medicine to \citet{duflo2007using}.

Beyond balancing covariates in the design stage, it is now well accepted that adjusting for covariates in the analysis stage can also improve the estimation efficiency \citep{leon2003semiparametric, zhang2008improving, tsiatis2008covariate, ma2024new, van2024covariate}. In fact, such practice has been encouraged by two quite impactful regulatory guidelines \citep{EMA2015, FDA2023}. The robustness of \emph{(covariate-)adjusted estimators} is highly valued and critical for their acceptance into routine practice because the data-generating process in RCT is typically unknown beyond the treatment assignment mechanism. In general, it is required that the bias of any adjusted estimator, regardless of its source (e.g., model misspecification or curse of dimensionality), must be dominated by sampling variability of order $n^{-1 / 2}$. This is due to the availability of a $\sqrt{n}$-consistent and asymptotically normal ($\sqrt{n}$-CAN) estimator that adjusts for no covariates in the analysis stage, thus imposing no structural assumptions on the outcome model. % \footnote{In the sequel, we will call this estimator, to our knowledge first proposed in \citet{Bugni2019}, the \textit{unadjusted estimator} and its precise definition will be recalled in \eqref{unadj}.}.

We now review several existing covariate adjustment methods that are relevant to, but not restricted to, CAR. Consider a trial enrolling $n$ subjects. For simple randomization (without balancing any covariates in the design stage), several studies have explored the properties of regression-adjusted estimators \citep{freedman2008regressiona, Freedman2008, lin2013}. However, under covariate-adaptive randomization, the robustness of these estimators becomes more challenging to probe due to dependencies in treatment assignments. \citet{Bugni2018,Bugni2019} proposed a model-assisted approach that adjusts for stratification covariates. The method proposed there ensures valid statistical inference that is robust to potential misspecification of the outcome model. To adjust for additional covariates and improve efficiency, recent work developed stratum-common and stratum-specific estimators for settings where the covariate dimension $p$ is fixed \citep{ma2022regression, ye2023toward, gu2023regression}. For high-dimensional regimes ($p \gg n$), \citet{liu2022lasso} introduced lasso-adjusted estimators that ensure both robustness and efficiency gains.

Despite the above progress, there remains an important gap in the theoretical underpinnings of how to construct adjusted estimators of treatment effects, when the dimension $p$ of adjusted covariates is moderately high relative to $n$, or more precisely $p = o (n)$ and $p < n$, without imposing any structural assumptions on the outcome model, such as sparsity as in \citet{Bloniarz2016} or low complexity (quantified by the metric entropy integral) as in \citet{guo2023generalized}. We coin the setting in which no structural assumptions are imposed on the outcome model as the \emph{assumption-lean setting}, borrowing a terminology recently popularized in the causal inference literature \citep{vansteelandt2022assumption}. Under simple randomization and the design-based framework (or equivalently, the finite-population model), various bias-corrected adjusted estimators with linear working models have been proposed to reduce bias due to large $p$ \citep{lei2021regression, chang2024exact, lu2025debiased, zhao2024hoif, chiang2025regression}. As delineated in \citet{zhao2024hoif}, all the above bias-corrected adjusted estimators can essentially be interpreted as estimators based on a particular type of $U$-statistics motivated by the theory of higher-order influence functions \citep{liu2017semiparametric}.

When it comes to the more general CAR designs that are the main focus of this paper, the relevant literature is scant. We also deviate from the above literature by adopting the superpopulation model to better fit RCT, in which subjects are often enrolled sequentially, and one is often interested in generalizing the conclusions drawn from an RCT to a larger population. To our knowledge, \citet{jiang2025adjustments} is the only work investigating the statistical properties of adjusting for high-dimensional covariates up to $p = O (n)$. In particular, they demonstrate that the standard OLS adjusted estimator with linear working models \citep{ma2022regression, ye2023toward} is $\sqrt{n}$-CAN and has a guaranteed efficiency gain compared to the unadjusted estimator as long as $p \ll \sqrt{n}$. However, when $p \gtrsim \sqrt{n}$, \citet{jiang2025adjustments} assumed a correct linear relationship between the outcome and the covariates to preserve the same statistical guarantee using the OLS adjusted estimator. As will be shown in the sequel, the OLS estimator is not robust against model misspecification once $p \gtrsim \sqrt{n}$. A more detailed comparison of our work with the articles mentioned above is provided in Remark~\ref{rem:comparison}. In summary, it remains elusive whether one can construct a $\sqrt{n}$-CAN adjusted estimator of the average treatment effect (ATE) under the assumption-lean setting and the superpopulation model for CAR, and this research gap is reflected in the following statement in the most recent FDA guideline on covariate adjustment \citep{FDA2023}:
\begingroup
\begin{quote}
\emph{The statistical properties of covariate adjustment are best understood when the number of covariates adjusted for in the study is small relative to the sample size (Tsiatis et al., 2008). Therefore, sponsors should discuss their proposal with the relevant review division if the number of covariates is large relative to the sample size or if proposing to adjust for a covariate with many levels (e.g., study site in a trial with many sites).}
\end{quote}
\label{quotation}
\endgroup

% is drastically simplified under the design-based framework, mainly due to the $p$-dimensional adjusted covariates being treated as fixed.

\paragraph{Main contributions and organization}

Our main objective is to address the gap targeted by the above quote; we refer the interested readers to Section~\ref{sec:conclusion} for our concrete recommendation. In particular, our paper advances the methodological and theoretical understanding of covariate adjustment on several fronts, while providing insights and guidance on its implementation in practice. Our contributions are summarized as follows.

\begin{enumerate}[label = \arabic*)]
\item Methodologically, we propose a new adjusted estimator of the ATE under CAR and the assumption-lean setting. Our new estimator is essentially a second-order $U$-statistic. In particular, this new estimator is $\sqrt{n}$-CAN, and has guaranteed efficiency gain compared to the unadjusted estimator (Theorem~\ref{thm:main}), as long as $p = o (n)$ in the assumption-lean setting, only under some additional mild tail assumptions on the covariates $X$ and outcome $Y (a)$; see Assumptions~\ref{ap1}--\ref{ap3}. Thus, our result resolves an important gap in the literature on covariate adjustment in randomized experiments. Furthermore, a consistent variance estimator of our new adjusted estimator of the ATE is also constructed to facilitate downstream hypothesis testing and statistical inference (Theorem~\ref{thm:var}).

\item Theoretically, new technical results on the coupling technique adopted in \citet{Bugni2018, Bugni2019} for CAR are established to analyze $U$-statistics to prove the statistical properties of our new estimator; see Lemma~\ref{lem:BG}. Furthermore, compared to the randomization-based framework \citep{zhao2024hoif}, the proof complexity under the superpopulation model is dramatically escalated due to the randomness in the covariates, demanding a fairly involved \emph{leave-two-out} analysis of inverse sample Gram matrices; see Lemma~\ref{lem:repeated Sherman-Morrison}. These new technical results could be of independent interest.

\item Empirically, through extensive simulation studies and a semi-synthetic data analysis, we demonstrate that our new estimator exhibits highly competitive performance in a variety of data generating processes, compared to popular benchmark methods. Based on both theoretical and empirical results, we provide parallel practical guidance that complements the most recent FDA guideline, in particular targeting the paragraph quoted on page~\pageref{quotation}. We have incorporated our new adjusted estimator under CAR into an R package available to download from \href{https://cran.r-project.org/web/packages/HOIFCar/index.html}{CRAN} to make the new method ready to be used in practice.
\end{enumerate}

The remainder of the paper is structured as follows. Section~\ref{sec:assumptions} sets the stage by introducing our statistical problem and the mathematical framework. Both intuitive explanation and numerical evidence are presented in Section~\ref{sec:intuition} to elucidate why the existing adjusted estimators with linear working models under CAR may have an overly large bias. We also formally introduce our new adjusted estimator at the end of Section~\ref{sec:intuition}. Section~\ref{sec:theory} then presents the statistical properties of our new estimator, which are the main theoretical contribution of this paper. %To help gain intuition, we first present in Section~\ref{sec:oracle} the statistical properties of an oracle version of our new estimator, by assuming that the distribution of the covariates is known. We then relax this unrealistic assumption in Section~\ref{sec:feasible}. 
In Section~\ref{sec:inference}, we construct a consistent variance estimator for our new adjusted estimator to facilitate downstream hypothesis testing and valid inference. Synthetic and semi-synthetic data analyses are conducted in Section~\ref{sec:experiments}, to empirically demonstrate the competitive performance of our new adjusted estimator compared to several benchmarks. % R code for replicating our numerical experiments can be found in the accompanying \href{https://github.com/linliu-stats/HOIF-Car/}{GitHub link}.
Section~\ref{sec:conclusion} concludes the paper with a discussion of practical recommendations and some extensions. % including (1) the properties and the potential modification of our new adjusted estimator under the proportional asymptotic regime ($p \asymp n$ and $p < n$) and (2) whether (a modified version of) our new estimator achieves the semiparametric efficiency bound of ATE under CAR obtained recently in \citet{rafi2023efficient} \Lin{not sure if we should elaborate at this point; to discuss}. 
Technical details are deferred to the Appendix.

\section{Framework and Setup}
\label{sec:assumptions}

Suppose that we conduct a randomized experiment with $n$ units (enrolled subjects). For each unit $i$, we use $A_i$ to denote the treatment assignment with $A_i=1$ for the treatment group and $A_i=0$ for the control group. In this paper, we only consider two treatment groups, but it is straightforward to extend all our results to more than two groups. We let $n_1 = \sum_{i=1}^n A_i$ and $n_0 = \sum_{i=1}^n (1-A_i)$ be the number of units in the treated group and the control group, respectively. Since we are mainly interested in CAR, we further introduce $\{B_i\}_{i = 1}^{n}$ to denote the strata labels that are balanced in the design stage and each $B_{i}$ takes values in $\mathcal{K} = \{1,2,\dots,K\}$, where $K$ represents the total number of strata, assumed to be fixed throughout. We also collect $p$-dimensional covariates $\{X_i\}_{i = 1}^{n}$ that are not used in the design stage. We let $p_{[k]} = \Pr(B_i=k)$ be the target proportion of stratum $k$ and $\pi_{[k]} = \Pr(A_i=1|B_i=k)$ be the target proportion of treatment group in stratum $k$. We use subscript $[k]$ to indicate statistics calculated in stratum $k$, e.g. $n_{[k]} = |[k]|$, $n_{[k]1} = \sum_{i\in [k]} A_i$ and $n_{[k]0} = \sum_{i\in [k]} (1-A_i)$ representing the numbers of all units, treated units and control units in stratum $k$, respectively. Here, $i \in [k]$ indexes units in stratum $k$. Let $p_{n[k]} = n_{[k]}/n$ and $\pi_{n[k]} = \ST{n}/\B{n}$ be the proportion of stratum size and the treated group in stratum $k$. 

We adopt the potential outcomes notation to define treatment effects. Let $Y_i(0)$ and $Y_i(1)$ be the potential outcomes. By the standard consistency assumption, the observed outcome is $Y_i = A_i Y_i(1) + (1-A_i)Y_i(0)$. For any random variable $V$, let $\sigma^2_{[k]V} = \var(V|B = k)$ denote the variance of $V$ in stratum $k$. Let $\Sigmakx = \Ek\{X X^{\top}\}$ be the population-level Gram matrix of the covariates $X$ in stratum $k$, where $\Ek (\cdot) = E(\cdot|B=k)$ denotes the conditional mean in stratum $k$. We use $\eta_{[k]}(a) = \Ek\{X Y (a)\}$ to denote the inner product between $X$ and $Y (a)$, for $a\in\{0,1\}$. For a generic random variable $r (a)$ related to the treatment group $A = a$, let $\bar{r}_{[k]a} = n_{[k]a}^{-1} \sum_{i\in[k]} \mathbbm{1} \{A_i = a\} r_i(a)$, for $a \in \{0, 1\}$, be the corresponding sample mean in stratum $k$. For example, $r (a)$ may correspond to the potential outcomes $Y (a)$, the covariate $X$, or their product $X \cdot Y (a)$. Also, let $\HSigmakx = n_{[k]}^{-1} \sum_{i\in[k]}X_i X_i^{\top}$ be the sample Gram matrix of the covariates $X$ in stratum $k$.

Let $W_i = \{Y_i(0),Y_i(1),B_i,X_i\}_{i=1}^n$ be independent and identically distributed (i.i.d.) samples from the population $W = \{Y(0),Y(1),B,X\}$. Under CAR, the ATE is identified as
\begin{align*}
\tau = E \{Y (1) - Y (0)\} = \sum\limits_{k =1}^K p_{[k]} \Ek\{Y (1) - Y (0)\} = \sum\limits_{k=1}^K p_{[k]}\tau_{[k]},
\end{align*}
where $\tau_{[k]}$ is the ATE in stratum $k$. Our goal is to construct point and interval estimators of $\tau$ based on the non-i.i.d.\ observed data $\{Y_i,B_i,X_i,A_i\}_{i=1}^n$, where the dependency is induced by CAR. Throughout, we impose the following on the treatment assignment mechanism.

\begin{assumption}\label{ap1}\ 
\begin{enumerate}
    \item Conditional on $\{B_1,\dots,B_n\}$, $\{A_1,\dots,A_n\}$ is independent of $\{Y_i(0),Y_i(1),X_i\}_{i=1}^n$.
    \item $\pi_{n[k]} \stackrel{P}\to \pi_{[k]}$ as $n \to \infty$, for all $k \in \mathcal{K}$.
\end{enumerate}
\end{assumption}

Assumption~\ref{ap1} is also imposed in \citet{Bugni2019}, \citet{ma2022regression} and \citet{liu2022lasso}, which is satisfied by several commonly used CAR procedures, including simple randomization, stratified block randomization \citep{zelen1974randomization} and stratified biased coin randomization \citet{efron1971forcing}. When $\B{\pi}$'s are the same across different strata, Pocock and Simon's minimization method \citep{Pocock1975} also satisfies this assumption \citep{Ma2015}. 

Before formally presenting our new covariate-adjusted ATE estimator (abbreviated as adjusted estimator for short), we first introduce the stratified difference-in-means estimator (referred to as the \emph{unadjusted estimator} here, as it does not adjust for additional covariates $X_i$), which is, to the best of our knowledge, first proposed in \citet{Bugni2019}:
\begin{equation}
\label{unadj}
\hat\tau_{\rm unadj} = \sum\limits_{k=1}^K p_{n[k]} \big(\ST{\bar Y} -\SC{\bar Y} \big).
\end{equation}
This unadjusted estimator $\hat{\tau}_{\unadj}$ serves as a benchmark against which both existing adjusted estimators and our newly proposed adjusted estimator are compared. Importantly, \citet{Bugni2019} showed that $\hat{\tau}_{\unadj}$ $\sqrt{n}$-CAN, without any assumptions on the outcome model.

Despite the advantages mentioned above, $\hat{\tau}_{\unadj}$ does not fully incorporate all available information from the data, therefore resulting in a loss of efficiency. We next introduce the OLS estimator $\hat\tau_{\rm OLS}$ proposed by \citet{ma2022regression}, which is commonly used in practice to improve efficiency compared to the unadjusted estimator when the covariate dimension $p$ is fixed, defined as
\begin{equation}
\label{OLS}
\begin{split}
\hat\tau_{\rm OLS} = & \ \sum\limits_{k=1}^K p_{n[k]} \Big( \Big\{\bar Y_{[k]1} - \dfrac{1}{n_{[k]1}}\sum\limits_{i\in[k]}(A_i-\pi_{n[k]})X_i^{\top}\hat\beta_{[k]}(1) \Big\} \\
& - \Big\{\bar Y_{[k]0} - \dfrac{1}{n_{[k]0}}\sum\limits_{i\in[k]}(\pi_{n[k]}-A_i)X_i^{\top}\hat\beta_{[k]}(0) \Big\} \Big),
\end{split}
\end{equation}
where
\[\hat\beta_{[k]}(a) = \HSigmakx^{-1} \cdot \frac{1}{n_{[k]a}}\sum\limits_{i\in[k]} \mathbbm{1}\{A_i=a\}X_iY_i,\quad a\in\{0,1\}.\]
A practical advantage of $\hat{\tau}_{\rm OLS}$ is that it can be computed by straightforward post-processing of the output of standard, off-the-shelf statistical software packages. To see this, the OLS estimator can in fact be obtained by first regressing $Y$ against $A$, $X$ and their interactions within each stratum and then taking a weighted sum of the regression coefficients of $A$ across strata, with weights $p_{n[k]}$. Specifically, $\hat\beta_{[k]}(a)$ corresponds to the coefficient of the interaction term between $A$ and $X$ in each stratum.

We now recall the statistical properties of $\hat\tau_{\rm OLS}$ in Proposition~\ref{prop1} below, which is adapted from \citet{ma2022regression}.
\begin{proposition}\label{prop1}
Under Assumption \ref{ap1} and $E\{Y^2_i(a)\}<\infty$, let $\ProjBeta{a} = \Sigmak{}^{-1} \eta_{[k]}(a)$,  $\beta_{[k]} = (1-\pi_{n[k]})\beta_{[k]}(1) + \pi_{n[k]}\beta_{[k]}(0)$ and $r_i(a) = Y_i(a) - X_i^{\top}\beta_{[k]}$. Define $\sigma^2_{\rm OLS} = \zeta^2_{H} + \zeta^2_{\mathrm{I},r}(\pi_{[k]})$. When $p$ is fixed, we have
\[\sqrt{n} (\hat\tau_{\rm OLS} - \tau)/\sigma_{\rm OLS} \stackrel{d}\to \operatorname{N}(0,1),\]
where
\begin{equation}
\label{notation_I}
\zeta^2_{\mathrm{I},r}(\pi_{[k]}) = \sum\limits_{k=1}^K p_{[k]}\Big(\frac{\sigma^2_{[k]r(1)}}{\pi_{[k]}} + \frac{\sigma^2_{[k]r(0)}}{1-\pi_{[k]}}\Big),\ \zeta^2_{H} = E\{\check{r}_i(1) - \check{r}_i(0)\}^2,
\end{equation}
and $\check{r}_i(a) = \Ek\{r_i(a)\} - E\{r_i(a)\}$ is the difference between the mean of $r (a)$ in stratum $k$ and the mean of $r (a)$ in the population.
\end{proposition}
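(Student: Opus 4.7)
The plan is to reduce $\hat\tau_{\rm OLS}$ to a stratified difference-in-means estimator applied to the transformed potential outcomes $r_i(a)=Y_i(a)-X_i^{\top}\beta_{[k]}$, and then invoke the existing CLT for such estimators under CAR. Concretely, I would: (i) show that the OLS coefficients satisfy $\hat\beta_{[k]}(a)\stackrel{P}\to\beta_{[k]}(a)$; (ii) use an algebraic identity to rewrite the adjustment term and replace $\hat\beta_{[k]}(a)$ by $\beta_{[k]}(a)$ at the cost of an $o_P(n^{-1/2})$ remainder; (iii) invoke the CLT of \citet{Bugni2019} for the stratified difference-in-means estimator, applied to the residuals $r_i(a)$ rather than to $Y_i(a)$.

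Step (i) is routine: Assumption~\ref{ap1} together with the finite second-moment conditions give $\hat\Sigma_{[k]}\stackrel{P}\to\Sigma_{[k]}$ and $n_{[k]a}^{-1}\sum_{i\in[k]}\mathbbm{1}\{A_i=a\}X_iY_i\stackrel{P}\to\eta_{[k]}(a)$ by stratum-conditional LLNs combined with $\pi_{n[k]}\to\pi_{[k]}$, so continuous mapping delivers $\hat\beta_{[k]}(a)\stackrel{P}\to\beta_{[k]}(a)=\Sigma_{[k]}^{-1}\eta_{[k]}(a)$.

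For step (ii), a direct computation exploiting $\pi_{n[k]}n_{[k]}=n_{[k]1}$ yields the two identities
\[
\frac{1}{n_{[k]1}}\sum_{i\in[k]}(A_i-\pi_{n[k]})X_i=\bar X_{[k]1}-\bar X_{[k]},\qquad \frac{1}{n_{[k]0}}\sum_{i\in[k]}(\pi_{n[k]}-A_i)X_i=\bar X_{[k]0}-\bar X_{[k]},
\]
together with the decompositions $\bar X_{[k]1}-\bar X_{[k]}=(1-\pi_{n[k]})(\bar X_{[k]1}-\bar X_{[k]0})$ and $\bar X_{[k]0}-\bar X_{[k]}=-\pi_{n[k]}(\bar X_{[k]1}-\bar X_{[k]0})$. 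Grouping terms and invoking $\beta_{[k]}=(1-\pi_{n[k]})\beta_{[k]}(1)+\pi_{n[k]}\beta_{[k]}(0)$, the oracle version of the adjustment collapses to $\sum_{k=1}^K p_{n[k]}(\bar X_{[k]1}-\bar X_{[k]0})^{\top}\beta_{[k]}$, and the replacement error is of the form $(\bar X_{[k]1}-\bar X_{[k]})^{\top}(\hat\beta_{[k]}(a)-\beta_{[k]}(a))$. Under CAR the balance bound $\bar X_{[k]1}-\bar X_{[k]0}=O_P(n^{-1/2})$ combined with step (i) makes this error $o_P(n^{-1/2})$, yielding the asymptotic linear representation
\[
\hat\tau_{\rm OLS}=\sum_{k=1}^K p_{n[k]}\{\bar r_{[k]1}(1)-\bar r_{[k]0}(0)\}+o_P(n^{-1/2}).
\]

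Step (iii) is then an immediate application of \citet{Bugni2019} applied to the stratified means of $r_i(a)$: since $\{r_i(a)\}$ retain the within-stratum i.i.d.\ structure with finite second moments and satisfy $\Ek\{r(1)-r(0)\}=\tau_{[k]}$ (the $X^{\top}\beta_{[k]}$ pieces cancel in the difference), their CLT gives $\sqrt n$-asymptotic normality with limiting variance $\zeta^2_H+\zeta^2_{\mathrm{I},r}(\pi_{[k]})$, whose two pieces correspond respectively to between-strata heterogeneity (via $\check r_i(1)-\check r_i(0)$) and within-stratum randomization (via $\sigma^2_{[k]r(a)}$). The main obstacle is the balance bound used in step (ii): under general CAR the assignments $\{A_i\}_{i\in[k]}$ are not conditionally independent given strata labels, so $\bar X_{[k]1}-\bar X_{[k]0}=O_P(n^{-1/2})$ is not a textbook CLT and has to be established via the coupling construction of \citet{Bugni2018,Bugni2019}, which replaces the CAR assignments by an i.i.d.\ surrogate up to a vanishing coupling error before standard CLT tools apply.
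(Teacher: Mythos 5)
Your proposal is correct. The paper itself gives no proof of Proposition~\ref{prop1} --- it is imported from \citet{ma2022regression} --- and your argument is exactly the standard one underlying that result: the algebraic identities collapsing the adjustment to $-(\bar X_{[k]1}-\bar X_{[k]0})^{\top}\hat\beta_{[k]}$, the $O_P(n^{-1/2})\cdot o_P(1)$ bound on the substitution error (which is precisely the step that breaks once $p$ grows, as Section~\ref{sec:intuition} explains), and the reduction to the \citet{Bugni2019} CLT for the stratified difference-in-means applied to $r_i(a)$. You also correctly flag that the balance bound and the final CLT require the coupling construction rather than a textbook i.i.d.\ argument, which is the same device the paper uses in its proofs of Theorems~\ref{thm1} and \ref{thm:main}.
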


Before moving forward, we unpack the seemingly cumbersome notation $\zeta_{\rm I, r}^{2} (\pi_{[k]})$ that appeared in Proposition~\ref{prop1} above. The subscript $\mathrm{I}$ is used to represent sample means, distinguished from second-order $U$-statistics (demarcated by subscript $\mathrm{II}$) to appear in the next section. The other subscript $r$ indicates that $\zeta_{\rm I, r}^{2} (\pi_{[k]})$ corresponds to the asymptotic variance of the transformed outcome $r$. %  Meanwhile, under Assumption~\ref{ap1}, $\zeta^2_{H}$ remains the same if we replace the transformed outcome $r_i(a)$ by the potential outcome $Y_i(a)$, for $a = 0, 1$ respectively. \Lin{how can it remain to be the same? unclear what we are trying to say here.} \Gu{Here says: in $\zeta^2_{H}$, $X_i\beta$ can be cancelled out in $r_i(1)-r_i(0)$, then $\zeta^2_{H} = E\{\check{Y}_i(1) - \check{Y}_i(0)\}^2$. Most importantly, we can estimate $\zeta^2_{H}$ without using $X_i$.} 
At this stage, it may seem that we are overloading the notation, but it will facilitate the presentation in Section~\ref{sec:inference}, when we discuss how to estimate the variance of our new estimator.

%If $V$ is a $p$-dimensional vector, we use $\Sigma_{[k]VV} = \Ek\{VV^{\top}\}$ to denote its population Gram matrix in stratum $k$. Let $\tilde{V}_{k} = V - \Ek\{V\}$ denote the centered $V$ by its stratum $k$-specific mean, and $\Sigma_{[k]\tilde V\tilde V}$ is the stratum-specific population Gram matrix. \Gu{Also, we denote the covariance between two random vectors $V$ and $U$ in each stratum $k$ as $\Sigmak{\tilde V\tilde U}= \Ek\{\tilde V\tilde U^{\top}\}$. For transformed outcomes $r_i(a)$, let $\bar r_{[k]1} = 1/n_{[k]1} \sum_{i\in[k]} A_i r_i(a)$ and $\bar r_{[k]0} = 1/n_{[k]0} \sum_{i\in[k]} (1-A_i)r_i(0)$ be the sample mean in stratum $k$.} Also, let $\HSigmak{VV} = 1/n_{[k]} \sum_{i\in[k]}V_i V_i^{\top}$ be the sample Gram matrix in stratum $k$. 

\begin{remark}
\label{rem:total dimension}
In this paper, we simply take $p$ as the dimension of all the covariates. In principle, we allow the dimension of all the covariates to be different from that of the covariates being adjusted. As we will see in Section~\ref{sec:theory}, $p = o (n)$ suffices for our proposed estimator to be asymptotic normal with asymptotic variance no greater than that of the unadjusted estimator. Therefore, a natural next step of our work is to develop estimators that can incorporate data-adaptive variable selection methods \citep{van2024automated} that adaptively choose $o (n)$ many variables if the total number of covariates is greater than $n$.
\end{remark}

\section{Bias-Corrected Estimator Based on \texorpdfstring{$U$}{}-Statistics}
\label{sec:intuition}

In the previous section, in particular in Proposition~\ref{prop1}, we have seen that the OLS estimator $\hat\tau_{\rm OLS}$ can be more efficient than the unadjusted estimator $\hat{\tau}_{\unadj}$, when the dimension $p$ of adjusted covariates is fixed or, more generally, when $p \ll \sqrt{n}$. However, in the assumption-lean setting, even when $p$ is moderately high in the sense that $p = o (n)$, the conclusions of Proposition~\ref{prop1} no longer hold and $\hat{\tau}_{\rm OLS}$ can suffer from non-negligible bias. If the bias of $\hat{\tau}_{\rm OLS}$ exceeds or even just equals $n^{-1 / 2}$ in order, it defeats the purpose of improving estimation efficiency by adjusting for covariates, because downstream hypothesis testing or statistical inference based on standard Wald confidence intervals will be invalid. 

We now explain the limitations of $\hat\tau_{\rm OLS}$ when $p$ is close to $n$. Recall that the OLS estimator takes the form in \eqref{OLS}. Suppose that the population-level Gram matrix $\Sigmakx$ is known and we temporarily replace $\HSigmakx$ in $\hat{\beta}_{[k]} (a)$  by $\Sigmakx$, which appeared in $\hat\tau_{\rm OLS}$ defined in \eqref{OLS}. Using the treatment group as an example, by spelling out $\hat{\beta}_{[k]} (a)$ for $a \in \{0, 1\}$, we can rewrite the linear adjustment term (or sometimes interpreted as the augmentation term) of $\hat{\tau}_{\rm OLS}$ in stratum $k$ as a second-order $V$-statistic
\[\frac{1}{n^2_{[k]1}}\sum\limits_{i\in[k]}\sum\limits_{j\in [k]}(A_i-\pi_{n[k]})X_i^{\top}\Sigmakx^{-1} A_j X_jY_j.\]
Based on this $V$-statistic representation, it is not difficult to see that it introduces a bias of order $p / n$ through the diagonal component in the double summation:
\begin{equation}
\label{bias OLS}
\frac{1}{n_{[k]1}^2}\sum\limits_{i\in[k]}(A_i-\pi_{n[k]})X_i^{\top}\Sigmakx^{-1} A_i X_iY_i = \frac{1-\pi_{n[k]}}{n_{[k]1}^2}\sum\limits_{i\in[k]}A_iX_i^{\top}\Sigmakx^{-1}X_iY_i =O_P\left(\frac{p}{n}\right),
\end{equation}
where the last equality follows because each summand $|A_iX_i^{\top}\Sigmakx^{-1}X_iY_i| \lesssim \Vert X_i \Vert_{2}^{2} \Vert \Sigmakx \Vert_{\rm op} = O_{P} (p)$, if we assume that each dimension of $X_{i}$ is of order $O_{P} (1)$ and $\Sigmakx$ has bounded eigenvalues (see Assumptions~\ref{ap2} and \ref{ap3} in Section~\ref{sec:theory}).

In Figure \ref{fig:bias}, we numerically illustrate how the bias of $\hat\tau_{\rm OLS}$, analytically characterized in \eqref{bias OLS}, scales with $p$. For comparison, we also report the bias of an oracle bias-corrected estimator to be introduced later, based on the oracle knowledge of $\Sigma_{[k]}$. The simulated data are generated according to a similar setting as Model~1 in Section~\ref{sec:sim} later and we defer the details to Appendix~\ref{app:illustration}. In Figure \ref{fig:bias}(a), we gather both the (absolute) empirical bias of $\hat{\tau}_{\rm OLS}$ calculated by taking the average of $\hat{\tau}_{\rm OLS} - \tau$ over repeated Monte Carlo draws and the analytical bias calculated using \eqref{bias OLS}, confirming that the absolute bias of $\hat{\tau}_{\rm OLS}$ indeed increases substantially with $p$ and tracks the analytic formula \eqref{bias OLS} closely. In contrast, the oracle bias-corrected estimator is nearly unbiased, with an absolute bias near zero across all $p$'s considered. We then display in Figure \ref{fig:bias}(b) the histograms of both $\hat{\tau}_{\rm OLS}$ and the oracle bias-corrected estimator respectively at $p=5$ and $p=60$. We rescaled both estimators by subtracting the true effect and dividing their corresponding Monte Carlo standard deviations. When $p=5$, the two estimators exhibit similar performance due to the small bias. However, when $p=60$, the OLS estimator exhibits a substantial bias, whereas the oracle bias-corrected estimator remains nearly unbiased.

\begin{figure}[H]
\centering
\includegraphics[width=0.95\linewidth]{./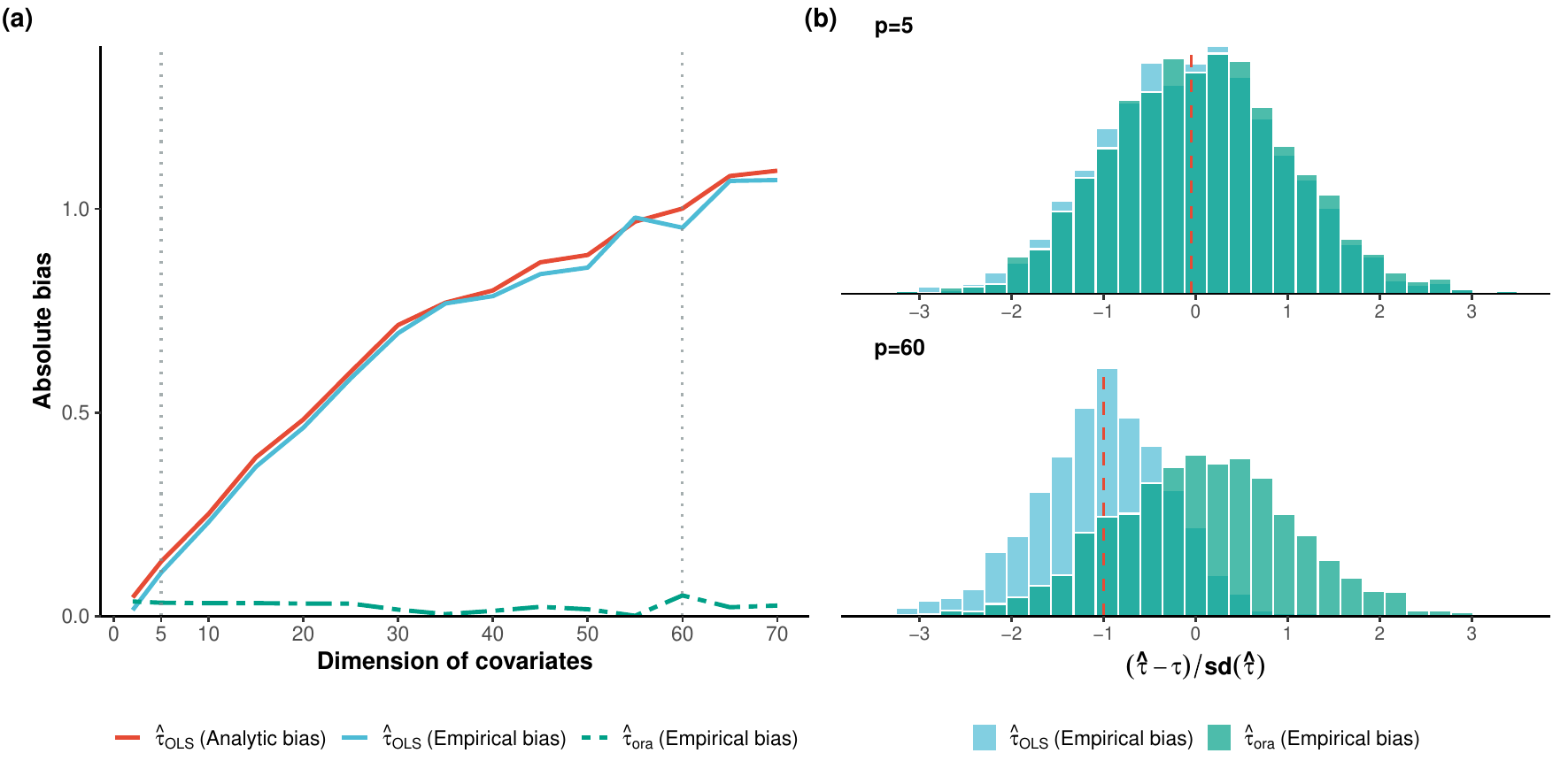}
\caption{(a) Absolute bias of the estimators relative to the true effect as the covariate dimension $p$ increases, averaged over 2000 Monte Carlo replicates. The covariate dimension refers to the set of variables adjusted in the estimators. The oracle estimator $\hat{\tau}_{\oracle}$ will be introduced later in this section. The vertical dotted lines mark the $p$'s at which the histograms of estimators are displayed in panel (b). (b) Histograms of the scaled estimators for two representative cases, $p=5$ and $p=60$, where scaling is performed by dividing each estimator by its Monte Carlo standard deviation computed from 2000 replicates.}
\label{fig:bias}
\end{figure}

The above observation prompts the need to develop new estimators that have negligible bias compared to sampling variability when $p$ is large compared to $n$ under the assumption-lean setting. To introduce our new estimator, we start by still assuming the oracle knowledge of $\Sigmakx$. We then reveal the oracle estimator reported in Figure~\ref{fig:bias}, which is a second-order $U$-statistic and removes the bias of $\hat{\tau}_{\rm OLS}$. First, for each stratum $k$, we compute two separate second-order $U$-statistics, respectively, in the treatment and control groups:
\begin{equation}
\label{U-notation}
\begin{split}
& \mathbf{U}_{n_{[k]},2}(\Sigmakx^{-1}; 1) = \frac{1}{\B{n}(\B{n}-1)}\frac{1}{\pi^2_{n[k]}} \sum_{\substack{1\leq i\neq j \leq n\\ i,j\in[k] }} (A_i - \pi_{n[k]})X_i^{\top} \Sigmakx^{-1} A_{j} X_j Y_j, \\
& \mathbf{U}_{n_{[k]},2}(\Sigmakx^{-1};0) = \frac{1}{\B{n}(\B{n}-1)}\frac{1}{(1-\pi_{n[k]})^2} \sum_{\substack{1\leq i\neq j \leq n\\ i,j\in[k] }} (\pi_{n[k]}-A_i)X_i^{\top} \Sigmakx^{-1} (1-A_j) X_j Y_j.
\end{split}
\end{equation} 
Here, we introduce the short-hand notation $\mathbf{U}_{n_{[k]},2}(\Sigmakx^{-1}; 1)$ and $\mathbf{U}_{n_{[k]},2}(\Sigmakx^{-1}; 0)$ for the ease of exposition. The oracle bias-corrected adjusted estimator of $\tau$ then reads as:
\[\hat\tau_{\oracle} = \sum\limits_{k=1}^K p_{n[k]}\Big[\Big\{\bar Y_{[k]1} - \mathbf{U}_{n_{[k]},2}(\Sigmakx;1) \Big\}-\Big\{\bar Y_{[k]0} - \mathbf{U}_{n_{[k]},2}(\Sigmakx;0) \Big\}\Big].\]
Viewing $\hat{\tau}_{\rm OLS}$ as a $V$-statistic, by removing the diagonal components using a $U$-statistic, the oracle adjusted estimator reduces the bias due to large $p$.

In reality, the population-level Gram matrix $\Sigmakx$ is in general unknown and needs to be estimated from the data. A natural approach is to use the sample Gram matrix estimator $\HSigmakx$ and plug it in. Replacing $\Sigmakx$ with $\HSigmakx$, we finally obtain the feasible adjusted estimator $\hat\tau$ defined as: 
\begin{equation}
\label{feasible}
\hat\tau = \sum\limits_{k=1}^K p_{n[k]}\Big[\Big\{\bar Y_{[k]1} - \mathbf{U}_{n_{[k]},2}(\HSigmakx;1) \Big\}-\Big\{\bar Y_{[k]0} - \mathbf{U}_{n_{[k]},2}(\HSigmakx;0) \Big\}\Big].
\end{equation}
The new adjusted estimator $\hat{\tau}$ constitutes one of the main methodological contributions of our paper. As will be demonstrated in Section~\ref{sec:theory}, the bias of $\hat{\tau}$ is negligible compared to its standard deviation, and it is guaranteed that the asymptotic variance of $\hat{\tau}$ never exceeds that of $\hat{\tau}_{\unadj}$ under the assumption-lean setting, as long as $p = o (n)$. We refer readers to Remark~\ref{rem:pseudoinverse} for further discussions on the bias of $\hat{\tau}$ when $p > n$.

\begin{remark}
\label{rem:linear models}
If treatments are independently assigned within each stratum, the oracle adjusted estimator is exactly unbiased, i.e., $E\{\hat\tau_{\oracle} - \tau\}=0$. In contrast, in CAR, treatment assignments $A_i$ and $A_j$ within the same block may be dependent, but, as we will show later in Section~\ref{sec:oracle}, $\hat\tau_{\oracle}$ is still asymptotically unbiased even after being scaled by $\sqrt{n}$:
\[\lim_{n \rightarrow \infty} E\{\sqrt{n} (\hat\tau_{\oracle}-\tau)\}\to 0,\]
It is not difficult to completely remove the remaining bias of order $o (n^{-1 / 2})$, which, however, is a less important issue and will not be further examined in this paper.
\end{remark}

\begin{remark}
\label{rem:leave-one-out}
It is noteworthy that our new estimator can also be interpreted as a modified OLS estimator simply by replacing the regression coefficients $\hat{\beta}_{[k]} (a)$ by their leave-one-out (LOO) forms. To see this, define the oracle LOO regression coefficients as: for any $i = 1, \cdots, n$, 
\begin{align*}
\tilde{\beta}^{-i}_{[k]}(1) = \Sigmakx^{-1} \cdot \frac{1}{n_{[k]}-1} \sum_{j\neq i} \frac{A_j}{\pi_{n[k]}} X_j Y_j, \, \, \, \, \text{and} \, \, \, \, \tilde{\beta}^{-i}_{[k]}(0) = \Sigmakx^{-1} \cdot \frac{1}{n_{[k]}-1} \sum_{j\neq i} \frac{1-A_j}{1-\pi_{n[k]}} X_j Y_j.
\end{align*}
We then obtain that, for each $a \in \{0, 1\}$,
\[\mathbf{U}_{n_{[k]},2}(\Sigmakx^{-1};a) = \frac{1}{n_{[k]a}} \sum\limits_{i\in [k]}(\mathbbm{1} \{A_i=a\} - \pi_{n[k]})X_i^{\top} \tilde{\beta}^{-i}_{[k]}(a).\]
We in turn define the oracle adjusted estimator $\hat{\tau}_{\rm ora}$ based on second-order $U$-statistics: 
\begin{equation*}
\hat\tau_{\oracle} =  \ \sum\limits_{k=1}^K p_{n[k]}\Big[\Big\{\bar Y_{[k]1} - \frac{1}{n_{[k]1}}\sum\limits_{i\in[k]}(A_i-\pi_{n[k]})X_i^{\top} \tilde{\beta}^{-i}_{[k]}(1) \Big\}-\Big\{\bar Y_{[k]0} - \frac{1}{n_{[k]0}}\sum\limits_{i\in[k]}(\pi_{n[k]}-A_i)X_i^{\top} \tilde{\beta}^{-i}_{[k]}(0) \Big\}\Big].
\end{equation*}
Compared to $\hat\tau_{\rm OLS}$, this LOO formulation of $\hat{\tau}_{\oracle}$ replaces the OLS regression coefficients $\hat{\beta}_{[k]}(a)$ by the oracle LOO coefficients $\tilde{\beta}^{-i}_{[k]}(a)$ for each $a\in\{0,1\}$. % As we shall see in Section~\ref{sec:inference}, this alternative LOO characterization can facilitate the construction of variance estimators for statistical inference.
\end{remark}

\iffalse
\begin{remark}
\label{rem:HOIF}
To conclude this section, we briefly remark that $\hat{\tau}$ resembles the higher-order influence function (HOIF) estimators for the average treatment effects \citep{robins2008higher, liu2017semiparametric, liu2023new}, based on data collected in unconfounded observational studies. In the finite-population setting, $\hat{\tau}$
\end{remark}
\fi

\section{Statistical Properties of the Bias-Corrected Estimator}
\label{sec:theory}

In this section, for ease of our exposition, we first present the statistical properties of the oracle estimator $\hat\tau_{\oracle}$, before moving on to the properties of the feasible estimator $\hat\tau$, which are the main theoretical results of this paper. The oracle estimator $\hat\tau_{\oracle}$ serves as an ideal technical device that bridges the feasible estimator $\hat{\tau}$ and the true ATE $\tau$. Unlike $\hat{\tau}$, the $U$-statistic kernel of $\hat\tau_{\oracle}$ does not depend on the entire sample through $\HSigmakx$, so it is relatively straightforward to establish its bias, variance, and asymptotic normality. Theoretical results of $\hat{\tau}_{\oracle}$ then conceptually simplify the analysis of the feasible estimator $\hat{\tau}$: Once the results are in place in the oracle setting, deriving the properties of $\hat{\tau}$ reduces to controlling the difference between $\HSigmakx$ and $\Sigmakx$. However, it should be noted that this final step turns out to be technically challenging and involves a delicate decoupling leave-out analysis of $\HSigmakx^{-1}$.

\subsection{The oracle estimator}
\label{sec:oracle}

Before stating our theoretical results, we further introduce some mild regularity conditions required on the distributions of $X_i$ and $Y_i (a)$.

\begin{assumption}[Distributions of $X$ and $Y (a)$]\label{ap2}
There exists an absolute constant $M > 0$ such that the following hold:
\begin{enumerate}[label = (\arabic*)]
\item $X_i$ is uniformly bounded by $M$ in the sense that there exists constant $M$ such that $\max_{i,j} |X_{ij}| \leq M < \infty$, where $X_{ij}$ is the $j$-th covariate of the $i$-th unit.

\item Given $X_i$, the moments of $Y_i(a)$ up to order 4 are all bounded by $M$ almost surely: 
$\max_{i} \Ek [|Y_i(a)|^4|X_i] \leq M < \infty$ almost surely for $a \in\{0,1\},\ k\in\mathcal{K}$.
\end{enumerate}
\end{assumption}

\begin{assumption}[Eigenvalues of $\Sigmakx$]\label{ap3}
Let $\Lambda_{\min}$ and $\Lambda_{\max}$ be, respectively, the minimum and maximum eigenvalues of the stratum-specific population Gram matrix $\Sigmakx$. There exist two absolute constants $\kappa_l$ and $\kappa_{u}$ independent of $n$ such that $0 < \kappa_l \leq \Lambda_{\min} \leq \Lambda_{\max} \leq \kappa_u < \infty$.
\end{assumption}

Assumption~\ref{ap2} imposes tail conditions on the covariates $X$ and the potential outcome $Y (a)$, which are mild in the context of RCT analysis. Specifically, it is standard to rescale $X$ in practice so that $X$ can be viewed as bounded. The bounded fourth-moment condition on $Y (a)$ is imposed in \citet{Freedman2008} and \cite{lin2013}. We conjecture that it is possible to further relax Assumption~\ref{ap2}(1) from bounded $X$ to sub-Gaussian tails, but proving this is beyond the scope of this paper. Assumption~\ref{ap3} incurs almost no loss of generality since we only assume that the population Gram matrix has bounded eigenvalues from above and below. These two assumptions are needed to upper bound the variance and prove the asymptotic normality of $\hat{\tau}_{\oracle}$ or $\hat{\tau}$.

Before presenting the main theoretical result of this section, we first introduce the following generic notation that will be frequently encountered in the rest of our paper: for $a, b \in \{0, 1\}$ and $l, m \in \{1, 2\}$, 
\begin{equation}
\label{notation_II}
\zetaSec{r[l,m]}(a,b,w(\pi_{[k]})) = \sum\limits_{k=1}^K w(\pi_{[k]})\frac{p_{[k]}}{n_{[k]}-1}\Ek\{(X_1^{\top}\Sigmakx^{-1}X_2)^2 r_l(a)r_m(b)\},
\end{equation}
where $w: [0, 1] \to \mathbb{R}$ is a function of the stratum-specific treatment-assignment probability $\pi_{[k]}$. Here, the new notation $\zetaSec{r[l,m]}(a,b,w(\pi_{[k]}))$ resembles the notation $\zeta^2_{\mathrm{I},r}(\pi_{[k]})$ introduced in \eqref{notation_I} previously. Armed with \eqref{notation_II}, we can introduce the following quantities that are useful in representing the variance of $\hat{\tau}_{\rm ora}$. To this end, define 
\begin{equation*}
\sigma^2 = \zeta^2_{H} + \zeta^2_{\mathrm{I},r}(\pi_{[k]}) + \zeta^2_{\mathrm{II}},
\end{equation*}
where $\zeta^2_{\mathrm{II}} = \zetaSec{Y(1)} + \zetaSec{Y(0)} - 2 \zetaSec{Y[1,2]}(1,0,1)$, and 
\begin{align*}
    \zetaSec{Y(1)} &= \zetaSec{Y[1,1]}\Big(1,1,\frac{1-\pi_{[k]}}{\pi^2_{[k]}}\Big)+ \zetaSec{Y[1,2]}\Big(1,1,\frac{(1-\pi_{[k]})^2}{\pi^2_{[k]}}\Big),\\
    \zetaSec{Y(0)} &= \zetaSec{Y[1,1]}\Big(0,0,\frac{\pi_{[k]}}{(1-\pi_{[k]})^2}\Big) + \zetaSec{Y[1,2]}\Big(0,0,\frac{\pi^2_{[k]}}{(1-\pi_{[k]})^2}\Big).
\end{align*}

We are now ready to present the statistical properties of $\hat\tau_{\oracle}$ in Theorem~\ref{thm1} below.

\begin{theorem}\label{thm1}
Under Assumptions \ref{ap1}--\ref{ap3}, when $p \lesssim n$, the following hold.
\begin{enumerate}[label = (\alph*)]
\item The bias and variance of the oracle estimator $\hat\tau_{\oracle}$ satisfy:
\begin{align*}
& E\{\sqrt n (\hat\tau_{\oracle} - \tau)\} = o(1), \text{ and} \\
& \var\{\sqrt n (\hat\tau_{\oracle} - \tau)\} = \sigma^2 + o (1).
\end{align*}

\item Furthermore, $\hat\tau_{\oracle}$ is $\sqrt{n}$-CAN, or more precisely:
\[\sqrt{n}(\hat\tau_{\oracle} - \tau)/\sigma\stackrel{d}\to \operatorname{N}(0,1).\]
\end{enumerate}
\end{theorem}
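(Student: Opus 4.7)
The overall plan is to reduce the problem to one with independent Bernoulli treatment assignments via coupling, and then perform a Hoeffding decomposition on each second-order $U$-statistic defined in~(\ref{U-notation}). Concretely, within each stratum $k$ I would construct Bernoulli assignments $\{A_i^\ast\}_{i\in[k]}$ with $P(A_i^\ast = 1\mid B_i = k) = \pi_{[k]}$ so that $A_i^\ast = A_i$ for all but $o_P(n^{1/2})$ indices; this is the role played by Lemma~\ref{lem:BG}. For sample means the coupling error is already known to be $o_P(n^{-1/2})$ from \citet{Bugni2018,Bugni2019}; the corresponding bound for the $U$-statistic part is new and must be established by counting pairs $(i,j)$ with at least one mismatched label and applying uniform $L^2$ control on the kernel $(A_i-\pi)X_i^\top \Sigma_{[k]}^{-1} A_j X_j Y_j$ via Assumptions~\ref{ap2}--\ref{ap3}.

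Once we may act as if the assignments are independent Bernoulli, each $U$-statistic kernel has mean zero, because $A_i^\ast$ is independent of $(X_i,Y_i(a),A_j^\ast,X_j,Y_j(a))$ and centered at $\pi_{[k]}$. Hence the bias under the coupled design is exactly zero and the only residual contributions come from (i) replacing $\pi_{[k]}$ with $\pi_{n[k]}$ inside the $U$-statistic and (ii) the coupling error itself; both are $o(n^{-1/2})$ after scaling, which yields the bias claim in part~(a).

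For the variance and the CLT, apply a Hoeffding decomposition of each $\mathbf{U}_{n_{[k]},2}(\Sigma_{[k]}^{-1}; a)$. The linear Hoeffding projection equals, up to negligible terms, $n_{[k]a}^{-1}\sum_{i\in[k]}(\mathbbm{1}\{A_i=a\}-\pi_{n[k]})\, X_i^\top \beta_{[k]}(a)$, so that $\hat\tau_{\oracle}$ decomposes into the infeasible stratified regression-adjusted estimator (with known coefficients $\beta_{[k]}(a)$) plus a degenerate second-order $U$-statistic remainder. The linear part yields the $\zeta_H^2 + \zeta^2_{\mathrm{I},r}(\pi_{[k]})$ component of $\sigma^2$ via the stratified CLT of \citet{Bugni2018,Bugni2019}, applied to the residuals $r_i(a) = Y_i(a) - X_i^\top \beta_{[k]}$. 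The degenerate part is orthogonal in $L^2$ to the linear part by construction, so its variance simply adds; a direct second-moment computation in which $E\{(X_1^\top \Sigma_{[k]}^{-1} X_2)^2\, r_l(a)\, r_m(b)\}$ naturally appears recovers $\zeta^2_{\mathrm{II}}$ in the stated form, including the $-2\,\zetaSec{Y[1,2]}(1,0,1)$ cross-covariance between the treatment and control $U$-statistics. Asymptotic normality then follows by combining the stratified CLT for the linear part with either negligibility of the degenerate part when $p=o(n)$ strictly, or a de Jong--type CLT for degenerate $U$-statistics (whose sufficient conditions reduce to spectral bounds on $\Sigma_{[k]}^{-1}$ controlled by Assumption~\ref{ap3}) when $p/n$ is bounded but nonvanishing. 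Slutsky then delivers part~(b).

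The main technical obstacle is the coupling argument for $U$-statistics underlying Lemma~\ref{lem:BG}: unlike the linear case, flipping a single assignment can affect $O(n)$ summands, so in addition to the classical count of mismatched indices one needs a conditional second-moment bound on the affected pairs exploiting $p=o(n)$ and the eigenvalue control of $\Sigma_{[k]}$. A secondary difficulty is verifying the degenerate-$U$-statistic CLT conditions in the regime $p \asymp n$, which again reduces to a spectral calculation on the quadratic kernel $X_i^\top \Sigma_{[k]}^{-1} X_j$ made tractable by Assumptions~\ref{ap2}--\ref{ap3}. Routine but lengthy book-keeping is needed to show that the substitution of $\pi_{n[k]}$ for $\pi_{[k]}$ and the propagation of the $o_P(n^{-1/2})$ coupling error do not spoil any of the variance terms.
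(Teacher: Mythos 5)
Your Hoeffding-decomposition step and the resulting identification of $\sigma^2 = \zeta^2_{H} + \zeta^2_{\mathrm{I},r}(\pi_{[k]}) + \zeta^2_{\mathrm{II}}$ do match the paper's argument: the paper likewise splits each $\mathbf{U}_{n_{[k]},2}(\Sigma_{[k]}^{-1};a)$ into a linear projection (which recombines with $\bar Y_{[k]a}$ to produce the $\zeta^2_{H}+\zeta^2_{\mathrm{I},r}$ part via the stratified CLT of \citet{Bugni2018}) and a degenerate second-order part whose variance gives $\zeta^2_{\mathrm{II}}$, with normality of the degenerate part established by a martingale CLT in the style of Bhattacharya--Ghosh. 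Note, however, that this CLT for the $U$-statistic $M_2$ is the actual content of Lemma~\ref{lem:BG}; it is not an assignment-coupling lemma as you describe it.

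The genuine gap is in your coupling step. You propose to couple the CAR assignments $\{A_i\}$ to i.i.d.\ Bernoulli$(\pi_{[k]})$ variables agreeing except on $o_P(n^{1/2})$ indices. First, Assumption~\ref{ap1} only gives $\pi_{n[k]} \stackrel{P}{\to} \pi_{[k]}$, i.e.\ a within-stratum imbalance of $o_P(n)$, so the minimal number of mismatches in any such coupling is of that order in general; even for designs with $O_P(\sqrt{n})$ imbalance you would get $O_P(\sqrt{n})$, not $o_P(\sqrt{n})$, mismatches. Second --- the point you flag but do not resolve --- each flipped label perturbs $O(n)$ of the $n_{[k]}(n_{[k]}-1)$ kernel evaluations, and a crude count (with $O_P(\sqrt{n})$ flips and kernels of typical size $\sqrt{p}$) leaves a perturbation that is \emph{not} $o_P(1)$ after the $\sqrt{n}$ scaling; closing this requires exactly the cancellation argument you defer. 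The paper sidesteps the issue entirely by coupling on the other side: following \citet{Bugni2018}, it orders units by stratum and treatment and replaces $(Y_i(a),X_i)$ by i.i.d.\ draws $(Y_i^k(a),X_i^k)$ from the conditional law given $B_i=k$, independent of $(A^{(n)},B^{(n)})$. Conditional on the assignment pattern, the sums defining $\bar Y_{[k]1}$ and $\mathbf{U}_{n_{[k]},2}(\Sigma_{[k]}^{-1};1)$ are then exactly equal in distribution to sums over i.i.d.\ data, so there is no mismatch error to control at all; the design enters only through deterministic prefactors such as $\sum_{i\neq j}(A_i-\pi_{n[k]})A_j$, which are computed directly. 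Relatedly, the $U$-statistic bias is not exactly zero even in the coupled representation (because $A_i$ and $A_j$ remain dependent within a stratum); it is $O(n^{-1/2})$ after scaling, which the paper verifies explicitly rather than by independence. If you replace your assignment coupling with this data coupling, the remainder of your argument goes through essentially as in the paper.
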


Since $\hat{\tau}_{\oracle}$ is a $U$-statistic, the proof of Theorem~\ref{thm1} generalizes the coupling technique of \citet{Bugni2019} to $U$-statistics, which could be of independent interest; see Appendix~\ref{app:thm1}. In Theorem~\ref{thm1}, we only need $p \lesssim n$ for the statements to hold because $\Sigmakx$ is known and we do not need $p < n$ to ensure $\Sigmakx$ to be invertible (guaranteed by Assumption~\ref{ap2}). In fact, the asymptotic normality of $\hat{\tau}_{\oracle}$ is maintained as long as $p = o (n^{2})$, if we scale $\hat{\tau}_{\oracle} - \tau$ by $\max \{\sqrt{n}, \sqrt{n^{2} / p}\}$ instead of $\sqrt{n}$ \citep{liu2020nearly}.

It is well known in the literature that $\sigma_{\rm OLS}^{2}$ is guaranteed to be smaller than or equal to the asymptotic variance of $\sqrt{n} \hat{\tau}_{\unadj}$. In the proposition below, we also establish the relationship between $\sigma_{\rm OLS}^{2}$ and $\sigma^{2}$. In particular, for $\sigma^{2}$ to be of the same order as $\sigma_{\rm OLS}^{2}$, $p = o (n)$ is still needed. Therefore, if $p = o (n)$, $\hat{\tau}_{\oracle}$ is also guaranteed to be never less efficient than $\hat{\tau}_{\unadj}$.
\begin{proposition}
\label{prop:var}
The following hold.
\begin{enumerate}[label = (\alph*)]
\item $\zeta^{2}_{\rm II} \geq 0$, or equivalently $\sigma_{\rm OLS}^{2} \leq \sigma^{2}$;

\item Under the assumptions of Theorem~\ref{thm1}, if $p = o (n)$, $\lim_{n \rightarrow \infty} \sigma^{2} / \sigma^{2}_{\rm OLS} = 1$.
\end{enumerate}
\end{proposition}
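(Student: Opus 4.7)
My plan is to prove the two parts separately: (a) by an algebraic reorganization of $\zeta^2_{\mathrm{II}}$ that exposes its nonnegativity, and (b) by a direct moment bound showing $\zeta^2_{\mathrm{II}} = O(p/n)$, which yields the ratio limit because $\sigma^2 = \sigma^2_{\mathrm{OLS}} + \zeta^2_{\mathrm{II}}$ and $\sigma^2_{\mathrm{OLS}}$ is bounded away from zero in the non-degenerate regime contemplated by the proposition.

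For (a), I would collect all five terms defining $\zeta^2_{\mathrm{II}}$ into a single stratified sum
\[
\zeta^2_{\mathrm{II}} = \sum_{k=1}^K \frac{p_{[k]}}{n_{[k]}-1}\, \Ek\!\left[(X_1^\top \Sigmakx^{-1} X_2)^2\, Q_{[k]}\right],
\]
where
\[
Q_{[k]} = \tfrac{1-\pi_{[k]}}{\pi_{[k]}^2}\, Y_1(1)^2 + \tfrac{\pi_{[k]}}{(1-\pi_{[k]})^2}\, Y_1(0)^2 + \tfrac{(1-\pi_{[k]})^2}{\pi_{[k]}^2}\, Y_1(1) Y_2(1) + \tfrac{\pi_{[k]}^2}{(1-\pi_{[k]})^2}\, Y_1(0) Y_2(0) - 2\, Y_1(1) Y_2(0).
\]
Since $(X_1^\top \Sigmakx^{-1} X_2)^2$ is symmetric in the labels $1,2$ and units within a stratum are i.i.d., I can symmetrize the expectation, replacing $Y_1(a)^2$ with $\tfrac{1}{2}(Y_1(a)^2 + Y_2(a)^2)$ and $-2 Y_1(1) Y_2(0)$ with $-(Y_1(1) Y_2(0) + Y_2(1) Y_1(0))$. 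Setting $a_{[k]} = (1-\pi_{[k]})/\pi_{[k]}$ and $b_{[k]} = \pi_{[k]}/(1-\pi_{[k]})$, so that $a_{[k]} b_{[k]} = 1$, the off-diagonal part factors cleanly as
\[
a_{[k]}^2 Y_1(1) Y_2(1) + b_{[k]}^2 Y_1(0) Y_2(0) - \bigl(Y_1(1) Y_2(0) + Y_2(1) Y_1(0)\bigr) = \bigl(a_{[k]} Y_1(1) - b_{[k]} Y_1(0)\bigr)\bigl(a_{[k]} Y_2(1) - b_{[k]} Y_2(0)\bigr).
\]
Let $\xi_i = a_{[k]} Y_i(1) - b_{[k]} Y_i(0)$ and $\phi_{[k]}(x) = \Ek[\xi_1 \mid X_1 = x]$. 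Using the trace identity $(X_1^\top \Sigmakx^{-1} X_2)^2 = \tr(\Sigmakx^{-1} X_2 X_2^\top \Sigmakx^{-1} X_1 X_1^\top)$ together with independence of units $1$ and $2$, I obtain
\[
\Ek\!\bigl[(X_1^\top \Sigmakx^{-1} X_2)^2 \xi_1 \xi_2\bigr] = \tr\!\bigl((\Sigmakx^{-1} M_{[k]})^2\bigr) = \bigl\|\Sigmakx^{-1/2} M_{[k]} \Sigmakx^{-1/2}\bigr\|_F^2 \geq 0,
\]
with $M_{[k]} = \Ek[X X^\top \phi_{[k]}(X)]$ symmetric. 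The symmetrized squared-outcome contributions are pointwise nonnegative once multiplied by $(X_1^\top \Sigmakx^{-1} X_2)^2$, so $\zeta^2_{\mathrm{II}} \geq 0$.

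For (b), Assumption~\ref{ap2} together with Jensen's inequality gives $|\Ek[Y(a) \mid X]| \leq M^{1/4}$ and $\Ek[Y(a)^2 \mid X] \leq M^{1/2}$, so every term in $Q_{[k]}$ is bounded in conditional absolute value by a constant depending only on $M$ and $\pi_{[k]}$. Hence $\bigl|\Ek\!\bigl[(X_1^\top \Sigmakx^{-1} X_2)^2 Q_{[k]}\bigr]\bigr| \lesssim \Ek[(X_1^\top \Sigmakx^{-1} X_2)^2]$, and the latter equals $\tr(\Sigmakx^{-1} \Sigmakx \Sigmakx^{-1} \Sigmakx) = p$ exactly. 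Combined with $n_{[k]}/n \to p_{[k]} > 0$, each stratum contributes $O(p/n)$, so $\zeta^2_{\mathrm{II}} = O(p/n) = o(1)$ whenever $p = o(n)$, delivering the ratio limit.

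The main obstacle is the completion-of-squares step in (a): the cross-expectation $-2\, \Ek[(X_1^\top \Sigmakx^{-1} X_2)^2 Y_1(1) Y_2(0)]$ has no a priori sign, and only after observing that the specific coefficients of the same-treatment products combine with $a_{[k]} b_{[k]} = 1$ does the rank-one factorization appear, after which the Frobenius-norm identity closes the argument.
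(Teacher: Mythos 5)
Your proposal is correct, and part (b) is essentially the paper's argument (bound the conditional moments of the outcomes by Assumption~\ref{ap2}, reduce to $\Ek\{(X_1^{\top}\Sigmakx^{-1}X_2)^2\}=\tr(I_p)=p$, and conclude $\zeta^2_{\mathrm{II}}=O(p/n)$). For part (a), however, you take a genuinely different route at the key nonnegativity step. The paper symmetrizes exactly as you do, but then absorbs a portion of the squared-outcome (diagonal) terms into the cross terms via the identities $\tfrac{1-\pi_{[k]}}{\pi_{[k]}^2}=\tfrac{(1-\pi_{[k]})^2}{\pi_{[k]}^2}+\tfrac{1-\pi_{[k]}}{\pi_{[k]}}$ and $\tfrac{\pi_{[k]}}{(1-\pi_{[k]})^2}=\tfrac{\pi_{[k]}^2}{(1-\pi_{[k]})^2}+\tfrac{\pi_{[k]}}{1-\pi_{[k]}}$, arriving at a \emph{pointwise} sum of three perfect squares, e.g.\ $\tfrac12(\tfrac{1-\pi_{[k]}}{\pi_{[k]}})^2\{Y_1(1)+Y_2(1)\}^2$ and $\tfrac{1}{\pi_{[k]}(1-\pi_{[k]})}\{(1-\pi_{[k]})Y_1(1)-\pi_{[k]}Y_2(0)\}^2$, so nonnegativity is immediate against the nonnegative weight $(X_1^{\top}\Sigmakx^{-1}X_2)^2$. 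You instead keep the diagonal terms (trivially nonnegative) separate and handle the cross part, which is \emph{not} pointwise nonnegative, by the rank-one factorization $\xi_1\xi_2$ with $a_{[k]}b_{[k]}=1$ and the trace identity, yielding $\Ek[(X_1^{\top}\Sigmakx^{-1}X_2)^2\xi_1\xi_2]=\|\Sigmakx^{-1/2}M_{[k]}\Sigmakx^{-1/2}\|_F^2\ge 0$; this exploits the independence of units $1$ and $2$ and the product structure of the kernel rather than a pointwise inequality. Both arguments are valid; the paper's is more elementary and self-contained, while yours identifies the cross-term contribution as an explicit squared Frobenius norm, which is a cleaner structural statement (and would survive even if the diagonal weights were perturbed, so long as they remain nonnegative). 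One minor point common to both your write-up and the paper: the conclusion $\lim_n \sigma^2/\sigma^2_{\mathrm{OLS}}=1$ implicitly requires $\sigma^2_{\mathrm{OLS}}$ bounded away from zero, which you at least flag explicitly.
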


\begin{remark}\label{rem: zeta2}
\citet{ma2022regression} have shown that when $p$ is fixed, $\hat{\tau}_{\mathrm{OLS}}$ has a guaranteed efficiency gain compared to $\hat{\tau}_{\unadj}$. When $p = o(n)$, our oracle estimator $\hat{\tau}_{\oracle}$ achieves the same asymptotic variance as $\hat{\tau}_{\mathrm{OLS}}$, and is thus also more efficient than or as efficient as $\hat{\tau}_{\unadj}$.  The extra term $\zeta^2_{\mathrm{II}}$ in $\var \{\hat{\tau}_{\oracle}\}$ compared to $\var \{\hat{\tau}_{\rm OLS}\}$ is $O (p/n)$, and it can be shown that $\zeta^2_{\mathrm{II}}\geq0$ (see Appendix~\ref{app:var estimator}). As a result, when $p/n \to \gamma\in(0,1]$, the asymptotic variance of $\hat\tau_{\oracle}$ may exceed that of $\hat\tau_{\unadj}$ in the proportional asymptotic regime, potentially reducing efficiency unless further conditions are assumed. See Section~\ref{sec:conclusion} for some further discussions.
\end{remark}

\begin{remark}
\label{rem:comparison}
Recent work has also focused on covariate adjustment in randomized experiments in which $p$ diverges with $n$. Most closely related to our work, \citet{zhao2024hoif} employed second-order $U$-statistics to estimate linear adjustments under the finite-population model with simple randomization. Separately, \citet{chang2024exact} and \citet{lu2025debiased} derived a debiased regression-adjusted estimator within the same framework; \citet{zhao2024hoif} noted that these estimators are asymptotically equivalent and had been introduced in almost identical forms in observational studies \citep{liu2020nearly, liu2020rejoinder, liu2023new} using the framework of higher-order influence functions \citep{liu2017semiparametric}, leading to the same asymptotic normality. We focus on CAR under a superpopulation model, which differs from the finite-population model in both theoretical assumptions and randomization schemes. While \citet{jiang2025adjustments} also consider the regime in which $\sqrt{n} \lesssim p \lesssim n$ in the superpopulation model, they assume a correct linear relationship between potential outcomes and covariates. Under this predicate, the OLS estimator $\hat{\tau}_{\rm OLS}$, although a $V$-statistic instead of a $U$-statistic, is still $\sqrt{n}$-CAN. In contrast, Theorem \ref{thm1} allows for possible misspecification of the linear model. As we have illustrated in Section~\ref{sec:intuition}, the regression-adjusted estimator based on $V$-statistics may have bias diverging to infinity after being standardized by $\sqrt{n}$ under model misspecification.

\iffalse
\begin{table}[H]
    \centering
    \begin{tabular}{c|c|c|c|c}
    \Xhline{3\arrayrulewidth}
     & superpopulation & CAR & $p \gtrsim \sqrt{n}$ \& $p = o (n)$ & assumption-lean \\ 
    \Xhline{3\arrayrulewidth}
    $\hat{\tau}_{\rm OLS}$ & \cmark & \cmark & \xmark & \cmark \\
    $\hat{\tau}_{\rm OLS}$ & \cmark & \cmark & \cmark & \xmark \\
    \hline
    \citet{lei2021regression} & \xmark & \xmark & \xmark & \cmark \\
    \citet{chang2024exact} \& \citet{lu2025debiased}& \xmark & \xmark & \cmark & \cmark \\
    \textbf{Ours} & \cmark & \cmark & \cmark & \cmark \\
    \Xhline{3\arrayrulewidth}
    \end{tabular}
    \caption{A comparison between our work and the existing literature.}
    \label{tab:comparison}
\end{table}
\fi
\end{remark}

\subsection{The feasible estimator}
\label{sec:feasible}

Theorem \ref{thm1} establishes the asymptotic normality of the oracle estimator $\hat\tau_{\oracle}$, but this theoretical result is derived under the often unrealistic assumption that the stratum-specific population Gram matrix $\Sigmakx$ is known. In practice, $\Sigmakx$ shall be estimated from the observed data. As indicated at the end of Section~\ref{sec:intuition}, the feasible estimator $\hat{\tau}$ simply replaces $\Sigmakx$ by the corresponding sample covariance estimator $\HSigmakx$.

Recall from \eqref{feasible} in Section~\ref{sec:intuition} that
\[\hat\tau = \sum\limits_{k=1}^K p_{n[k]}\Big[\Big\{\bar Y_{[k]1} - \mathbf{U}_{n_{[k]},2}(\HSigmakx;1) \Big\}-\Big\{\bar Y_{[k]0} - \mathbf{U}_{n_{[k]},2}(\HSigmakx;0) \Big\}\Big].\]
We first present the most important theoretical result of this paper, concerning the asymptotic statistical properties of $\hat{\tau}$.
\begin{theorem}
\label{thm:main}
Under Assumptions \ref{ap1}--\ref{ap3}, the following hold. 
\begin{itemize}
\item[(a)] When $p<n$, the feasible estimator $\hat\tau$ is asymptotic unbiased in the sense that
\begin{align*}
E\{\sqrt n (\hat\tau - \tau)\} = o(1).
\end{align*}
\end{itemize}
\vspace{-1.5em}
We then also assume $p = o (n)$.
\begin{itemize}
\item[(b)] The asymptotic variance of $\hat\tau$ satisfies
    \[\var\{\sqrt{n} (\hat\tau - \tau)\} = \sigma^2 + o (1).\]

\item[(c)] Moreover, $\hat\tau$ is $\sqrt n$-CAN in the sense that 
    \[\sqrt{n}(\hat\tau - \tau)/\sigma \stackrel{d}\to \operatorname{N}(0,1).\]
In view of Proposition \ref{prop:var}, when $p = o (n)$, $\lim_{n \rightarrow \infty} \sigma^{2} / \sigma^{2}_{\rm OLS} = 1$, we also have 
\[\sqrt{n}(\hat\tau - \tau)/\sigma_{\rm OLS} \stackrel{d}\to \operatorname{N}(0,1).\]
\end{itemize}
\end{theorem}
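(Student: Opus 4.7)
The plan is to reduce Theorem~\ref{thm:main} to Theorem~\ref{thm1} by proving that the difference $\hat\tau - \hat\tau_{\oracle}$ is negligible. Concretely, it suffices to show
\begin{align*}
E\{\sqrt{n}(\hat\tau - \hat\tau_{\oracle})\} = o(1) \quad \text{and} \quad \sqrt{n}(\hat\tau - \hat\tau_{\oracle}) = o_P(1),
\end{align*}
so that (a) follows from Theorem~\ref{thm1}(a) by linearity of expectation, and (b)--(c) follow from Theorem~\ref{thm1}(b)--(c) via Slutsky's theorem; the second display in (c) is then a direct consequence of Proposition~\ref{prop:var}(b). Stratifying, the task reduces to bounding, for each $k$ and $a\in\{0,1\}$, the increment $\mathbf{U}_{n_{[k]},2}(\hat\Sigma_{[k]}^{-1};a) - \mathbf{U}_{n_{[k]},2}(\Sigma_{[k]}^{-1};a)$, which is a second-order $U$-statistic whose kernel is a bilinear form in the matrix $\hat\Sigma_{[k]}^{-1}-\Sigma_{[k]}^{-1}$.

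The central technical device is a leave-two-out decoupling of $\hat\Sigma_{[k]}^{-1}$ from the summation indices. For each pair $i\neq j$ in stratum $k$, let $\hat\Sigma_{[k]}^{(-i,-j)}$ be the sample Gram matrix computed with units $i$ and $j$ removed. Applying the Sherman--Morrison--Woodbury identity twice---once per removed unit---I would write
\begin{align*}
\hat\Sigma_{[k]}^{-1} = \bigl(\hat\Sigma_{[k]}^{(-i,-j)}\bigr)^{-1} + R_{ij},
\end{align*}
where $R_{ij}$ is a rank-at-most-two correction determined solely by $X_i$, $X_j$ and $(\hat\Sigma_{[k]}^{(-i,-j)})^{-1}$; this is the algebraic content of Lemma~\ref{lem:repeated Sherman-Morrison}. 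The advantage is that $(\hat\Sigma_{[k]}^{(-i,-j)})^{-1}$ is independent of $(X_i,Y_i,A_i,X_j,Y_j,A_j)$ conditional on the strata labels, so substituting the decomposition splits $\mathbf{U}_{n_{[k]},2}(\hat\Sigma_{[k]}^{-1};a)$ into a principal piece (with the decoupled inverse) plus a residual piece collecting the $R_{ij}$ contributions.

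For the principal piece, matrix concentration (matrix Bernstein, leveraging the boundedness of $X$ in Assumption~\ref{ap2}(1) and Assumption~\ref{ap3}) yields $\|(\hat\Sigma_{[k]}^{(-i,-j)})^{-1} - \Sigma_{[k]}^{-1}\|_{\mathrm{op}} = O_P(\sqrt{p/n})$ uniformly in $(i,j)$. Combining this operator-norm bound with the fourth-moment control on $Y(a)$ in Assumption~\ref{ap2}(2), and then invoking the generalized coupling argument of Lemma~\ref{lem:BG}---which extends the \citet{Bugni2019} coupling from sample means to $U$-statistic kernels under CAR---shows that the principal piece contributes $o_P(n^{-1/2})$ whenever $p=o(n)$. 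The mean of the principal piece enjoys additional cancellation through the factor $A_i-\pi_{n[k]}$, so for part (a) one only needs $p<n$ to conclude $o(n^{-1/2})$ in expectation.

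The main obstacle lies in controlling the residual term arising from $R_{ij}$ uniformly over $i\neq j$. Its magnitude is dictated by Sherman--Morrison denominators of the form $1 + n_{[k]}^{-1} X_i^\top (\hat\Sigma_{[k]}^{(-i,-j)})^{-1} X_i$, which can become unstable as $p$ approaches $n$; one must establish a uniform high-probability lower bound on $\lambda_{\min}(\hat\Sigma_{[k]}^{(-i,-j)})$ and uniformly control the quadratic forms $X_i^\top(\hat\Sigma_{[k]}^{(-i,-j)})^{-1}X_j$ across all $O(n^2)$ pairs. This uniform stability is precisely the role of the leave-$m$-out analysis encoded in Lemma~\ref{lem:repeated Sherman-Morrison}: Assumption~\ref{ap3} anchors the population spectrum, Assumption~\ref{ap2}(1) supplies sub-Gaussian-type quadratic form concentration, and a union bound over the $O(n^2)$ pairs requires sharp tail control that survives in the $p=o(n)$ regime. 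Once the residual is shown to aggregate to $o(n^{-1/2})$ in expectation (for (a), using $p<n$) and to $o_P(n^{-1/2})$ in probability (for (b)--(c), requiring $p=o(n)$), the theorem follows by chaining these bounds across strata and treatment arms and applying Theorem~\ref{thm1} and Proposition~\ref{prop:var}.
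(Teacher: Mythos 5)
Your skeleton matches the paper's: write $\sqrt{n}(\hat\tau-\tau)=\sqrt{n}(\hat\tau-\hat\tau_{\oracle})+\sqrt{n}(\hat\tau_{\oracle}-\tau)$, invoke Theorem~\ref{thm1} for the second piece, couple within strata, and decouple $\HSigmakx^{-1}$ from the summation indices by a leave-two-out Sherman--Morrison identity. However, the technical core of your argument does not close the key step. You propose to control the decoupled piece by the operator-norm bound $\Vert(\HSigmakx^{(-i,-j)})^{-1}-\Sigmakx^{-1}\Vert_{\op}=O_P(\sqrt{p/n})$ together with Lemma~\ref{lem:BG}. This fails on two counts. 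First, Lemma~\ref{lem:BG} is a CLT for the \emph{oracle} kernel built from the true $\Sigmakx^{-1}$; it says nothing about the increment $\mathbf{U}_{n_{[k]},2}(\HSigmakx^{-1}-\Sigmakx^{-1};a)$. Second, a norm bound on the kernel gives $|X_i^{\top}(\HSigmakx^{(-i,-j)-1}-\Sigmakx^{-1})X_j|=O_P(p\sqrt{p/n})$, and with $O(n^2)$ terms and prefactor $n^{-1/2}(n_{[k]}-1)^{-1}$ this yields $O_P(p^{3/2})$, which diverges. The actual argument is a direct second-moment (Chebyshev) computation of the residual $U$-statistic: after expanding $E_{[k]1}\{\Res^2\}$ one must show that the cross-moments with three and four distinct indices, e.g.\ $\Ek[\{X_1^{\top}(\HSigmakx^{-1}-\Sigmakx^{-1})X_2\}\{X_3^{\top}(\HSigmakx^{-1}-\Sigmakx^{-1})X_4\}]$, are $o(1)$, because their combinatorial prefactors are $O(n^3)$ and $O(n^4)$. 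These $o(1)$ bounds are \emph{not} reachable from any bound on the random matrix $\HSigmakx^{-1}-\Sigmakx^{-1}$ alone (Cauchy--Schwarz only gives $o(n)$); they come from the fact that, after full leave-out decoupling, the expectation factors through the \emph{deterministic} matrix $E[\HSigmakx_{-I}^{-1}]-\Sigmakx^{-1}$ sandwiched between $E[X]$'s, whose operator norm is $o(1)$ (Lemmas~\ref{lem:leave-out-matrix} and~\ref{lem:repeated Sherman-Morrison}). Your sketch contains no mechanism producing this cancellation, and without it the variance bound, hence parts (b) and (c), does not follow.

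A secondary misdirection: you frame the residual rank-two corrections $R_{ij}$ as requiring a uniform high-probability lower bound on $\lambda_{\min}(\HSigmakx^{(-i,-j)})$ and a union bound over all $O(n^2)$ pairs. The paper never does this and it is unnecessary: after coupling, the summands are exchangeable, so $\Ek\{X_i^{\top}(\cdot)X_jY_j\}=\Ek\{X_1^{\top}(\cdot)X_2Y_2\}$ for every pair, and only moment bounds for a single representative pair are needed. The Sherman--Morrison denominators are then handled in expectation by a finite geometric-series expansion of $(1+\{X_1^{\top}(\HSigmakx)_{-(1,2)}^{-1}X_1+X_2^{\top}(\HSigmakx)_{-(1,2)}^{-1}X_2\}/n)^{-1}$ combined with the projection bound \eqref{key projection}, not by uniform spectral control. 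You should replace the union-bound plan with the exchangeability-plus-moment argument and supply the explicit cross-moment bounds above; as written, the proposal has a genuine gap at the variance step.
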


In fact, to maintain the same statistical properties of $\hat{\tau}$ as in Theorem~\ref{thm:main}, in principle $\HSigmakx^{-1}$ can be replaced by any generic estimator $\HSigmak{}^{-1}$ of $\Sigmakx^{-1}$, as long as $\HSigmak{}$ satisfies the following: 
\begin{equation}
\label{high level}
\sqrt{n}\Big\{\mathbf{U}_{n_{[k]},2}(\HSigmak{}^{-1} - \Sigmakx^{-1};1) - \mathbf{U}_{n_{[k]},2}(\HSigmak{}^{-1} - \Sigmakx^{-1};0)\Big\} = o_P(1),
\end{equation}
for all $k \in \mathcal{K}$. Here, for $a = 0, 1$, the notation $\mathbf{U}_{n_{[k]},2}(\HSigmak{}^{-1} - \Sigmakx^{-1};a)$ is the same as $\mathbf{U}_{n_{[k]},2}(\Sigmakx^{-1};a)$ defined in \eqref{U-notation}, replacing $\Sigmakx^{-1}$ by $\HSigmak{}^{-1} - \Sigmakx^{-1}$. Condition~\eqref{high level}  characterizes the relevant functional convergence rate of $\HSigmak{}^{-1}$ to $\Sigmakx^{-1}$ such that the difference between $\hat\tau_{\oracle}$ and $\hat\tau$ is $o_P (n^{-1/2})$. In the proof, we show that the sample Gram matrix estimator $\HSigmakx$ satisfies Condition~\eqref{high level} when $p = o(n)$. However, it is possible to consider other estimators of $\Sigma_{[k]}^{-1}$, such as the ridge inverse Gram matrix estimator \citep{liu2025augmented, abadie2025unbiased}. We leave the justification whether other estimators of $\Sigmakx^{-1}$ also satisfy Condition~\eqref{high level} to future work.

In the proof of Theorem~\ref{thm:main} (see Appendix~\ref{app:thm:main}), the use of the sample Gram matrix $\HSigmakx$ introduces two technical challenges. First, since $\HSigmakx$ involves all the data in stratum $k$, it induces complicated dependencies into the corresponding $U$-statistic kernel. A standard kernel $h$ of a second-order $U$-statistic is data-independent, while the corresponding kernel of $\hat{\tau}$ depends on all $X_{i}$'s in stratum $k$ through $\HSigmakx$. To decouple such a complicated dependency structure, we obtain a technical lemma (Lemma~\ref{lem:repeated Sherman-Morrison}) that repeatedly applies the Sherman–Morrison formula to carry out a delicate leave-two-out analysis of $\HSigmakx^{-1}$. This result could be of independent interest in other problems dealing with inverse sample Gram matrices with large $p$ \citep{liu2017semiparametric, bao2025leave}.

The second challenge lies in characterizing the asymptotic variance of $\hat{\tau}$. To show that $\hat{\tau}$ has the same asymptotic variance $\sigma^{2}$ as $\hat\tau_{\oracle}$, a sufficient condition is $\norm{\HSigmakx^{-1} - \Sigmakx^{-1}}{2} = o_{P} (1)$, which holds automatically if $p = o (n)$ by standard matrix concentration bounds. When $p = O (n)$, the bias of $\hat{\tau}$ is still $o (n^{-1 / 2})$ but the asymptotic variance of $\hat{\tau}$ is generally not $\sigma^{2}$. In Section~\ref{sec:conclusion}, we will discuss the asymptotic variance of $\hat{\tau}$ when $p \asymp n$.

Finally, a couple of remarks are in order before closing this section.

\begin{remark}
\label{rem:pseudoinverse}
In fact, the bias of $\hat{\tau}$ is always $o (n^{-1 / 2})$, regardless of how $p$ scales with $n$: when $\hat{\Sigma}$ is not invertible, plugging-in the pseudo-inverse of $\hat{\Sigma}_{[k]}$ will not affect the order of the bias. This is a consequence of knowing the probability of treatment assignment by design. For the asymptotic variance of $\hat{\tau}$ to be no greater than that of $\hat{\tau}_{\unadj}$, a sufficient condition is to ensure that $\hat{\tau}$ has the same asymptotic variance as $\sigma^{2}$, which still needs $p = o (n)$.
\end{remark}

\begin{remark}
We use the original $X_i$ rather than the centered $\check X_i = X_i - \bar{X}_{[k]}$ to construct the second-order $U$-statistics. In the superpopulation model under random designs, centered covariates $\check X_i$ inject more complex dependencies that compound the analysis of $U$-statistics. However, we expect that the statistical properties of the resulting $\hat{\tau}$ continue to hold. A similar observation should hold when using the centered outcomes $\check Y_i = \mathbbm{1}\{A_i=a\}Y_i - \bar{Y}_{[k]a}$. Centering $Y$ is expected to change only the form of the asymptotic variance $\sigma^{2}$ by replacing $Y(a)$ with $Y(a) - \Ek[Y(a)]$, for $a=0,1$. % \Gu{change to $\Ek$, as we centered in each stratum. Also, maybe we should use $Y(a)$ instead of $Y$?}
\end{remark}

\section{Valid Inference with the Bias-Corrected Estimator}
\label{sec:inference}

In this section, we demonstrate how to estimate the variance of our new bias-corrected adjusted estimator $\hat\tau$, so that it can be used in subsequent tasks related to hypothesis testing or statistical inference. Recall from Section~\ref{sec:oracle} and Theorem~\ref{thm:main} that the asymptotic variance $\sigma^2$ can be decomposed into the following terms: $\sigma^2 = \zeta^2_{H} + \zeta^2_{\mathrm{I},r}(\pi_{[k]}) + \zeta^2_{\mathrm{II}}$. The general strategy that we follow is to estimate each component in $\sigma^{2}$ by a sample average or a $U$-statistic.

We now explain how to estimate each of the components of $\hat{\sigma}^{2}$. First, we estimate $\zeta^2_{H}$ by
\[\hat\zeta^2_{H} = \sum\limits_{k=1}^K p_{n[k]}\Big\{\big(\bar Y_{[k]1} - \sum\limits_{k'=1}^K p_{n[k']}\bar Y_{[k']1}\big) - \big(\bar Y_{[k]0} - \sum\limits_{k'=1}^K p_{n[k']}\bar Y_{[k']0}\big)\Big\}^2,\]
because $\zeta_{H}^{2} = E \{E_{[k]} \{r_{i} (1) - r_{i} (0)\} - E \{r_{i} (1) - r_{i} (0)\}\}^{2}$ as stated in Proposition~\ref{prop1}, and we simply estimate each population mean by a corresponding sample average.

For the second term $\zeta^2_{\mathrm{I},r}(\pi_{[k]})$, we first recall that 
\[\zeta^2_{\mathrm{I},r}(\pi_{[k]}) = \sum\limits_{k=1}^K p_{[k]}\Big(\frac{\sigma^2_{[k]r(1)}}{\pi_{[k]}} + \frac{\sigma^2_{[k]r(0)}}{1-\pi_{[k]}}\Big),\]
where $r_i(a) = Y_i(a) - X_i^{\top}\beta_{[k]}$, $\beta_{[k]} = (1-\pi_{n[k]})\beta_{[k]}(1) + \pi_{n[k]}\beta_{[k]}(0)$, $\ProjBeta{a} = \Sigmak{}^{-1} \eta_{[k]}(a)$, and $\eta_{[k]} (a) = \Ek\{XY(a)\}$. In Appendix~\ref{app:var estimator}, we show that $\zeta^2_{\mathrm{I}, r} (\pi_{[k]})$ can be equivalently represented as
\[\zeta^2_{\mathrm{I},r}(\pi_{[k]}) = \sigma^2_Y(\pi_{[k]}) - \sigma^2_{\mathrm{I},\eta}(\pi_{[k]}) - 2 \sigma_{\mathrm{I},\eta(1)\eta(0)},\]
where 
\begin{align*}
& \sigma^2_{Y}(\pi_{[k]}) = \sum\limits_{k=1}^K p_{[k]}\Big\{\frac{1}{\pi_{[k]}} \sigma^2_{[k]Y(1)} + \frac{1}{1-\pi_{[k]}} \sigma^2_{[k]Y(0)}\Big\}, \\
& \sigma^2_{\mathrm{I},\eta}(\pi_{[k]}) = \sum\limits_{k=1}^K p_{[k]}\Big\{\frac{1-\pi_{[k]}}{\pi_{[k]}} \eta^{\top}_{[k]}(1)\Sigmakx^{-1}\eta_{[k]}(1) + \frac{\pi_{[k]}}{1-\pi_{[k]}} \eta^{\top}_{[k]}(0)\Sigmakx^{-1}\eta_{[k]}(0)\Big\}, \\
& \sigma_{\mathrm{I},\eta(1)\eta(0)} = \sum\limits_{k=1}^K p_{[k]}\Big\{\eta^{\top}_{[k]}(1)\Sigmakx^{-1}\eta_{[k]}(0)\Big\}.
\end{align*}
As $\zeta_{H}^{2}$, $\sigma^2_Y(\pi_{[k]})$ can be estimated by the corresponding sample variance. If we directly plug in the OLS coefficients $\hat\beta_{[k]}(a),\ a\in\{0,1\}$, then the estimators of $\sigma^2_{\mathrm{I},\eta}(\pi_{[k]})$ and $\sigma_{\mathrm{I},\eta(1)\eta(0)}$ are $V$-statistics, which leads to non-negligible bias for the same reason as for $\hat{\tau}_{\rm OLS}$. Therefore, we can estimate $\zeta^2_{\mathrm{I},r}(\pi_{[k]})$ using $U$-statistics instead, following the same type of construction as our point estimate $\hat{\tau}$. We denote the resulting estimator of $\zeta^2_{\mathrm{I},r}(\pi_{[k]})$ by $\hat\zeta^2_{\mathrm{I},r}(\pi_{[k]})$, which is
\begin{align*}
\hat\zeta^2_{\mathrm{I},r}(\pi_{n[k]}) =\hat\sigma^2_Y(\pi_{n[k]}) - \hat\sigma^2_{\mathrm{I},\eta}(\pi_{n[k]}) - 2 \hat\sigma_{\mathrm{I},\eta(1)\eta(0)}.
\end{align*}
The explicit forms of $\hat\sigma^2_Y(\pi_{n[k]})$, $\hat\sigma^2_{\mathrm{I},\eta}(\pi_{n[k]})$, and $ \hat\sigma_{\mathrm{I},\eta(1)\eta(0)}$ are given in Appendix~\ref{app:var estimator}.
    
Finally, in terms of $\zeta^2_{\mathrm{II}} = \zeta^2_{\mathrm{II}, Y (1)} + \zeta^2_{\mathrm{II}, Y (0)} - 2 \zeta^2_{\mathrm{II}, Y [1, 2]} (1, 0, 1)$, we apply the same estimation strategy as for $\zeta^2_{\mathrm{I},r}(\pi_{[k]})$. Specifically, we estimate $\sigma^2_{\mathrm{II},Y(a)[1]}$ by the sample mean, while $\sigma^2_{\mathrm{II},Y(a)[1,2]}$ and $\sigma^2_{\mathrm{II},Y(1,0)}$ are estimated by $U$-statistics. Their explicit forms are also deferred to Appendix~\ref{app:var estimator}. The estimator of $\zeta^2_{\mathrm{II}}$ is simply denoted as $\hat\zeta^2_{\mathrm{II}}$. In summary, our proposed variance estimator $\hat{\sigma}^{2}$ of $\sigma^{2}$ reads as follows: $\hat\sigma^2 = \hat\zeta^2_{H} + \hat\zeta^2_{\mathrm{I},r}(\pi_{n[k]}) + \hat\zeta^2_{\mathrm{II}}$.

Our last main theoretical result shows that $\hat{\sigma}^{2}$ is a consistent estimator of $\sigma^{2}$ under mild assumptions, so we can use $\hat{\tau}$ and $\hat{\sigma}^{2}$ to conduct downstream inference.

\begin{theorem}
\label{thm:var}
Under Assumptions \ref{ap1}--\ref{ap3}, and assuming that $p = o(n)$, the following hold:
\begin{enumerate}[label = (\alph*)]
\item $\hat\sigma^2$ is a consistent estimator of $\sigma^2$: $\hat\sigma^2\stackrel{P}\to \sigma^2$.

\item As a consequence, after scaled by $\hat{\sigma}$ instead of $\sigma$, $\hat{\tau}$ is still $\sqrt{n}$-CAN: 
\begin{align*}
\sqrt{n} (\hat\tau - \tau)/\hat\sigma \stackrel{d}\to \operatorname{N}(0,1).
\end{align*}

\item Finally, the nominal $(1 - \alpha) \times 100\%$ large-sample Wald confidence interval $$\hat{\rm CI}_{\alpha} = \left[ \hat{\tau} - z_{1 - \alpha / 2} \dfrac{\hat{\sigma}}{\sqrt{n}}, \hat{\tau} + z_{1 - \alpha / 2} \dfrac{\hat{\sigma}}{\sqrt{n}} \right]$$ retains the correct coverage probability as $n \rightarrow \infty$.
\end{enumerate}
\end{theorem}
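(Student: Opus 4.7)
The plan is to prove part (a) first---consistency of the variance estimator $\hat{\sigma}^{2}$ for $\sigma^{2}$---after which parts (b) and (c) follow routinely from Theorem~\ref{thm:main} via Slutsky's theorem and the continuous mapping theorem. For (a), I would decompose $\hat{\sigma}^{2} = \hat{\zeta}^{2}_{H} + \hat{\zeta}^{2}_{\mathrm{I},r}(\pi_{n[k]}) + \hat{\zeta}^{2}_{\mathrm{II}}$ and establish consistency of each of the three components for its population counterpart, then assemble the pieces.

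The first component, $\hat{\zeta}^{2}_{H}$, depends only on stratum-level sample means $\bar{Y}_{[k]a}$ and sample stratum proportions $p_{n[k]}$. Under Assumptions~\ref{ap1}--\ref{ap2}, the coupling technique adapted from \citet{Bugni2019} (as codified in Lemma~\ref{lem:BG}) delivers $\bar{Y}_{[k]a} \stackrel{P}{\to} \Ek\{Y(a)\}$ and $p_{n[k]} \stackrel{P}{\to} p_{[k]}$, so $\hat{\zeta}^{2}_{H} \stackrel{P}{\to} \zeta^{2}_{H}$ follows by continuous mapping. The other two components, $\hat{\zeta}^{2}_{\mathrm{I},r}(\pi_{n[k]})$ and $\hat{\zeta}^{2}_{\mathrm{II}}$, are structurally more involved because their $U$-statistic kernels depend on the inverse sample Gram matrix $\HSigmakx^{-1}$. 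For each of these, I would reuse the oracle-to-feasible strategy from the proof of Theorem~\ref{thm:main}: first define an oracle version that replaces $\HSigmakx^{-1}$ by $\Sigmakx^{-1}$, and show the oracle version is consistent by a first-moment computation (tracking the sample-average and second-order $U$-statistic pieces separately) combined with a standard $U$-statistic variance bound, using Assumption~\ref{ap2}(2) to dominate the $Y$-moments uniformly; then bound the feasible--oracle gap using the leave-two-out expansion from Lemma~\ref{lem:repeated Sherman-Morrison}, together with the matrix-concentration estimate $\|\HSigmakx^{-1} - \Sigmakx^{-1}\|_{\mathrm{op}} = o_{P}(1)$, which holds under Assumptions~\ref{ap2}--\ref{ap3} whenever $p = o(n)$. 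Because here only consistency---not the $o_{P}(n^{-1/2})$ rate demanded in Theorem~\ref{thm:main}---is required, a spectral-norm control of the gap is enough and the leave-two-out step is meaningfully lighter than its point-estimation counterpart.

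With (a) in hand, parts (b) and (c) follow immediately. Since $\sigma^{2}$ is bounded away from $0$ (e.g., because $\zeta^{2}_{\mathrm{I},r}(\pi_{[k]}) > 0$ under the mild nondegeneracy of $Y(a)$ already implicit in our setup), the continuous mapping theorem gives $\hat{\sigma}/\sigma \stackrel{P}{\to} 1$. Slutsky's theorem applied to Theorem~\ref{thm:main}(c) then yields $\sqrt{n}(\hat{\tau}-\tau)/\hat{\sigma} \stackrel{d}{\to} \operatorname{N}(0,1)$, which is (b). Part (c) follows from (b) by observing that the coverage probability $\Pr\{\tau \in \hat{\mathrm{CI}}_{\alpha}\}$ equals $\Pr\{|\sqrt{n}(\hat{\tau}-\tau)/\hat{\sigma}| \leq z_{1-\alpha/2}\}$, which converges to $1-\alpha$ by the Portmanteau theorem. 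The main obstacle throughout is the same as in Theorem~\ref{thm:main}: controlling the effect of replacing $\Sigmakx^{-1}$ with $\HSigmakx^{-1}$ inside nontrivial $U$-statistic kernels under CAR-induced dependencies; but as noted above, the consistency-only standard here makes this substantially more tractable than the asymptotic-normality argument for $\hat{\tau}$ itself.
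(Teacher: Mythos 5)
Your architecture matches the paper's proof: decompose $\hat\sigma^2$ into $\hat\zeta^2_H + \hat\zeta^2_{\mathrm{I},r}(\pi_{n[k]}) + \hat\zeta^2_{\mathrm{II}}$, dispose of $\hat\zeta^2_H$ via the coupling/LLN argument of Bugni et al., treat the Gram-matrix-dependent components by the oracle-to-feasible two-step (oracle consistency by Chebyshev with explicit variance computations, then control of the feasible--oracle gap via the leave-out machinery), and finish (b) and (c) with Slutsky and continuous mapping. The one place where your sketch is too optimistic is the claim that ``a spectral-norm control of the gap is enough.'' That is true for the terms that aggregate into a single quadratic form (e.g.\ $\hat\sigma^2_{\mathrm{I},\eta}$ can be written as $v^{\top}(\HSigmakx^{-1}-\Sigmakx^{-1})v$ with $\Vert v\Vert_2 = O_P(1)$, and the diagonal terms $\hat\sigma^2_{\mathrm{II},Y(a)[1]}$ carry a $1/(n_{[k]}-1)$ prefactor that absorbs the crude bound), but it fails for the squared-kernel terms $\hat\sigma^2_{\mathrm{II},Y(a)[1,2]}$ and $\hat\sigma^2_{\mathrm{II},Y(1,0)}$: writing $(X_i^{\top}\HSigmakx^{-1}X_j)^2 - (X_i^{\top}\Sigmakx^{-1}X_j)^2$ as a product of sum and difference and bounding each summand by operator norms gives $O_P\bigl(p^{2}\sqrt{p/n}\bigr)$ per term, hence only $o_P(p^{2}/n)$ after the prefactor, which diverges once $p \gg \sqrt{n}$. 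The paper closes this by computing the first and second moments of each gap term, using the leave-$m$-out decoupling (Lemmas~\ref{lem:repeated Sherman-Morrison}, \ref{lem:variance}, and \ref{lem:repeatSH2}) to integrate out the independent $X_j$'s first --- collapsing $X_1^{\top}M X_2 X_2^{\top}N X_1$ to $X_1^{\top}M\Sigmakx N X_1 = O(p\,\Vert M\Vert_{\op}\Vert N\Vert_{\op})$ rather than $O(p^2\cdots)$ --- and only then applying operator-norm concentration. You do invoke Lemma~\ref{lem:repeated Sherman-Morrison}, so the right ingredient is in your plan; just be aware that what makes this step ``lighter'' than Theorem~\ref{thm:main} is the weaker $o_P(1)$ tolerance, not a simpler bounding technique, and the moment computations for the gap remain essential.
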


\begin{remark}\label{rem: high level2}
Similar to Condition~\eqref{high level}, $\hat\sigma^2$ is still a consistent estimator of $\sigma^2$ if we replace $\HSigmakx$ by any estimator $\HSigmak{}$ of $\Sigmakx$ if: (a) $\HSigmak{}$ is invertible almost surely and (b)
\begin{equation}\label{high level2}
\Ek \{X_i^{\top} (\HSigmak{}^{-1} - \Sigmakx^{-1}) X_i\} = o(p),\ \Ek\{X_i^{\top} (\HSigmak{}^{-1} - \Sigmakx^{-1})X_i\}^2 = O\left(\frac{p^3}{n}\right).
\end{equation}
We verify that $\HSigmakx$ satisfies Condition~\eqref{high level2} when $p=o(n)$ in Lemma~\ref{lem:variance} (see Appendix~\ref{app:matrices}).
\end{remark}

\begin{remark}
\label{rem:var estimation proportional}
When $p \asymp n$, $\hat{\sigma}^2$ is not a consistent estimator of $\sigma^2$ and their difference is of order $p / n$. Based on the empirical results in Section~\ref{sec:experiments}, $\hat{\sigma}^{2}$ may slightly overestimate $\sigma^{2}$, which makes sure that the resulting Wald CI is conservative. But we can still observe that $\hat{\sigma}^{2}$ is generally less than the asymptotic variance of $\hat{\tau}_{\unadj}$, achieving an efficiency gain.
\end{remark}

\iffalse
\begin{remark}\label{rem:variance-loo}
    Another option to set up the variance estimator for $\zeta^2_{\mathrm{I},r}(\pi_{[k]})$ is that we can use a leave-one-out formula refer to Remark \ref{rem:leave-one-out}. Define 
\begin{align*}
    \hat{\beta}^{-i}_{[k]}(1)&= \HSigmakx^{-1}\Big\{ \frac{1}{n_{[k]1}-1} \sum\limits_{j\neq i} A_j X_jY_j\Big\},\\
    \hat{\beta}^{-i}_{[k]}(0)&= \HSigmakx^{-1}\Big\{ \frac{1}{n_{[k]0}-1} \sum\limits_{j\neq i} (1-A_j) X_jY_j\Big\}.
\end{align*}
Let $\hat\beta_{[k]}^{-i} = (1-\pi_{n[k]})\hat{\beta}^{-i}_{[k]}(1) + \pi_{n[k]}\hat{\beta}^{-i}_{[k]}(0)$ and $r_i = Y_i - X_i^{\top}\hat\beta_{[k]}^{-i}$, then define
    \[\tilde{\hat\zeta}^2_{\mathrm{I},r}(\pi_{n[k]}) = \sum\limits_{k=1}^K p_{n[k]}\Bigg[\frac{1}{\pi_{n[k]}}\Big\{\frac{1}{n_{[k]1}}\sum\limits_{i\in[k]}A_ir_i^2\Big\} + \frac{1}{1-\pi_{n[k]}}\Big\{\frac{1}{n_{[k]0}}\sum\limits_{i\in[k]}(1-A_i)r_i^2\Big\}\Bigg].\]
We can show that when $p=o(n)$, $\tilde{\hat\zeta}^2_{\mathrm{I},r}(\pi_{n[k]}) \stackrel{P}\to \zeta^2_{\mathrm{I},r}(\pi_{n[k]})$.
\end{remark}
\fi

\section{Numerical Experiments}
\label{sec:experiments}

\subsection{Simulation studies}
\label{sec:sim}

In this section, we evaluate the performance of our proposed estimators, including both the oracle estimator $\hat\tau_{\oracle}$ and the feasible estimator $\hat{\tau}$, and compare them with the unadjusted estimator $\hat\tau_{\unadj}$ and the OLS estimator $\hat\tau_{\rm OLS}$. We use $\hat\sigma^2$ proposed in Theorem \ref{thm:var} to estimate the variance of $\hat\tau$. For the oracle estimator $\hat\tau_{\oracle}$, the corresponding variance estimator $\hat\sigma_{\oracle}^{2}$ is the same as $\hat{\sigma}^{2}$ except that we use $\Sigmakx$ instead of $\HSigmakx$. For $\hat\tau_{\unadj}$ and $\hat\tau_{\rm OLS}$, we use the variance estimators $\hat\sigma^2_{\rm unadj}$ and $\hat\sigma^2_{\rm OLS}$ proposed in \citet{Bugni2018} and \citet{gu2023regression}, respectively. The simulation is carried out under stratified block randomization, using a categorical covariate $X_{1} \in \{1, 2, 3, 4\}$ as the stratification variable, with corresponding probabilities $\{0.2, 0.2, 0.3, 0.3\}$. Specifically, we consider the following data-generating processes.

\paragraph{Model 1.}
In Model 1, the outcome is continuous, where for $i = 1, \cdots, n$, 
\begin{align*}
    Y_i(0) & = X_{1i}+ 2X_{0i}^{\top} \beta_0 - 0.5X_{0i,4}^2 + \epsilon_{0,i}, \text{ and} \\
    Y_i(1)&= X_{1i}+0.05 X_{0i}^{\top} \Sigma_0 ^{-1} X_{0i} + \epsilon_{1,i}.
\end{align*}
Here, $X_{0i}$ is a random vector of dimension $p_{0}$ and $X_{0i}\sim t_5(0,\Sigma)$, with $p_0=30$ and $\Sigma_{i,j} = 0.1^{|i-j|}$. Each component of $\beta_0$ is generated from the uniform distribution on $[-1,1]$ and then rescaled to $\norm{\beta_0}{2}=1$. Finally, $\epsilon_{0,i}, \epsilon_{1,i} \sim \operatorname{N}(0,0.1^2)$.

\paragraph{Model 2.}
In Model 2, the outcome is binary, where for $i = 1, \cdots, n$,
\begin{align*}
    \Pr(Y_i(0)=1) & = \mathrm{expit}\Big(-1+ X_0^{\top}\beta_0 -2X_{0i,1}^2 \Big), \text{ and} \\
    \Pr(Y_i(1)=1)&= \mathrm{expit}\Big(-3 +X_0^{\top} \beta_1+2X_{0i,2}^2 + 0.5X_{0i,3}^4 \Big).
\end{align*}
Here $\mathrm{expit}(z) = 1/(1+\exp(-z))$. $X_0$ is generated in the same fashion as in Model 1. We take $\beta_{1,j} = 0.5,\ j=1\dots p_0$, $\beta_{0,j} = 1.5,\ j=1,2,\dots,p$.

We set the sample size to $n = 1000$. The allocation ratio between the treatment and control groups is $1:1$, leading to an expected control (treatment) group sample size of $n_{[k_{\min}]0}=100$ ($n_{[k_{\min}]1} = 100$) in the smallest stratum. We denote the covariates used in the analysis stage as $X_{2i}$ for $i = 1, \cdots, n$. When $p \leq p_0$, $X_{2i}$ consists of the first $p$ components of $X_{0i}$; when $p > p_0$, $X_{2i} = [X_{0i}, Z_i]$, where $Z_i$ is a random vector of dimension $(p - p_0)$ with components $Z_{i, q} \stackrel{i.i.d.}{\sim} t_5,\ q=1,2,\dots,p-p_0$. We set $p = \lceil rn\rceil$, where $r \in \{0.02,0.05,0.1,0.2,0.3,0.4,0.5,0.6,0.7\}$. We let $r_0 = p_0/n=0.3$ be the ratio between the dimension of $X_{0}$ and the sample size $n$. 

In the simulation, we draw $R = 2000$ Monte Carlo replicates. The performance of different estimators is evaluated in terms of absolute bias, Monte Carlo standard deviation (SD), the ratio of the Monte Carlo SD to the estimated standard error (SE) ${\rm sd} / {\rm se}$, and the coverage probability (CP) of the associated nominal 95\% Wald confidence interval (CI). We also report the Monte Carlo CP, which is computed using the Monte Carlo SD instead of the estimated SE to evaluate the impact of bias on inference.  

\begin{itemize}
    \item In Figures \ref{fig:mod1}(a) and \ref{fig:mod2}(a), we report the absolute bias of all estimators included in the comparison. The bias of $\hat\tau_{\rm OLS}$ increases as the ratio $r = p / n$ increases, until it reaches $r_0 = 0.3$. This is expected because the additional covariates $Z$ in $X_{2}$ are independent of $Y (a)$ and thus will not further increase the bias, as noted in Remark~\ref{rem:comparison}. All other estimators, $\hat{\tau}_{\unadj}$, $\hat{\tau}_{\oracle}$, and $\hat{\tau}$, have a bias close to zero when varying $r$. As shown in Figures \ref{fig:mod1}(e) and \ref{fig:mod2}(e), when evaluating the impact of bias on CP, only the Wald CI centered at $\hat\tau_{\rm OLS}$ undercovers at around 90\% when $r \ge 0.3$.
    
    \item In Figure \ref{fig:mod1}(b), the Monte Carlo SD decreases with increasing $r$ for all three adjusted estimators $\hat\tau_{\rm OLS}$, $\hat\tau_{\oracle}$, and $\hat\tau$. All three estimators are more efficient than $\hat\tau_{\unadj}$. The efficiency gain of $\hat\tau_{\rm OLS}$ relative to $\hat\tau$ also increases with $r$, consistent with the claim in Proposition~\ref{prop:var} that the additional term $\zeta^2_{\mathrm{II}}$ is of order $O(p/n)$ and $\zeta^2_{\mathrm{II}} \geq 0$. However, in Figure \ref{fig:mod2}(b), $\hat\tau_{\rm OLS}$ has a larger SD compared to $\hat\tau_{\rm unadj}$ when $r$ reaches 0.6, which could cause harm. In contrast, $\hat\tau$ retains efficiency gains compared to $\hat\tau_{\rm unadj}$.
    
    \item In Figures \ref{fig:mod1}(c) and \ref{fig:mod1}(d), we examine the performance of our variance estimators and their corresponding CP. The ratios $\rm sd/se$ for all estimators are within the range $[0.85,1.05]$. As $r$ increases, $\hat\sigma$ shows a larger decrease in $\rm sd/se$ compared to $\hat\sigma_{\oracle}$, which may be caused by the violation of Condition~\eqref{high level} by $\HSigmakx$ when $p = o (n)$ does not hold. In other words, $\hat{\sigma}$ can overestimate the actual SD when $p$ is near $n$, which is not as concerning as long as $\hat{\sigma}$ is still smaller than the SD of $\hat{\tau}_{\unadj}$. The Wald CIs associated with both $\hat\tau_{\oracle}$ and $\hat\tau_{\unadj}$ have CP around 95\%, while $\hat\tau$ is slightly conservative because its SE overestimates SD. For $\hat\tau_{\rm OLS}$, the larger bias and the underestimated SE together lead to the undercoverage of its Wald CI for almost all $r$.
    
    \item In Figures \ref{fig:mod2}(a)--(e), we examine the performance of our estimators when the outcome is binary. The results similarly show that our proposed estimator $\hat\tau$ still has negligible bias, delivers valid inference, and is more efficient than $\hat\tau_{\unadj}$, further corroborating the practical utility of our proposed estimator. 
\end{itemize}

\begin{figure}[ht]
\centering
\includegraphics[width=0.95\textwidth]{./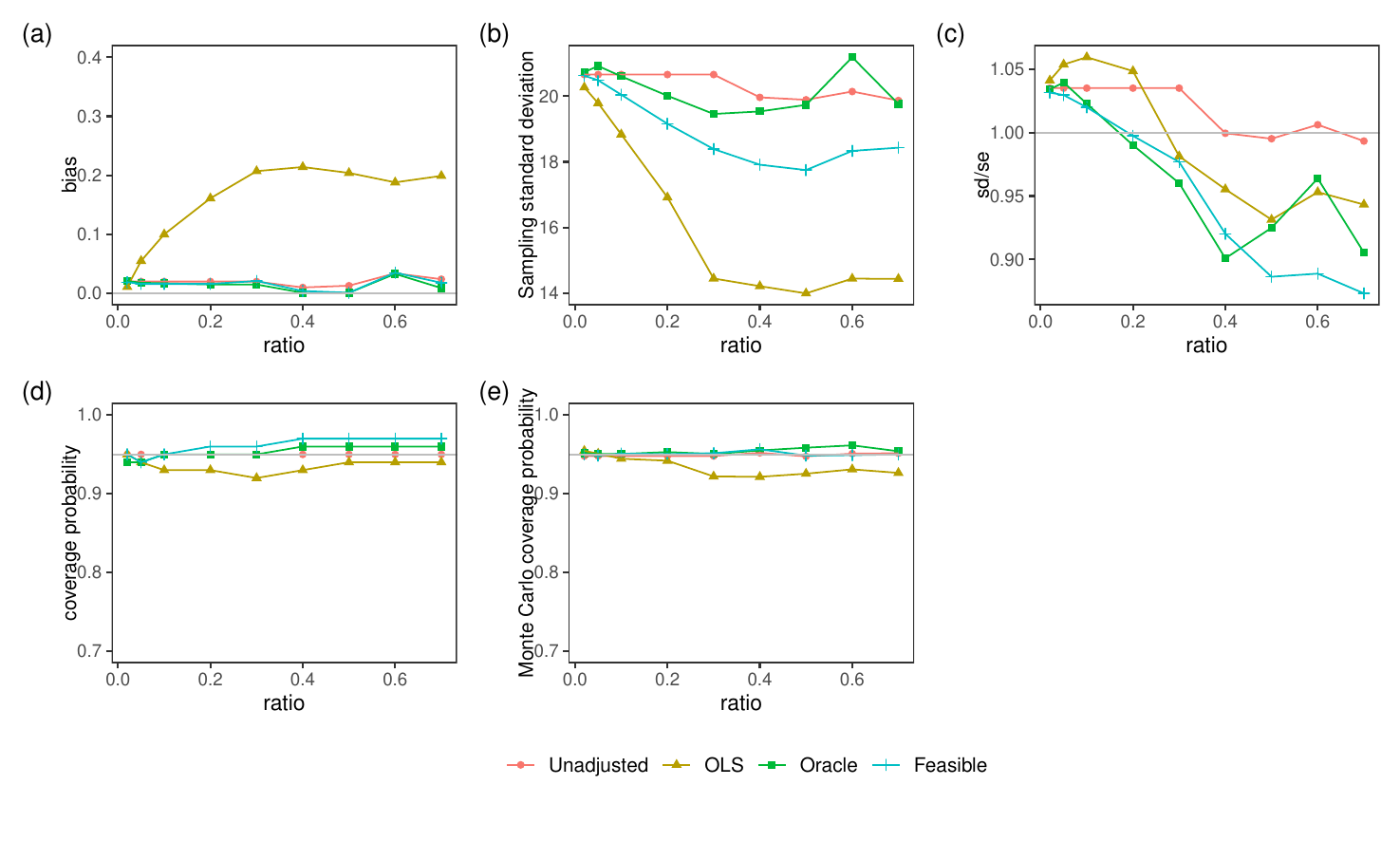}
\caption{Simulation results for different $p/n$ ratios of Model 1. Panels (a)–(e) summarize the finite-sample performance of the four estimators: $\hat\tau_{\rm unadj}$, $\hat\tau_{\rm OLS}$, $\hat\tau_{\oracle}$, and $\hat\tau$.
(a) Absolute bias; (b) Sampling standard deviation; (c) Ratio of empirical standard deviation to the estimated standard error; (d) Coverage probability; and (e) Monte Carlo coverage probability.}
\label{fig:mod1}
\end{figure}

\begin{figure}[ht]
\centering
\includegraphics[width=0.95\textwidth]{./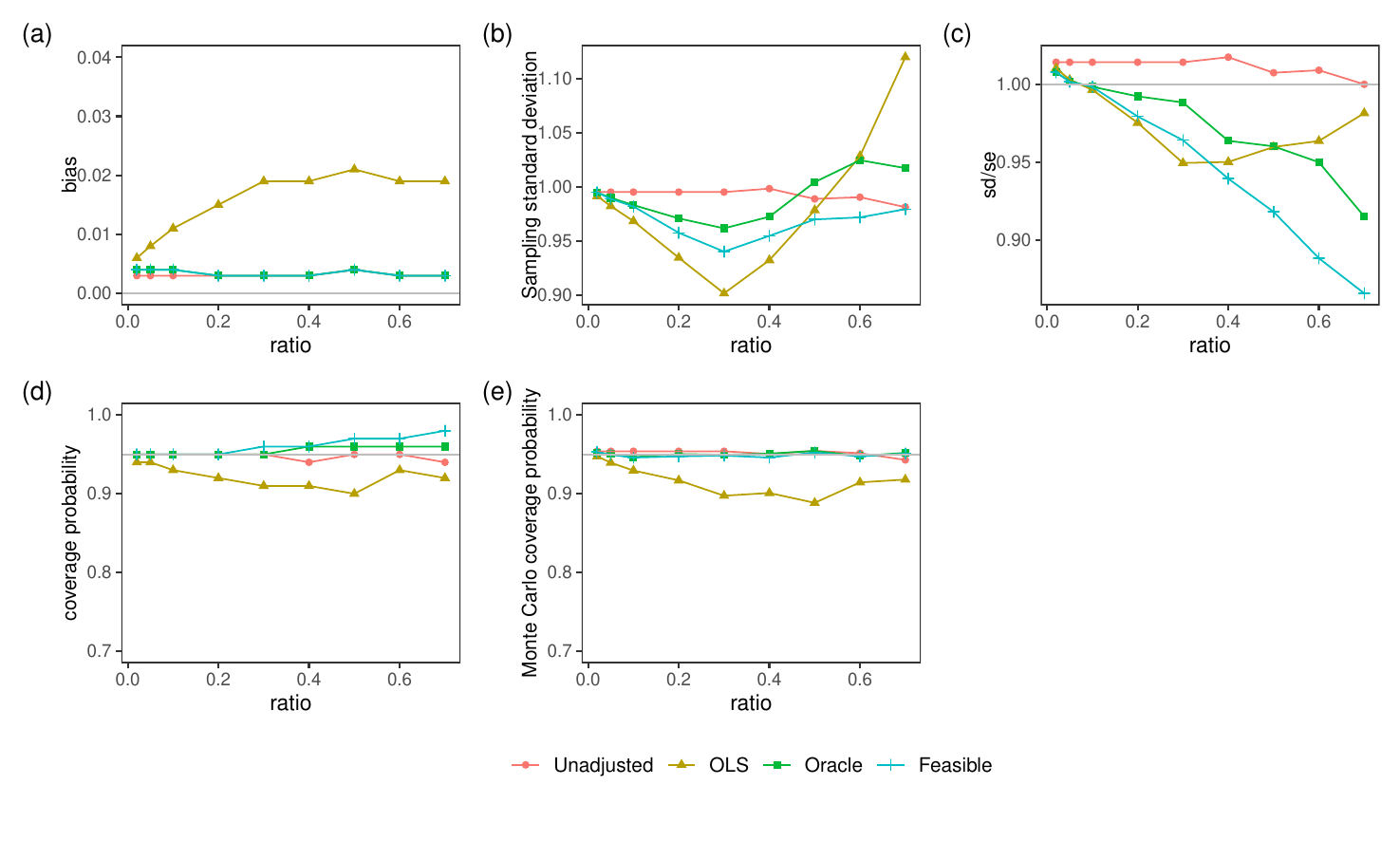}
\caption{Simulation results for different $p/n$ ratios of Model 2. Panels (a)–(e) summarize the finite-sample performance of the four estimators: $\hat\tau_{\rm unadj}$, $\hat\tau_{\rm OLS}$, $\hat\tau_{\oracle}$, and $\hat\tau$.
(a) Absolute bias; (b) Sampling standard deviation; (c) Ratio of empirical standard deviation to the estimated standard error; (d) Coverage probability; and (e) Monte Carlo coverage probability.}
\label{fig:mod2}
\end{figure}

\subsection{A semi-synthetic RCT data analysis}
\label{sec:data}

In this section, we analyze a semi-synthetic data simulated from an RCT of nefazodone and the cognitive behavioral-analysis system of psychotherapy (CBASP) \citep{keller2000comparison}. The purpose of this trial is to compare the effects of nefazodone, CBASP, and their combination on chronic depression. We take the combination as the treatment group indexed as $A = 1$ and nefazodone as the control group indexed as $A = 0$. The outcome of interest $Y$ is FinalHAMD, the final score of the 24-item Hamilton rating scale for depression. To generate the semi-synthetic data, we fit the real data with an additive model using the function \texttt{gam} from the \texttt{R} package \texttt{mgcv}. We stratify the data using GENDER and select seven covariates to fit the model, including AGE, HAMA, HAMA\_SOMATI, HAMD17, HAMD24, Mstatus2, and TreatPD. 
We include linear and quadratic transformations of continuous covariates and their interactions, resulting in $p = 30$ covariates in total. We model quadratic and interaction terms with a cubic spline, while the remaining five continuous covariates enter the model linearly. We then sample 600 units with replacement from the real data as our semi-synthetic data. Next, we implement stratified block randomization, using GENDER as the stratified variable with an allocation ratio of 1:1 and a block size of 6. We compare the performance of three estimators based on 2000 Monte Carlos: $\hat\tau_{\rm unadj}$, $\hat\tau_{\rm OLS}$ and our new estimator $\hat\tau$.

As shown in Table \ref{tab1}, the biases of $\hat\tau_{\unadj}$ and $\hat\tau$ are close to zero, while $\hat\tau_{\rm OLS}$ exhibits non-negligible bias. Meanwhile, $\hat\tau$ is about 30\% more efficient than $\hat\tau_{\unadj}$. The standard error (SE) of $\hat{\tau}$ is close to its Monte Carlo SD, and is still smaller than that of $\hat{\tau}_{\unadj}$. However, $\hat\tau_{\rm OLS}$ suffers from both a larger bias and an underestimated SE, which together lead to an undercoverage (90\%) of the nominal 95\% Wald CI centered at $\hat{\tau}_{\rm OLS}$. Taken together, the semi-synthetic data analysis further consolidates our main conclusions.

\begin{table}[H]
\centering
\caption{Results of the semi-synthetic RCT data (all scaled by a factor of $10^2$)}
\begin{tabular}{lllll}
\Xhline{3\arrayrulewidth}
         & Bias   & SD    & SE    & CP    \\
\Xhline{3\arrayrulewidth}
$\hat\tau_{\rm unadj}$    & 0.4  & 25.2 & 24.7 & 94.8 \\
$\hat\tau_{\rm OLS}$         & -3.2 & 14.9 & 12.6 & 89.3 \\
$\hat\tau$     & 0.3  & 19.8 & 20.7 & 96.0\\
\Xhline{3\arrayrulewidth} 
\end{tabular}
\label{tab1}
\end{table}

\iffalse
\section{Additional Results on the Proposed Method}
\label{sec:misc}

\subsection{Special case when $p$ is proportional to $n$}

\subsection{Semiparametric efficiency bound}
A convincing choice is to adjust for the conditional expectation of $Y_i(a)$ on $X_i$ in stratum $k$, which is 
\[\hat\tau_{\rm eff} =\sum\limits_{k=1}^K p_{n[k]}\Big(\Big\{\bar Y_{[k]1} - \frac{1}{n_{[k]}}\sum\limits_{i\in[k]}(A_i-\pi_{n[k]})b_{[k]}(x_i;1) \Big\}-\Big\{\bar Y_{[k]0} - \frac{1}{n_{[k]}}\sum\limits_{i\in[k]}(\pi_{n[k]}-A_i)b_{[k]}(x_i;0) \Big\}\Big), \]
where $b_{[k]}(x;a) = E\{Y_i(a)|X_i=x,B_i=k\}$ is the regression function of $Y_i(a)$ on $X_i$ in stratum $k$. \citet{liu2022lasso} proposed that replacing the true regression function $b_{[k]}(x_i;a)$ with a suitable estimator $\hat b_{[k]}(x_i;a)$ in the empirical estimator can minimize asymptotic variance. Separately, \citet{rafi2023efficient} analyzed $\hat\tau_{\rm eff}$ from a semiparametric theory perspective, demonstrating that it achieves the semiparametric efficiency bound for average treatment effect estimation under covariate-adaptive randomization. 
\fi

\section{Discussion}
\label{sec:conclusion}

In this paper, we develop a new covariate-adjusted treatment effect estimator $\hat{\tau}$ based on second-order $U$-statistics for CAR under the superpopulation model, together with its consistent variance estimator. We show that $\hat{\tau}$ is $\sqrt{n}$-CAN and is more efficient than $\hat{\tau}_{\unadj}$ in the assumption-lean setting if $p = o (n)$. Empirical results indicate that $\hat{\tau}$ is competitive in terms of various metrics compared to popular benchmarks.

\paragraph{Practical recommendation} 

In view of the results obtained in our paper, we recommend the following practice, which addresses the quote on page~\pageref{quotation} to some extent. When the number $p$ of adjusted covariates is relatively large compared to $n$, it is recommended to use our proposed point and interval estimates $\hat{\tau}$ and $\hat{\sigma}^{2}$ to evaluate the ATE for CAR. The bias of $\hat{\tau}$ is always negligible compared to its standard deviation and the asymptotic variance of $\hat{\tau}$ never exceeds that of $\hat{\tau}_{\unadj}$. This suggestion is especially justified if the covariates being adjusted are prognostic factors of the outcome, as the efficiency gain of $\hat{\tau}$ is guaranteed and the standard OLS estimator $\hat{\tau}_{\rm OLS}$ in general has non-negligible bias. When $p$ is small compared to $n$, $\hat{\tau}$ and $\hat{\sigma}^{2}$ are still safe to use, having finite sample performance comparable to $\hat{\tau}_{\rm OLS}$ and $\hat{\sigma}_{\rm OLS}^{2}$. % We also recommend that all three estimators $\hat{\tau}_{\unadj}$, $\hat{\tau}_{\rm OLS}$, and $\hat{\tau}$, together with their corresponding CIs, be reported in the analysis. % At the minimum, $\hat{\tau}$ and $\hat{\sigma}^{2}$ can be used when the difference between $\hat{\tau}_{\unadj}$ and $\hat{\tau}_{\rm OLS}$ is large.

In summary, we recommend using the proposed estimator $\hat{\tau}$ as the primary analysis, due to its efficiency gains over $\hat{\tau}_{\unadj}$ and its reduced bias relative to $\hat{\tau}_{\rm OLS}$, particularly when the number of covariates $p$ is moderate to large compared to the sample size $n$. At the same time, we suggest reporting $\hat{\tau}_{\unadj}$ and $\hat{\tau}_{\rm OLS}$, together with their corresponding CIs, as part of a sensitivity analysis, to provide a more complete picture of their bias–efficiency performance in the given analysis. Nevertheless, any such analysis strategy should be discussed with the relevant regulatory agencies when the results are intended for regulatory submission, in accordance with current regulatory requirements.

\paragraph{Extensions}

Our theoretical results are stated under the assumption $p = o (n)$, which paves the way to the analysis of the more difficult regime $p \asymp n$. When $p < n / K$, $\hat{\Sigma}_{[k]}$ is invertible with high probability and $\hat{\tau}$ remains nearly unbiased. But, as we mentioned in Remark~\ref{rem: zeta2}, the asymptotic variance of $\hat{\tau}$ may exceed that of $\hat{\tau}_{\unadj}$ or $\hat{\tau}_{\rm OLS}$. In the superpopulation model, random matrix theory is needed for a more delicate analysis in this regime. When $p > n$, the generalized inverse $\HSigmakx^{\dag}$ of the sample Gram matrix estimator $\HSigmakx$ or various shrinkage estimators of $\Sigmakx^{-1}$ can be used \citep{liu2025augmented, ledoit2004well, ding2024eigenvector}, and the resulting estimator based on $U$-statistics is still nearly unbiased. The same challenge lies in the analysis of the asymptotic variance, the characterization of asymptotic normality, and the construction of valid CIs \citep{zheng2025perturbed}, which we leave to future work.

Let $b_{[k]} (x; a) = E_{[k]} (Y_{i} (a) | X = x)$ denote the true outcome regression for the treatment group $a$ in the stratum $k$. When $p$ is fixed, a modified version of $\hat{\tau}$, denoted by $\hat{\tau}_{\rm eff}$, achieves the semiparametric efficiency bound of $\tau$ under CAR. $\hat{\tau}_{\rm eff}$ simply replaces $X$ by a set of basis transformations of $X$, say $\{z_{l}\}_{l = 1}^{k}$ with $k = k (n) \rightarrow \infty$ as $n \rightarrow \infty$. Under standard smoothness assumptions on $b_{[k]}$, with $k$ growing with $n$, the asymptotic variance of $\hat{\tau}_{\rm eff}$ simply replaces $r_{i} (a) = Y_{i} (a) - X_{i}^{\top} \beta_{[k]}$ by $Y_{i} (a) - b_{[k]} (X_{i}; a)$ in $\sigma^{2}_{\rm OLS}$, defined in Proposition~\ref{prop1}, which is the semiparametric efficiency bound of $\tau$ under CAR \citep{rafi2023efficient}.

\paragraph{Other future directions} We conclude our paper by discussing several other future directions. An immediate question is to consider nonlinear outcome working models, probably using generalized linear models with canonical link functions as a starting point \citep{guo2023generalized, cohen2024no}. Finally, another important problem to investigate urgently is the integration of variable selection methods \citep{van2024automated} into our procedure without deteriorating the statistical properties established here. % Finally, it is of interest to extend our results to randomization schemes that balance baseline covariates beyond CAR \citep{morgan2012rerandomization, harshaw2024balancing}.

% \section*{Acknowledgments}

% The authors thank Xinbo Wang for helpful discussions on simulation studies. LL is funded by NSFC Grant No.12471274 and 12090024.

\putbib[reference]
\end{bibunit}

\newpage

\appendix

\doublespacing

\begin{bibunit}[apalike]

% Add the appendix text to the document TOC
\part*{Appendix} % Start the appendix part
\allowdisplaybreaks

Appendix~\ref{app:illustration} describes the details of the numerical experiment used as an illustration in Section~\ref{sec:intuition}. In Appendix~\ref{app:var estimator}, we present the explicit form of our proposed variance estimator $\hat{\sigma}^{2}$ of $\hat{\tau}$ omitted in Section~\ref{sec:inference}, together with the proof of theoretical results related to the asymptotic variances of $\hat{\tau}$ and $\hat{\tau}_{\rm OLS}$ (Proposition~\ref{prop:var}). In Appendix~\ref{app:proof}, we prove the main theoretical results (Theorems~\ref{thm1}--\ref{thm:var}).  Appendix~\ref{app:technical} contains technical lemmas used in the proof (Appendix~\ref{app:proof}). Given a squared matrix $A$, we use $\tr (A)$ to denote its trace functional. For a scalar-valued random variable $X$, we let $\Vert X \Vert_{2}$ denote its $L_{2} (P)$-norm.

\section{Details on the Numerical Experiment in Section~\ref{sec:intuition}}
\label{app:illustration}

In this section, we describe the details of the numerical experiment served as the illustrating example in Section~\ref{sec:intuition}. The data generating process is the same as Model~1 in Section~\ref{sec:sim}, except that $X_0 \sim t_5(0, \Sigma)$ with $\Sigma$ being the identity matrix. Randomization was performed using a stratified block procedure across four strata, with a minimum of $n=100$ subjects per stratum-treatment combination.

\section{Additional Results on the Asymptotic Variance and the Variance Estimator}
\label{app:var estimator}

\subsection{Equivalent expression of \texorpdfstring{$\zeta^2_{\mathrm{I},r}(\pi_{[k]})$}{}}

We show how to derive the equivalent expression of $\zeta^2_{\mathrm{I},r}(\pi_{[k]})$ in Section \ref{sec:inference}. Recall that in Proposition~\ref{prop1}, we have 
\[\zeta^2_{\mathrm{I},r}(\pi_{[k]}) = \sum\limits_{k=1}^K p_{[k]}\Big(\frac{\sigma^2_{[k]r(1)}}{\pi_{[k]}} + \frac{\sigma^2_{[k]r(0)}}{1-\pi_{[k]}}\Big),\]
where $r_i(a) = Y_i(a) - X_i^{\top}\beta_{[k]}$, $\beta_{[k]} = (1-\pi_{n[k]})\beta_{[k]}(1) + \pi_{n[k]}\beta_{[k]}(0)$, $\ProjBeta{a} = \Sigmak{}^{-1} \eta_{[k]}(a)$, and $\eta_{[k]} (a) = \Ek\{XY(a)\}$. Then we can rewrite $\sigma^2_{[k]r(1)}$ as
\begin{align*}
    \sigma^2_{[k]r(1)} &= \var\{Y_i(1) - X_i\beta_{[k]}|B_i=k\} \\
    &=\var\{Y_i(1)|B_i=k\} - 2 \mathrm{Cov} \{Y_i(1), X_i\beta_{[k]}|B_i=k\} + \beta^{\top}_{[k]} \Sigmakx \beta_{[k]}\\
    & = \sigma^2_{[k]Y(1)} - 2 \eta^{\top}_{[k]}(1) \Sigmakx^{-1} \{(1-\pi_{[k]})\eta_{[k]}(1) + \pi_{[k]}\eta_{[k]}(0)\}  \\
    &\quad  +\{(1-\pi_{[k]})\eta_{[k]}(1)+ \pi_{[k]}\eta_{[k]}(0)\}^{\top}\Sigmakx^{-1}  \{(1-\pi_{[k]})\eta_{[k]}(1) + \pi_{[k]}\eta_{[k]}(0)\} \\
    &= \sigma^2_{[k]Y(1)}  + \{(1-\pi_{[k]})^2 - 2(1-\pi_{[k]})\} \eta^{\top}_{{[k]}}(1)\Sigmakx^{-1}\eta_{[k]}(1) + \pi_{[k]}^{2} \eta_{{[k]}}^{\top}(0)\Sigmakx^{-1}\eta_{[k]}(0) \\
    &\quad -2\pi_{[k]}^2 \eta_{{[k]}}^{\top}(1)\Sigmakx^{-1}\eta_{[k]}(0).
    \end{align*}
Similarly, we rewrite $\sigma^2_{[k]r(0)}$ as 
\begin{align*}
    &\sigma^2_{[k]r(0)} =  \sigma^2_{[k]Y(0)}  + \{\pi_{[k]}^2 - 2\pi_{[k]}\} \eta^{\top}_{{[k]}}(0)\Sigmakx^{-1}\eta_{[k]}(0) + (1-\pi_{[k]})^{2} \eta_{{[k]}}^{\top}(1)\Sigmakx^{-1}\eta_{[k]}(1) \\
    &\quad-2(1-\pi_{[k]})^2 \eta_{{[k]}}^{\top}(1)\Sigmakx^{-1}\eta_{[k]}(0).
\end{align*}
Then we have
\begin{align*}
    \frac{\sigma^2_{[k]r(1)}}{\pi_{[k]}} + \frac{\sigma^2_{[k]r(0)}}{(1-\pi_{[k]})} &= \frac{1}{\pi_{[k]}} \sigma^2_{[k]Y(1)} + \frac{1}{1-\pi_{[k]}} \sigma^2_{[k]Y(0)} \\
    &\quad - \Bigg\{\frac{1-\pi_{[k]}}{\pi_{[k]}} \eta^{\top}_{[k]}(1)\Sigmakx^{-1}\eta_{[k]}(1) + \frac{\pi_{[k]}}{1-\pi_{[k]}} \eta^{\top}_{[k]}(0)\Sigmakx^{-1}\eta_{[k]}(0)\Bigg\} - 2\eta^{\top}_{[k]}(1)\Sigmakx^{-1}\eta_{[k]}(0).
\end{align*}

Therefore, $\zeta^2_{\mathrm{I}, r} (\pi_{[k]})$ can be equivalently represented as
\[\zeta^2_{\mathrm{I},r}(\pi_{[k]}) = \sigma^2_Y(\pi_{[k]}) - \sigma^2_{\mathrm{I},\eta}(\pi_{[k]}) - 2 \sigma_{\mathrm{I},\eta(1)\eta(0)},\]
where \[\sigma^2_{Y}(\pi_{[k]}) = \sum\limits_{k=1}^K p_{[k]}\Big[\frac{1}{\pi_{[k]}} \sigma^2_{[k]Y(1)} + \frac{1}{1-\pi_{[k]}} \sigma^2_{[k]Y(0)}\Big],\] 
\[\sigma^2_{\mathrm{I},\eta}(\pi_{[k]}) = \sum\limits_{k=1}^K p_{[k]}\Big[\frac{1-\pi_{[k]}}{\pi_{[k]}} \eta^{\top}_{[k]}(1)\Sigmakx^{-1}\eta_{[k]}(1) + \frac{\pi_{[k]}}{1-\pi_{[k]}} \eta^{\top}_{[k]}(0)\Sigmakx^{-1}\eta_{[k]}(0)\Big],\]
and
\[\sigma_{\mathrm{I},\eta(1)\eta(0)} = \sum\limits_{k=1}^K p_{[k]}\Big\{\eta^{\top}_{[k]}(1)\Sigmakx^{-1}\eta_{[k]}(0)\Big\}.\]

\subsection{Explicit form of the variance estimator}

In this part, we provide the explicit form of our new variance estimator described in Section~\ref{sec:inference}. Recall that our variance estimator $\hat{\sigma}^{2}$ reads as follows:
\[\hat\sigma^2 = \hat\zeta^2_{H} + \hat\zeta^2_{\mathrm{I},r}(\pi_{n[k]}) + \hat\zeta^2_{\mathrm{II}}.\]
We now define each term of $\hat{\sigma}^{2}$.
\begin{itemize}
    \item $\hat\zeta^2_{H}$: 
    \[\hat\zeta^2_{H} = \sum\limits_{k=1}^K p_{n[k]}\Big\{\big(\bar Y_{[k]1} - \sum\limits_{k'=1}^K p_{n[k']}\bar Y_{[k']1}\big) - \big(\bar Y_{[k]0} - \sum\limits_{k'=1}^K p_{n[k']}\bar Y_{[k']0}\big)\Big\}^2.\]
    \item $\hat\zeta^2_{\mathrm{I},r}(\pi_{n[k]})$:
    \[\hat\zeta^2_{\mathrm{I},r}(\pi_{n[k]}) =\hat\sigma^2_Y(\pi_{n[k]}) - \hat\sigma^2_{\mathrm{I},\eta}(\pi_{n[k]}) - 2 \hat\sigma_{\mathrm{I},\eta(1)\eta(0)},\]
    where 
    \[\hat\sigma^2_{Y}(\pi_{n[k]}) = \sum\limits_{k=1}^K p_{n[k]}\Big[\frac{1}{\pi_{n[k]}} \frac{1}{n_{[k]1}}\sum\limits_{i\in[k]} A_i(Y_i-\bar Y_{[k]1})^2 + \frac{1}{1-\pi_{[k]}} \frac{1}{n_{[k]0}}\sum\limits_{i\in[k]} (1-A_i)(Y_i-\bar Y_{[k]0})^2\Big],\] 
\begin{align*}
    \sigma^2_{\mathrm{I},\eta}(\pi_{n[k]}) &= \sum\limits_{k=1}^K p_{n[k]}\Big[\frac{1-\pi_{n[k]}}{\pi_{n[k]}}\frac{1}{n_{[k]1}(n_{[k]1}-1)} \sum_{\substack{1\leq i\neq j\leq n\\i,j\in [k]}} A_iA_jY_iX^{\top}_i\HSigmakx^{-1}X_jY_j \\ 
    &\quad+ \frac{\pi_{[k]}}{1-\pi_{[k]}}\frac{1}{n_{[k]0}(n_{[k]0}-1)} \sum_{\substack{1\leq i\neq j\leq n\\i,j\in [k]}} (1-A_i)(1-A_j)Y_iX^{\top}_i\HSigmakx^{-1}X_jY_j\Big],
\end{align*}
and
\[\sigma^2_{\mathrm{I},\eta(1)\eta(0)} = \sum\limits_{k=1}^K p_{n[k]}\Bigg\{\frac{1}{n_{[k]1}n_{[k]0}} \sum_{\substack{i,j\in [k]}} A_i(1-A_j) Y_iX^{\top}_i\HSigmakx^{-1}X_jY_j\Bigg\}.\]
    \item $\hat\zeta^2_{\mathrm{II}}$: 
    \[\hat\zeta^2_{\mathrm{II}} =  \HzetaSec{Y(1)} +  \HzetaSec{Y(0)}-2\HzetaSec{Y(1,0)}.\] Here, each term can be expressed as
    \begin{align*}
    \HzetaSec{Y(1)} &= \sum\limits_{k=1}^K \frac{p_{n[k]}}{n_{[k]}-1}\Big\{\frac{1-\pi_{[n]k}}{\pi^2_{n[k]}}\hat\sigma^2_{\mathrm{II},Y(1)[1]} + \frac{(1-\pi_{[n]k})^2}{\pi^2_{n[k]}}\hat\sigma^2_{\mathrm{II},Y(1)[1,2]}\Big\},\\
    \HzetaSec{Y(0)} &= \sum\limits_{k=1}^K \frac{p_{n[k]}}{n_{[k]}-1}\Big\{\frac{\pi_{[n]k}}{(1-\pi_{[n]k})^2}\hat\sigma^2_{\mathrm{II},Y(0)[1]} + \frac{\pi_{[n]k}^2}{(1-\pi_{[n]k})^2}\hat\sigma^2_{\mathrm{II},Y(0)[1,2]}\Big\},\\
    \HzetaSec{Y(1,0)} & = \sum\limits_{k=1}^K \frac{p_{n[k]}}{n_{[k]}-1} \hat\sigma^2_{\mathrm{II},Y(1,0)},
\end{align*}
where
   \[\hat\sigma^2_{\mathrm{II},Y(a)[1]} =  \frac{1}{n_{[k]a}} \sum\limits_{i\in [k]} \mathbbm{1} \{A_i=a\} X_i^{\top} \HSigmakx^{-1}X_i Y_i^2,\]
\[\hat\sigma^2_{\mathrm{II},Y(a)[1,2]} = \frac{1}{n_{[k]a}(n_{[k]a}-1)} \sum\limits_{\substack{1\leq i \neq j\leq n \\ i,j\in[k]}} \mathbbm{1} \{A_i=a\} \mathbbm{1} \{A_j=a\} (X_i^{\top} \HSigmakx^{-1}X_j)^2 Y_iY_j,\]
for each group $a \in \{0,1\}$, and
\[\hat\sigma^2_{\mathrm{II},Y(1,0)} = \frac{1}{n_{[k]1}n_{[k]0}} \sum\limits_{\substack{1 \leq i \neq j \leq k \\ i,j\in[k]}} A_i (1-A_j) (X_i^{\top} \HSigmakx^{-1}X_j)^2 Y_iY_j.\]
\end{itemize}

\subsection{Proof of Proposition 2}

In this section, we prove the statements in Proposition~\ref{prop:var}.

\begin{proof}
We first prove the first statement: ${\zeta}^{2}_{\mathrm{II}} \geq 0$. Recall that
$\zeta^2_{\mathrm{II}} = \zetaSec{Y(1)} + \zetaSec{Y(0)} - 2 \zetaSec{Y[1,2]}(1,0,1)$, and 
\begin{align*}
    \zetaSec{Y(1)} &= \zetaSec{Y[1,1]}\left(1,1,\frac{1-\pi_{[k]}}{\pi^2_{[k]}}\right)+ \zetaSec{Y[1,2]}\left(1,1,\frac{(1-\pi_{[k]})^2}{\pi^2_{[k]}}\right),\\
    \zetaSec{Y(0)} &= \zetaSec{Y[1,1]}\left(0,0,\frac{\pi_{[k]}}{(1-\pi_{[k]})^2}\right) + \zetaSec{Y[1,2]}\left(0,0,\frac{\pi^2_{[k]}}{(1-\pi_{[k]})^2}\right).
\end{align*}
By spelling out ${\zeta}^{2}_{\mathrm{II}}$, we have 
\begin{align*}
   {\zeta}^{2}_{\mathrm{II}} &=\sum\limits_{k=1}^K\Big[ \frac{p_{[k]}}{n_{[k]}-1} \frac{(1-\pi_{[k]})}{\pi^2_{[k]}} \Ek\{Y_1(1)X^{\rm T}_1\Sigmakx^{-1} X_1 Y_1(1)\} \\
   & \quad + \frac{p_{[k]}}{n_{[k]}-1} \frac{(1-\pi_{[k]})^2}{\pi^2_{[k]}}  \Ek\{(X_1^{\rm T}\Sigmakx^{-1}X_2)^2 Y_1(1)Y_2(1)\} \\
    &\quad + \frac{p_{[k]}}{n_{[k]}-1} \frac{\pi_{[k]}}{(1-\pi_{[k]})^2} \Ek\{Y_1(0)X^{\rm T}_1\Sigmakx^{-1} X_1 Y_1(0)\} \\
    & \quad + \frac{p_{[k]}}{n_{[k]}-1} \frac{\pi_{[k]}^2}{(1-\pi_{[k]})^2}  \Ek\{(X_i^{\rm T}\Sigmakx^{-1}X_j)^2 Y_j(0)Y_i(0)\}\\
    &\quad - 2\frac{p_{[k]}}{n_{[k]}-1} \Ek\{(X_i^{\rm T}\Sigmakx^{-1}X_j)^2 Y_j(0)Y_i(1)\}\Big].
\end{align*}
Note that for $a\in\{0,1\}$,
\begin{equation}\label{not1}
    E\{X_1^{\top}\Sigmakx^{-1}X_1 Y^2_1(a)\} = E\{(X_1^{\top}\Sigmakx^{-1}X_2)^2 Y^2_1(a)\} =E\{(X_1^{\top}\Sigmakx^{-1}X_2)^2 Y^2_2(a)\}.
\end{equation}
Therefore, we can write ${\zeta}^{2}_{\mathrm{II}}$ as 
\begin{equation}
\label{sos}
\zeta^{2}_{\mathrm{II}} =\sum\limits_{k=1}^K\frac{p_{[k]}}{n_{[k]}-1}\Ek\Big[(X_1^{\top}\Sigmakx^{-1}X_2)^2g(\pi_{[k]},Y_1(0),Y_2(0),Y_1(1),Y_2(1))\Big],
\end{equation}
where
\begin{align*}
    &g(\pi_{[k]},Y_1(0),Y_2(0),Y_1(1),Y_2(1))\\
    &=\frac{(1-\pi_{[k]})}{\pi^2_{[k]}} Y^2_1(1) +  \frac{(1-\pi_{[k]})^2}{\pi^2_{[k]}}Y_1(1)Y_2(1) \\
    &\quad+ \frac{\pi_{[k]}}{(1-\pi_{[k]})^2} Y^2_1(0) + \frac{\pi_{[k]}^2}{(1-\pi_{[k]})^2}  Y_1(0)Y_2(0) - Y_1(0)Y_2(1) - Y_2(0)Y_1(1).
\end{align*}

By \eqref{not1}, as $Y_1$ and $Y_2$ are i.i.d.\ copies, conditional on stratum $B_i=k$, we have
\[\Ek\Big\{(X_1^{\top}\Sigmakx^{-1}X_2)^2\frac{(1-\pi_{[k]})}{\pi^2_{[k]}} Y^2_1(1)\Big\} = \Ek\Big[(X_1^{\top}\Sigmakx^{-1}X_2)^2\Big\{\frac{(1-\pi_{[k]})}{2\pi^2_{[k]}} Y^2_1(1)+\frac{(1-\pi_{[k]})}{2\pi^2_{[k]}} Y^2_2(1)\Big\}\Big],\]
and
\[\Ek\Big\{(X_1^{\top}\Sigmakx^{-1}X_2)^2\frac{\pi_{[k]}}{(1-\pi_{[k]})^2} Y^2_1(0)\Big\} = \Ek\Big[(X_1^{\top}\Sigmakx^{-1}X_2)^2\Big\{\frac{\pi_{[k]}}{2(1-\pi_{[k]})^2} Y^2_1(0) + \frac{\pi_{[k]}}{2(1-\pi_{[k]})^2} Y^2_2(0)\Big\}\Big].\]
In turn, we can replace $g(\pi_{[k]},Y_1(0),Y_2(0),Y_1(1),Y_2(1))$ by the following random variable in \eqref{sos}:
\begin{align*}
    & \frac{(1-\pi_{[k]})}{\pi^2_{[k]}} Y^2_1(1) +  \frac{(1-\pi_{[k]})^2}{\pi^2_{[k]}}Y_1(1)Y_2(1) + \frac{\pi_{[k]}}{(1-\pi_{[k]})^2} Y^2_1(0) \\
    &\quad + \frac{\pi_{[k]}^2}{(1-\pi_{[k]})^2}  Y_1(0)Y_2(0) - Y_1(0)Y_2(1) - Y_2(0)Y_1(1)\\
    &=\frac{1}{2\pi_{[k]}^2(1-\pi_{[k]})^2}\Bigg\{(1-\pi_{[k]})^4\{Y_1^2(1) + 2Y_1(1)Y_2(1) + Y_2(1)^2\} + \pi_{[k]}^4 \{Y_1^2(0) + 2Y_1(0)Y_2(0) + Y_2(0)^2\} \\
    &\quad + \pi_{[k]}(1-\pi_{[k]})\Big[\Big\{(1-\pi_{[k]})^2Y_1(1)^2 - 2\pi_{[k]}(1-\pi_{[k]}) Y_1(1)Y_2(0) + \pi^2_{[k]}Y^2_2(0)\Big\}  \\
    &\quad+ \Big\{(1-\pi_{[k]})^2Y_2(1)^2 - 2\pi_{[k]}(1-\pi_{[k]}) Y_2(1)Y_1(0)+ \pi^2_{[k]}Y^2_1(0)\Big\}\Big]\Bigg\}\\
    &= \frac{1}{2} \left(\frac{1-\pi_{[k]}}{\pi_{[k]}}\right)^2 \{Y_1(1) + Y_2(1)\}^2 + \frac{1}{2} \left(\frac{\pi_{[k]}}{1-\pi_{[k]}}\right)^2 \{Y_1(0) + Y_2(0)\}^2 \\
    &\quad + \frac{1}{\pi_{[k]}(1-\pi_{[k]})}\{(1-\pi_{[k]})Y_1(1) - \pi_{[k]}Y_2(0)\}^2\\
    &\geq 0,
\end{align*}
where the second equality follows from elementary algebra
\[\frac{1-\pi_{[k]}}{\pi^2_{[k]}} = \frac{(1-\pi_{[k]})^2}{\pi^2_{[k]}} + \frac{(1-\pi_{[k]})}{\pi_{[k]}},\]
and 
\[\frac{\pi_{[k]}}{(1-\pi_{[k]})^2} = \frac{\pi_{[k]}^2}{(1-\pi_{[k]})^2} + \frac{\pi_{[k]}}{(1-\pi_{[k]})}.\]
Then it is straightforward to see that $\zeta^{2}_{\rm II}$ can be written as a sum-of-squares, and hence is non-negative:
\begin{align}\label{Full zeta2}
    &{\zeta}^{2}_{\mathrm{II}} = \sum\limits_{k=1}^K \frac{p_{[k]}}{n_{[k]}-1}\Ek \Bigg[\{X_1^{\top}\Sigmakx^{-1}X_2\}^2 \Big\{\frac{1}{2} \left(\frac{1-\pi_{[k]}}{\pi_{[k]}}\right)^2 \{Y_1(1) + Y_2(1)\}^2 \notag \\
    & \quad + \frac{1}{2} \left(\frac{\pi_{[k]}}{1-\pi_{[k]}}\right)^2 \{Y_1(0) + Y_2(0)\}^2 + \frac{1}{\pi_{[k]}(1-\pi_{[k]})}\{(1-\pi_{[k]})Y_1(1) - \pi_{[k]}Y_2(0)\}^2\Big\}\Bigg].
\end{align}
Next, we prove part (b) of Proposition~\ref{prop:var}, which can be proved by showing ${\zeta}^{2}_{\mathrm{II}} = O (p/n)$ and if $p=o(n)$, ${\zeta}^{2}_{\mathrm{II}} = o (1)$. By \eqref{Full zeta2} and Assumption~\ref{ap2}, we have $\Ek [Y_i^2(a)|X_i]\leq M$, and thus 
\[{\zeta}^{2}_{\mathrm{II}}\leq \sum\limits_{k=1}^K \frac{Mp_{[k]}}{n_{[k]}-1}\Ek\{(X^{\top}_1\Sigmakx^{-1}X_2)^2\}, \]
and 
\[\Ek\{(X^{\top}_1\Sigmakx^{-1}X_2)^2\} = \Ek \{X_1^{\top}\Sigmakx^{-1}X_1\} = \tr\{\Ek(X_1X_1^{\top})\Sigmakx^{-1}\}=p.\]
Therefore, by the Markov inequality and the assumption that $K$ is bounded, we have 
\[{\zeta}^{2}_{\mathrm{II}} = O \Big(\frac{p}{n}\Big).\]
When $p=o(n)$, 
\begin{align*}
    \sigma^2 &= \zeta^2_{H} + \zeta^2_{\mathrm{I},r}(\pi_{[k]}) + \zeta^2_{\mathrm{II}} \\
    &= \zeta^2_{H} + \zeta^2_{\mathrm{I},r}(\pi_{[k]}) + o (1) \\
    &= \sigma^2_{\rm OLS} + o (1).
\end{align*}
\end{proof}

\section{Proof of the Main Theoretical Results}
\label{app:proof}

\subsection{Proof of Theorem~\ref{thm1}}
\label{app:thm1}
Recall that 
\[\hat\tau_{\oracle} = \sum\limits_{k=1}^K p_{n[k]}\Big[\Big\{\bar Y_{[k]1} - \mathbf{U}_{n_{[k]},2}(\Sigmakx;1) \Big\}-\Big\{\bar Y_{[k]0} - \mathbf{U}_{n_{[k]},2}(\Sigmakx;0) \Big\}\Big].\]
Define 
\[\hat\tau_{\oracle,a} = \sum\limits_{k=1}^K p_{n[k]}\Big\{\bar Y_{[k]a} - \mathbf{U}_{n_{[k]},2}(\Sigmakx;a) \Big\},\ a=0,1,\]
and 
\[\tau_{a} = E\{Y_i(a)\} = \sum\limits_{k=1}^K p_{[k]}\Ek\{Y_i(a)\},\ a=0,1.\]
We first show that $\hat\tau_{\oracle, 1}$ (by symmetry $\hat{\tau}_{\oracle, 0}$) is $\sqrt{n}$-CAN for $\tau_{1}$ (by symmetry $\tau_{0}$) and then derive the asymptotic variance for $\hat\tau_{\oracle}$. First, we have the following decomposition:
\begin{align*}
    \sqrt{n}&(\hat\tau_{\oracle,1}-\tau_1)\\ 
    &= 
\sqrt{n}\left[\sum\limits_{k=1}^K p_{n[k]}\Big\{\bar Y_{[k]1} -  \mathbf{U}_{n_{[k]},2}(\Sigmakx;1)\Big\} -\sum\limits_{k=1}^K p_{[k]} \Ek\{Y_i(1)\} \right] \\
&= \underbrace{\frac{1}{\sqrt n}\sum\limits_{k\in\mathcal{K}}\left\{\frac{1}{\pi_{[k]}}\sum\limits_{i=1}^n A_iI(B_i=k) \tilde Y_i(1)\right\}}_{\eqqcolon M_1} \\
& \quad - \underbrace{\frac{1}{\sqrt n}\sum\limits_{k\in\mathcal{K}}\left\{\frac{1}{n_{[k]}-1}\frac{1}{\pi^2_{[k]}} \sum_{\substack{1\leq i\neq j\leq n \\ i,j \in [k]}} (A_i -\pi_{n[k]})A_jX_i^{\top} \Sigmakx^{-1}X_j Y_j(1)\right\}}_{\eqqcolon M_2} \\
&\quad + \underbrace{\sqrt{n}\sum\limits_{k=1}^K (p_{n[k]} - p_{[k]}) \Ek\{Y_i(1)\}}_{\eqqcolon M_3} + o_P(1)\\
&\eqqcolon M_1 - M_2 + M_3.
\end{align*}

Here we use the coupling method in \citet{Bugni2018}. We order the data $(Y_i(0),Y_i(1),X_i)$ first by stratum $k$ and then by treatment $a$ in each stratum (placing units in the treatment group first and then the control group), leaving the distributions of $\bar Y_{[k]1}$ and $\mathbf{U}_{n_{[k]},2}(\Sigmakx;1)$ invariant. Then independently for each stratum $k$ and independently of $(A^{(n)},B^{(n)})$, let $(Y^k_i(0),Y^k_i(1),X^k_i)$ be i.i.d.\ draws with marginal distribution equal to the distribution $$(Y_i(0),Y_i(1),X_i)|B_i=k.$$ 
Define $N(k) = \sum_{i=1}^n I(B_i<k)$. Then we have 
 \[\sum\limits_{i=1}^n A_iI(B_i=k) \tilde Y_i(1) \stackrel{d}=\sum\limits_{i=N(k)+1}^{N(k)+n_{[k]1}} \tilde Y_i^k,\]
and
\[\sum_{\substack{1\leq i\neq j\leq n \\ i,j \in [k]}} (A_i -\pi_{n[k]})A_jX_i^\top\Sigmakx^{-1}X_j Y_j(1) \stackrel{d}= \sum_{\substack{N(k)+1\leq i\neq j\leq N(k+1)}} (A_i -\pi_{n[k]})A_jX^{k\top}_i\Sigmakx^{-1}X^k_j Y^k_j(1).\]
Then in stratum $k$, we define 
\[M_1^k \coloneqq \frac{1}{\sqrt n} \frac{1}{\pi_{[k]}} \sum\limits_{i=N(k)+1}^{N(k)+n_{[k]1}} \tilde Y_i^k\]
and 
    \[M^k_2 \coloneqq \frac{1}{\sqrt n(n_{[k]}-1)}\frac{1}{\pi^2_{[k]}} \sum_{\substack{N(k)+1\leq i\neq j\leq N(k+1)}} (A_i -\pi_{n[k]})A_jX^{k\top}_i\Sigmakx^{-1}X^k_j Y^k_j(1).\]
By this construction, we have 
\[M^k_1 \stackrel{d}=M_1|B_i=k,\]
and
\[M^k_2 \stackrel{d}=M_2|B_i=k,\]
conditional on $(A^{(n)},B^{(n)})$.
We introduce the short-hand notation $\Eka \{\cdot\}= E\{\cdot|B_i=k,A_i=1\}$.
Therefore, we have
\[\Eka\{M_1^k\} = 0,\]
and
\begin{align*}
        \Eka\{M^k_2\} &= \frac{1}{\sqrt n(n_{[k]}-1)}\frac{1}{\pi^2_{[k]}}\sum\limits_{i=N(k)+1}^{N(k+1)}\sum\limits_{\substack{j=N(k)+1 \\j\neq i}}^{N(k+1)} (A_i - \pi_{n[k]})A_j \Ek(X^{k\top}) \Sigma^{-1}_{[k]} \Ek(X^k Y^k(1))\\
        &= \frac{n_{[k]1}}{\sqrt n(n_{[k]}-1)}\frac{1}{\pi^2_{[k]}} (1-\pi_{n[k]})\Ek(X^{k\top}) \Sigma^{-1}_{[k]} \Ek(X^k Y^k(1)).
\end{align*}
By Assumption \ref{ap1}, as $\pi_{n[k]}\stackrel{p}\to \pi_{[k]}$, we have 
\[\Eka\{M^k_2\}\stackrel{P} \to 0.\]
Also, since the number of strata $K$ is finite, we have
\[\sqrt{n}\sum\limits_{k=1}^K (p_{n[k]} - p_{[k]}) = O_P (1).\]
Thus we have
\[E\{\sqrt{n}(\hat\tau_{\oracle,1}-\tau_1)\}\to0.\]

Next, we will show that $\sqrt{n}(\hat\tau_{\oracle,1}-\tau_1)$ is asymptotic normal and compute its asymptotic variance. By the coupling method, $M_1^k$ and $M_2^k$ are independent of $B^{(n)}$, which implies that $M_3$ (whose randomness depends only on $B_i$) is independent of both $M_1^k$ and $M_2^k$. By CLT, we have
\begin{equation}\label{AN1}
    M_3 \stackrel{d}\to \operatorname{N}(0,\sigma^2_{HY(1)}),
\end{equation}
where
\[\sigma^2_{HY(1)} = \sum\limits_{k=1}^K p_{[k]}[\Ek\{Y_i(1)\}]^2.\]
Therefore, we only need to show the asymptotic normality of $M_1^k$ and $M_2^k$. Regarding $M^k_1$, we can directly invoke Lemma B.2 in \citet{Bugni2018} to obtain:
\begin{equation}\label{AN2}
  M_1^k \stackrel{d}\to \operatorname{N}(0,\sigma^2_{Y(1)}),  
\end{equation}
where 
\[\sigma^2_{Y(1)} =\sum\limits_{k=1}^K\frac{p_{[k]}}{\pi_{[k]}} \sigma^2_{[k]Y(1)}.\]
In terms of $M_2^k$, by Lemma~\ref{lem:BG} proved later in Appendix~\ref{app:U-stats}, we have 
\begin{equation}\label{AN3}
    M_2^k \stackrel{d}\to \operatorname{N}(0,\zetaSec{Y(1)} + \sigma^2_{\mathrm{I},X(1)}),
\end{equation}
where
\[\sigma^2_{\mathrm{I},X(1)} = \frac{p_{[k]}(1-\pi_{[k]})}{\pi_{[k]}} \ProjBeta{1}^{\top}\Sigmakx \ProjBeta{1}.\]
By the same token, for any $a_1,a_2 \in \mathbb{R}$, we have $a_1M_1^k + a_2M_2^k$ is asymptotic normal, leading to the joint asymptotic normality of $(M_1^k,M_2^k)^{\top}$. We also calculate the covariance between  $M_1^k $ and $ M_2^k$, which is
\begin{align*}
\mathrm{Cov}&\Big\{\frac{1}{\sqrt{n}}\frac{1}{\pi_{[k]}} \sum\limits_{i= N(k)+1}^{N(k+1)} A_i Y_i(1), \frac{1}{\sqrt{n}(n_{[k]}-1)}\frac{1}{\pi^2_{[k]}} \sum\limits_{i= N(k)+1}^{N(k+1)} \sum_{\substack{j=N(k)+1 \\ j\neq i}}^{N(k+1)} (A_i - \pi_{[k]})(X_i^{k\top} \Sigmakx^{-1}X_j^k A_jY^k_j(1)\Big\} \\
    &= \frac{1}{n(n_{[k]}-1)}\frac{1}{\pi^3_{[k]}}\sum\limits_{i= N(k)+1}^{N(k+1)} \sum_{\substack{j=N(k)+1 \\ j\neq i}}^{N(k+1)} A_i (A_i - \pi_{[k]})A_j E \left\{ Y^k_i(1)  (X_i^{k\top} \Sigmakx^{-1} X_j^k Y^k_j(1)\right\}\\
    &= \frac{1}{n(n_{[k]}-1)}\frac{1}{\pi^3_{[k]}}\sum\limits_{i= N(k)+1}^{N(k+1)} \sum_{\substack{j=N(k)+1 \\ j\neq i}}^{N(k+1)} (1-\pi_{[k]})A^k_iA^k_j \ProjBeta{1}^{\top}\Sigmakx \ProjBeta{1} \\
    & = \frac{p_{[k]}(1-\pi_{[k]})}{\pi_{[k]}} \ProjBeta{1}^{\top}\Sigmakx \ProjBeta{1} + o (1) = \sigma^{2}_{\rm I, X (1)} + o (1).
\end{align*}

Combining the above covariance formula with \eqref{AN1}, \eqref{AN2}, and \eqref{AN3}, we have
\[ \sqrt{n}(\hat\tau_{\oracle,1}-\tau_1)\stackrel{d}\to \operatorname{N}(0,\sigma^2_{HY(1)}+\sigma^2_{Y(1)} + \zetaSec{Y(1)} - \sigma^2_{\mathrm{I},X(1)}).\]
By symmetry, we also have:
\[ \sqrt{n}(\hat\tau_{\oracle,0}-\tau_1)\stackrel{d}\to \operatorname{N}(0,\sigma^2_{HY(0)}+\sigma^2_{Y(0)} + \zetaSec{Y(0)} - \sigma^2_{\mathrm{I},X(0)}).\]
Finally, it is not difficult to conclude that
\[\sqrt{n}(\hat\tau_{\oracle} - \tau) \stackrel{d}\to \operatorname{N}(0,\zeta^2_{H} + \zeta^2_{\mathrm{I},r}(\pi_{[k]}) + \zeta^2_{\mathrm{II}}),\]
where $\zeta^2_{\mathrm{II}} = \zetaSec{Y(1)} + \zetaSec{Y(0)} - 2 \zetaSec{Y[1,2]}(1,0,1)$, 
\begin{align*}
    \zetaSec{Y(1)} &= \zetaSec{Y[1,1]}\left(1,1,\frac{1-\pi_{[k]}}{\pi^2_{[k]}}\right)+ \zetaSec{Y[1,2]}\left(1,1,\frac{(1-\pi_{[k]})^2}{\pi^2_{[k]}}\right), \text{ and} \\
    \zetaSec{Y(0)} &= \zetaSec{Y[1,1]}\left(0,0,\frac{\pi_{[k]}}{(1-\pi_{[k]})^2}\right) + \zetaSec{Y[1,2]}\left(0,0,\frac{\pi^2_{[k]}}{(1-\pi_{[k]})^2}\right).
\end{align*}

\subsection{Proof of Theorem~\ref{thm:main}}\label{app:thm:main}
Recall that 
\[\hat\tau = \sum\limits_{k=1}^K p_{n[k]}\Big[\Big\{\bar Y_{[k]1} - \mathbf{U}_{n_{[k]},2}(\HSigmakx^{-1};1) \Big\}-\Big\{\bar Y_{[k]0} - \mathbf{U}_{n_{[k]},2}(\HSigmakx^{-1};0) \Big\}\Big].\]
We can decompose $\sqrt{n} (\hat{\tau} - \tau)$ as follows:
\[\sqrt{n}(\hat\tau - \tau) = \sqrt{n}(\hat\tau - \hat\tau_{\oracle}) + \sqrt{n}(\hat\tau_{\oracle} - \tau).\]
By Theorem~\ref{thm1}, we have already obtained the following:
\[\sqrt{n}(\hat\tau_{\oracle} - \tau)/\sigma \stackrel{d}\to \operatorname{N}(0,1).\]
We are left to show that 
\[\sqrt{n}(\hat\tau - \hat\tau_{\oracle}) = o_P(1),\]
where the left-hand side can be written as 
\[\sqrt{n}(\hat\tau - \hat\tau_{\oracle}) =\frac{1}{\sqrt{n}}\sum\limits_{k=1}^K n_{[k]}\Big\{\mathbf{U}_{n_{[k]},2}(\HSigmakx^{-1} - \Sigmakx^{-1};1) - \mathbf{U}_{n_{[k]},2}(\HSigmakx^{-1} - \Sigmakx^{-1};0)\Big\}.\]
Therefore, we need to show that
\begin{equation}
\label{desired}
\frac{1}{\sqrt{n}}\sum\limits_{k=1}^K n_{[k]}\Big\{\mathbf{U}_{n_{[k]},2}(\HSigmakx^{-1} - \Sigmakx^{-1};1) - \mathbf{U}_{n_{[k]},2}(\HSigmakx^{-1} - \Sigmakx^{-1};0)\Big\} = o_P(1),
\end{equation}
which essentially reduces to show that particular functionals of $\hat{\Sigma}_{[k]}^{-1} - \Sigma_{[k]}^{-1}$ vanish to zero in probability at a sufficiently fast speed.

We first show that in the treatment group, 
\begin{equation}
\label{desired treatment}
\frac{1}{\sqrt{n}}\sum\limits_{k=1}^K n_{[k]}\mathbf{U}_{n_{[k]},2}(\HSigmakx^{-1} - \Sigmakx^{-1};1) = o_P(1).
\end{equation}
Similar to the proof of Theorem \ref{thm1}, we use the coupling method. Let $(Y^k_i(0),Y^k_i(1),X_i^k)$ be i.i.d.\ and have the same marginal distribution as $(Y_i(0),Y_i(1),X_i)|B_i=k$, independent of $(A^{(n)},B^{(n)})$. We also define
\[\mathbf{U}^k_{n_{[k]},2}(\Sigmakx^{-1};1) = \frac{1}{\B{n}(\B{n}-1)}\frac{1}{\pi^2_{n[k]}} \sum\limits_{N(k)+1\leq i\neq j\leq N(k+1)} (A_i - \pi_{n[k]})X_{i}^{k\top} \Sigmakx^{-1} A_j X^k_j Y^k_j, \]
and
\[\mathrm{Res}_k(X^k,Y^k(1))=\frac{1}{\sqrt{n}}n_{[k]}\mathbf{U}^k_{n_{[k]},2}(\HSigmakx^{-1} - \Sigmakx^{-1};1).\]
To facilitate the proof, we spell out the above notation $\mathrm{Res}_k(X^k,Y^k(1))$ as follows:
\begin{align*}
    &\mathrm{Res}_k(X^k,Y^k(1))\\
    &= \frac{1}{\sqrt n(n_{[k]}-1)}\frac{1}{\pi^2_{n[k]}} \sum\limits_{N(k)+1\leq i\neq j\leq N(k+1)} (A_i -\pi_{n[k]})X_{i}^{k \top} (\HSigmakx^{-1} - \Sigmakx^{-1}) X^k_j A_jY^k_j.
\end{align*}

\begin{remark}
It should be noted that we also use $X_i^k$ to construct a second-moment sample matrix that has the same distribution as $\HSigmakx$ conditional on $B_i=k$. But we decide to abuse the notation and use $\HSigmakx$ without adding the superscript to avoid clutter.
\end{remark}

As $\mathbf{U}^k_{n_{[k]},2}(\HSigmak{XX}^{-1} - \Sigmakx^{-1};1)$ is independent of $B_i$ and the total number of strata $K$ is fixed, we will show that 
\[\frac{1}{\sqrt n(n_{[k]}-1)}\frac{1}{\pi^2_{n[k]}} \sum\limits_{N(k)+1\leq i\neq j\leq N(k+1)} (A_i -\pi_{n[k]})X_{i}^{k \top} (\HSigmakx^{-1} - \Sigmakx^{-1}) X^k_j A_jY^k_j = o_P(1),\]
which can help to obtain the desired conclusion. We first compute the mean of $\mathrm{Res}_k(X^k,Y^k(1))$, which could justify part (a) in Theorem \ref{thm:main}. As introduced in the Proof of Theorem 1, $\Eka \{\cdot\}= E\{\cdot|B_i=k,A_i=1\}$, and then we have
\begin{align*}
    &\Eka\{\Res(X^k,Y^k(1))\}\\
    &=\frac{1}{\sqrt n(n_{[k]}-1)}\Eka\Big\{\frac{1}{\pi^2_{n[k]}} \sum\limits_{N(k)+1\leq i\neq j\leq N(k+1)} (A_i -\pi_{n[k]})X^{\top}_i (\HSigmakx^{-1} - \Sigmakx^{-1}) X_j A_jY_j\Big\}\\
    &= \frac{1}{\sqrt n(n_{[k]}-1)} \sum\limits_{N(k)+1\leq i\neq j\leq N(k+1)} \frac{1}{\pi^2_{n[k]}}(A_i -\pi_{n[k]})A_j\   \Eka\Big\{X_{i}^{k \top} (\HSigmakx^{-1} - \Sigmakx^{-1}) X^k_jY^k_j(1)\Big\}\\
    &=  \frac{1}{\sqrt n(n_{[k]}-1)} \sum\limits_{N(k)+1\leq i\neq j\leq N(k+1)} \frac{1}{\pi^2_{n[k]}}(A_i -\pi_{n[k]})A_j \ \Ek\Big\{X_{1}^{k \top} (\HSigmakx^{-1} - \Sigmakx^{-1}) X^k_2Y^k_2(1)\Big\}.
\end{align*}
We directly compute and analyze the order of $ \sum\limits_{N(k)+1\leq i\neq j\leq N(k+1)} (A_i -\pi_{n[k]})A_j$: 
\begin{align*}
    &\frac{1}{n_{[k]}-1}\sum\limits_{N(k)+1\leq i\neq j\leq N(k+1)} (A_i -\pi_{n[k]})A_j\\ 
    &= \frac{1}{n_{[k]}-1}\sum\limits_{N(k)+1\leq i\neq j\leq N(k+1)} (A_iA_j - \pi_{n[k]}A_j)\\
    &= \frac{1}{n_{[k]}-1}\{n_{[k]a}(n_{[k]a}-1) - \frac{n_{[k]a}}{n_{[k]}}(n_{[k]}-1)n_{[k]a}\} \\
    &= (\pi_{n[k]}-1)\frac{n_{[k]a}}{n_{[k]}-1} = O_P (1).
\end{align*}

We then bound $\Ek\Big\{X_1^{k\top} (\HSigmakx^{-1} - \Sigmakx^{-1}) X^k_2Y^k_2(1)\Big\}$. We first write 
\begin{align*}
\HSigmakx = (\HSigmakx)_{-i} + \frac{1}{n} X_iX_i^{\top}, \text{ where } (\HSigmakx)_{-i} = \frac{1}{n} \sum\limits_{l\neq i} X_l X_l^{\top}
\end{align*}
denotes the leave-$i$-out version of the sample Gram matrix estimator. By Lemma \ref{lem:Sherman-Morrison} presented later in Appendix~\ref{app:matrices}, we have 
\[\HSigmakx^{-1} = (\HSigmakx)_{-i}^{-1} -(\HSigmakx)_{-i}^{-1} X_i(n + X_i^{\top}(\HSigmakx)_{-i}^{-1}X_i)^{-1}X_i^{\top}(\HSigmakx)_{-i}^{-1}.\]

Then we have
\begin{align*}
    &\Ek\Big\{X_{1}^{k \top} (\HSigmakx^{-1} - \Sigmakx^{-1}) X^k_2 Y_2^k\Big\} \\
    &= \Ek\Big[\Big\{X_{1}^{k \top} ((\HSigmakx)_{-1}^{-1} - \Sigmakx^{-1})X^k_2 - \frac{X_{1}^{k \top}(\HSigmakx)_{-1}^{-1} X^k_1}{n+X_{1}^{k \top}(\HSigmakx)_{-1}^{-1} X^k_1} X_1^{k\top}(\HSigmakx)_{-1}^{-1} X_2^k\Big\}Y_2^k\Big]\\
    &= \Ek\Big[\Big\{X_{1}^{k \top} ((\HSigmakx)_{-(1,2)}^{-1} - \Sigmakx^{-1})X^k_2 - \frac{X_{2}^{k \top}(\HSigmakx)_{-(1,2)}^{-1} X^k_2}{n+X_{2}^{k \top}(\HSigmakx)_{-(1,2)}^{-1} X^k_2} X_1^{k\top}(\HSigmakx)_{-(1,2)}^{-1} X_2^k\Big\}Y_2^k\Big] \\
    &\quad - \Ek\Bigg[\Big\{\frac{X_{1}^{k \top}(\HSigmakx)_{-1}^{-1} X^k_1}{n+X_{1}^{k \top}(\HSigmakx)_{-1}^{-1} X^k_1} \Big( 1-\frac{X_{2}^{k \top}(\HSigmakx)_{-(1,2)}^{-1} X^k_2}{n+X_{2}^{k \top}(\HSigmakx)_{-(1,2)}^{-1} X^k_2} \Big) \Big\} X_1^{k\top}(\HSigmakx)_{-(1,2)}^{-1} X_2^k Y_2^k \Bigg].
\end{align*}
The first line of the above display is $o (1)$, because according to Lemma~\ref{lem: sample covariance1} in Appendix~\ref{app:matrices},
\begin{align*}
\Ek \big[ X^{k\top}_1 \{(\HSigmakx)^{-1}_{-(1,2)} - \Sigmakx^{-1}\} X_2^k Y_2^k \big] = o(1)
\end{align*}
and
\begin{align*}
\Ek \Big\{X_{1}^{k\top} (\HSigmakx)_{-(1,2)}^{-1} X_{2}^{k} \frac{X_{2}^{k \top}(\HSigmakx)^{-1}_{-(1,2)} X^k_2}{n+X_{2}^{k \top}(\HSigmakx)_{-(1,2)}^{-1} X^k_2} Y_{2} \Big\} = o(1).
\end{align*}

We are left to analyze the term 
\begin{equation}
\label{part 2}
\Ek\Bigg[\Big\{\frac{X_{1}^{k \top}(\HSigmakx)_{-1}^{-1} X^k_1}{n+X_{1}^{k \top}(\HSigmakx)_{-1}^{-1} X^k_1} \Big( 1-\frac{X_{2}^{k \top}(\HSigmakx)_{-(1,2)}^{-1} X^k_2}{n+X_{2}^{k \top}(\HSigmakx)_{-(1,2)}^{-1} X^k_2} \Big) \Big\} X_1^{k\top}(\HSigmakx)_{-(1,2)}^{-1} X_2^k Y_2^k \Bigg].
\end{equation}
By Lemma~\ref{lem:Sherman-Morrison} (Sherman-Morrison formula), we have 
\begin{equation}
\label{S-M-twice}
X_{1}^{k \top}(\HSigmakx)_{-1}^{-1} X^k_1 = \Big( 1-\frac{X_{2}^{k \top}(\HSigmakx)_{-(1,2)}^{-1} X^k_2}{n+X_{2}^{k \top}(\HSigmakx)_{-(1,2)}^{-1} X^k_2} \Big) X_{1}^{k \top}(\HSigmakx)_{-(1,2)}^{-1} X^k_1,
\end{equation}
which leads to
\begin{equation}
\label{S-M-twice-consequence}
\frac{X_{1}^{k \top}(\HSigmakx)_{-1}^{-1} X^k_1}{n+X_{1}^{k \top}(\HSigmakx)_{-1}^{-1} X^k_1} = \frac{X_{1}^{k \top}(\HSigmakx)_{-(1,2)}^{-1} X^k_1}{n + X_{1}^{k \top}(\HSigmakx)_{-(1,2)}^{-1} X^k_1 + X_{2}^{k \top}(\HSigmakx)_{-(1,2)}^{-1} X^k_2}.
\end{equation}

Applying \eqref{S-M-twice-consequence}, we can rewrite \eqref{part 2} as
\begin{align*}
\eqref{part 2} = \Ek \left[ \frac{(X_{1}^{k \top} (\HSigmakx)_{-(1,2)}^{-1} X^k_1 / n) X_{1}^{k\top} (\HSigmakx)_{-(1,2)}^{-1} \frac{1}{1 + X_{2}^{k \top}(\HSigmakx)_{-(1,2)}^{-1} X^k_2 / n} X_{2}^{k} Y_{2}^{k}}{1 + \{X_{1}^{k \top} (\HSigmakx)_{-(1,2)}^{-1} X^k_1 + X_{2}^{k \top}(\HSigmakx)_{-(1,2)}^{-1} X^k_2\} / n} \right].
\end{align*}
Under Assumption~\ref{ap2}, we have $X_{j}^{k \top} (\HSigmakx)_{-(1, 2)}^{-1} X^k_j / n \lesssim p / n$ for $j = 1, 2$, then we have 
\begin{align*}
& \ \frac{1}{1 + \{X_{1}^{k \top}(\HSigmakx)_{-(1, 2)}^{-1} X^k_1 + X_{2}^{k \top}(\HSigmakx)_{-(1, 2)}^{-1} X^k_2\} / n} \\
= & \ \sum\limits_{j=1}^m (-1)^j \Big(\frac{X_{1}^{k \top}(\HSigmakx)_{-(1, 2)}^{-1} X^k_1 + X_{2}^{k \top}(\HSigmakx)_{-(1, 2)}^{-1} X^k_2}{n}\Big)^{j} + O \left( \left( \frac{p}{n} \right)^{m + 1} \right).
\end{align*}
Because $p = o(n)$, we can always find some possibly diminishing sequence $\gamma = \gamma_{n} \in (0, 1)$ such that $p/n = n^{-\gamma}$. Then for large enough $m$, except the summation, the remaining term of \eqref{part 2} is smaller than 
\[\Ek\Bigg\{\frac{(X_{1}^{k \top} (\HSigmakx)_{-(1,2)}^{-1} X^k_1 / n) X_{1}^{k\top} (\HSigmakx)_{-(1,2)}^{-1} \frac{1}{1 + X_{2}^{k \top}(\HSigmakx)_{-(1,2)}^{-1} X^k_2 / n} X_{2}^{k} Y_{2}^{k}}{n^{\gamma(m+1)}}\Bigg\}=o(1)\]
which is obtained by Lemma~\ref{lem: sample covariance1}. Therefore, we have
\begin{equation}
\label{key decomposition}
\begin{split}
    &\quad\eqref{part 2} \\
    &= \Ek \left[ \frac{(X_{1}^{k \top} (\HSigmakx)_{-(1,2)}^{-1} X^k_1 / n) X_{1}^{k\top} (\HSigmakx)_{-(1,2)}^{-1} \frac{1}{1 + X_{2}^{k \top}(\HSigmakx)_{-(1,2)}^{-1} X^k_2 / n} X_{2}^{k} Y_{2}^{k}}{1 + \{X_{1}^{k \top} (\HSigmakx)_{-(1,2)}^{-1} X^k_1 + X_{2}^{k \top}(\HSigmakx)_{-(1,2)}^{-1} X^k_2\} / n} \right] \\
    &=\Ek \Bigg[\sum\limits_{j=1}^m (-1)^j \Big(\frac{X_{1}^{k \top}(\HSigmakx)_{-(1, 2)}^{-1} X^k_1 + X_{2}^{k \top}(\HSigmakx)_{-(1, 2)}^{-1} X^k_2}{n}\Big)^{j} \frac{X_{1}^{k \top} (\HSigmakx)_{-(1,2)}^{-1} X^k_1}{n} \\
    & \quad \quad \quad \quad X_{1}^{k\top} (\HSigmakx)_{-(1,2)}^{-1}X_{2}^{k} \frac{Y_2^k}{1 + X_{2}^{k \top}(\HSigmakx)_{-(1,2)}^{-1} X^k_2 / n}\Bigg] + o(1).
    \end{split}
\end{equation}
We can write 
\begin{align*}
& \Big(\frac{X_{1}^{k \top}(\HSigmakx)_{-(1, 2)}^{-1} X^k_1 + X_{2}^{k \top}(\HSigmakx)_{-(1, 2)}^{-1} X^k_2}{n}\Big)^{j} \\
& = \sum\limits_{j_1=0}^j \binom{j}{j_{1}} \Big( \frac{X_{1}^{k \top}(\HSigmakx)_{-(1, 2)}^{-1} X^k_1}{n} \Big)^{j_1} \Big( \frac{X_{2}^{k \top}(\HSigmakx)_{-(1, 2)}^{-1} X^k_2}{n} \Big)^{j - j_1}.
\end{align*}
We let
\[g_{j_1} (X_1) = \Big( \frac{X_{1}^{k \top}(\HSigmakx)_{-(1, 2)}^{-1} X^k_1}{n} \Big)^{j_1},\]
and 
\[f_{j_2} (X_2) = \frac{(X_{2}^{k \top}(\HSigmakx)_{-(1,2)}^{-1} X^k_2 / n)^{j_2}}{1 + X_{2}^{k \top}(\HSigmakx)_{-(1,2)}^{-1} X^k_2 / n}.\]
Then we can control the main term of \eqref{key decomposition} by applying Lemma~\ref{lem: sample covariance1_plus} as follows
\begin{align*}
    \eqref{part 2} 
    &\lesssim \sum\limits_{j=1}^m (-1)^j \sum\limits_{j_1=0}^j \binom{j}{j_1} \Ek\left[ g_{j_1 + 1}(X_1)X_{1}^{k\top} (\HSigmakx)_{-(1,2)}^{-1}X_{2}^{k} f_{j - j_1}(X_2)\right].
\end{align*}
For each $j_1$, by Lemma \ref{lem: sample covariance1}, we have 
\begin{equation}
\label{key projection}
\begin{split}
&\Ek\left[ g_{j_1 + 1}(X_1)X_{1}^{k\top} (\HSigmakx)_{-(1,2)}^{-1}X_{2}^{k} f_{j - j_1}(X_2)\right] \\
&\lesssim \Vert g_{j_1+1} \Vert_{2} \Vert f_{j-j_1} \Vert_{2} \\
&= \frac{1}{n^{j+1}}\Big\Vert (X_{1}^{k \top}(\HSigmakx)_{-(1, 2)}^{-1} X^k_1)^{j_1+1} \Big\Vert_{2} \Bigg\Vert \frac{(X_{2}^{k \top}(\HSigmakx)_{-(1, 2)}^{-1} X^k_2)^{j-j_1}}{1+(X_{2}^{k \top}(\HSigmakx)_{-(1, 2)}^{-1} X^k_2)/n} \Bigg\Vert_{2}\\
&\leq \frac{1}{n^{j+1}}\Big\Vert (X_{1}^{k \top}(\HSigmakx)_{-(1, 2)}^{-1} X^k_1)^{j_1+1} \Big\Vert_{2} \Big\Vert (X_{2}^{k \top}(\HSigmakx)_{-(1, 2)}^{-1} X^k_2)^{j-j_1} \Big\Vert_{2} \\
& \lesssim \Big(\frac{p}{n}\Big)^{j+1}.
\end{split}
\end{equation}
Then we have 
\begin{equation}
\label{key expansion}
\begin{split}
\eqref{part 2} & \lesssim \sum\limits_{j=1}^m (-1)^j \sum\limits_{j_1=0}^j \binom{j}{j_1} \left( \frac{p}{n} \right)^{j + 1} \\
& = \sum\limits_{j=1}^m \left( \frac{p}{n} \right)^{j + 1}(-1)^j 2^{j} \\
& = \frac{p}{n}\sum\limits_{j=1}^m \left( -\frac{2p}{n} \right)^{j}\\
& =\frac{p}{n} \frac{1-(-2p/n)^m}{1+2p/n}\\
&= o(1).
\end{split}
\end{equation}
As $p=o(n)$, we can choose $p$ such that $2p/n<1$, and take $m=\lceil \log n \rceil$ to obtain our result.

Combining the above results with Assumption~\ref{ap2} assuming that $E[Y_i(1)|X_i]$ is bounded by constant $M$, we can conclude that:
\begin{equation*}
    \Ek\Big\{X_{1}^{k \top} (\HSigmakx^{-1} - \Sigmakx^{-1}) X^k_2Y^k_2(1)\Big\}=o(1),
\end{equation*}
and 
\begin{align}
    &\Eka\{\Res(X^k,Y^k(1))\} \notag\\
    &=  \frac{1}{\sqrt n(n_{[k]}-1)} \sum\limits_{N(k)+1\leq i\neq j\leq N(k+1)} \frac{1}{\pi^2_{n[k]}}(A_i -\pi_{n[k]})A_j \ \Ek\Big\{X_{1}^{k \top} (\HSigmakx^{-1} - \Sigmakx^{-1}) X^k_2Y^k_2(1)\Big\}\notag\\
    &= O(1)\cdot o(1) = o(1). \label{eq:cinp_expectation}
\end{align}
Next, we will show that for any $k, a$,
\[\var\{\Res(X^k,Y^k(1))|B_i=k,A_i=a\} = o (1).\]
In particular, by elementary algebra, we rewrite $\Eka\{\Res(X^k,Y^k(1))\}$ as follows:
{\small
\begin{align*}
&\Eka\{\Res^2(X^k,Y^k(1))\}\\
&= \frac{1}{n(n_{[k]}-1)^2}\frac{1}{\pi^4_{n[k]}}\Eka\Bigg( \sum_{\substack{1\leq i\neq j\leq n \\ i,j \in [k]}} (A_i -\pi_{n[k]})X_{i}^{k \top} (\HSigmakx^{-1} - \Sigmakx^{-1}) X^k_j A_jY^k_j\Bigg)^2\\
&=\frac{1}{n(n_{[k]}-1)^2}\frac{1}{\pi^4_{n[k]}}\Bigg[\sum_{\substack{1\leq i\neq j\leq n \\ i,j \in [k]}} (A_i -\pi_{n[k]})^2A_j\Eka\{X_{i}^{k \top} (\HSigmakx^{-1} - \Sigmakx^{-1}) X^k_j Y^k_j\}^2 \\
&\  + \sum_{\substack{1\leq i\neq j\leq n \\ i,j \in [k]}} (A_i -\pi_{n[k]})(A_j -\pi_{n[k]})A_iA_j\Eka\{X_{i}^{k \top} (\HSigmakx^{-1} - \Sigmakx^{-1}) X^k_j\}^2Y^k_i Y^k_j\\
&\  + \sum_{\substack{1\leq i\neq j\neq l\leq n \\ i,j,l \in [k]}} (A_i -\pi_{n[k]})^2A_jA_l\Eka\{X_{i}^{k \top} (\HSigmakx^{-1} - \Sigmakx^{-1}) X^k_j\}\{X_{i}^{k \top} (\HSigmakx^{-1} - \Sigmakx^{-1}) X^k_l\}Y^k_j Y^k_l \\
&\  + \sum_{\substack{1\leq i\neq j\neq l\leq n \\ i,j,l \in [k]}} (A_i -\pi_{n[k]})(A_j -\pi_{n[k]})A_jA_l\Eka\{X_{i}^{k \top} (\HSigmakx^{-1} - \Sigmakx^{-1}) X^k_j\}\{X_j^{k\top} (\HSigmakx^{-1} - \Sigmakx^{-1}) X^k_l\}Y^k_j Y^k_l\\
&\  + \sum_{\substack{1\leq i\neq j\neq l\leq n \\ i,j,l \in [k]}} (A_i -\pi_{n[k]})(A_l -\pi_{n[k]})A_jA_i\Eka\{X_{i}^{k \top} (\HSigmakx^{-1} - \Sigmakx^{-1}) X^k_j\}\{(X^k_l)^{\top} (\HSigmakx^{-1} - \Sigmakx^{-1}) X^k_i\}Y^k_j Y^k_i\\
&\  +  \sum_{\substack{1\leq i\neq j\neq l\leq n \\ i,j,l \in [k]}} (A_i -\pi_{n[k]})(A_l -\pi_{n[k]})A_j^2\Eka\{X_{i}^{k \top} (\HSigmakx^{-1} - \Sigmakx^{-1}) X^k_j\}\{(X^k_l)^{\top} (\HSigmakx^{-1} - \Sigmakx^{-1}) X^k_j\}(Y^k_j)^2\\
&\  + \sum_{\substack{1\leq i\neq j\neq l\neq m\leq n \\ i,j,l,m \in [k]}} (A_i -\pi_{n[k]})(A_l -\pi_{n[k]})A_iA_m\Eka\{X_{i}^{k \top} (\HSigmakx^{-1} - \Sigmakx^{-1}) X^k_j\}\{(X^k_l)^{\top} (\HSigmakx^{-1} - \Sigmakx^{-1}) X^k_m\}Y^k_j Y^k_m\Bigg]\\
&= \frac{1}{n(n_{[k]}-1)^2}\frac{1}{\pi^4_{n[k]}}\Bigg[\sum_{\substack{1\leq i\neq j\leq n \\ i,j \in [k]}} (A_i -\pi_{n[k]})^2A_j\Ek\{X_{1}^{k \top} (\HSigmakx^{-1} - \Sigmakx^{-1}) X^k_2 Y^k_2\}^2 \\
&\  + \sum_{\substack{1\leq i\neq j\leq n \\ i,j \in [k]}} (A_i -\pi_{n[k]})(A_j -\pi_{n[k]})A_iA_j\Ek\{X_{1}^{k \top} (\HSigmakx^{-1} - \Sigmakx^{-1}) X^k_2\}^2Y^k_1 Y^k_2\\
&\  + \sum_{\substack{1\leq i\neq j\neq l\leq n \\ i,j,l \in [k]}} (A_i -\pi_{n[k]})^2A_jA_l\Ek\{X_{1}^{k \top} (\HSigmakx^{-1} - \Sigmakx^{-1}) X^k_2\}\{X_{1}^{k \top} (\HSigmakx^{-1} - \Sigmakx^{-1}) X^k_3\}Y^k_2 Y^k_3 \\
&\  + \sum_{\substack{1\leq i\neq j\neq l\leq n \\ i,j,l \in [k]}} (A_i -\pi_{n[k]})(A_j -\pi_{n[k]})A_jA_l\Ek\{X_{1}^{k \top} (\HSigmakx^{-1} - \Sigmakx^{-1}) X^k_2\}\{X_{2}^{k \top} (\HSigmakx^{-1} - \Sigmakx^{-1}) X^k_3\}Y^k_2 Y^k_3\\
&\  + \sum_{\substack{1\leq i\neq j\neq l\leq n \\ i,j,l \in [k]}} (A_i -\pi_{n[k]})(A_l -\pi_{n[k]})A_jA_i\Ek\{X_{1}^{k \top} (\HSigmakx^{-1} - \Sigmakx^{-1}) X^k_2\}\{X^{k\top}_3 (\HSigmakx^{-1} - \Sigmakx^{-1}) X^k_1\}Y^k_2 Y^k_1\\
&\  +  \sum_{\substack{1\leq i\neq j\neq l\leq n \\ i,j,l \in [k]}} (A_i -\pi_{n[k]})(A_l -\pi_{n[k]})A_j^2\Ek\{X_{1}^{k \top} (\HSigmakx^{-1} - \Sigmakx^{-1}) X^k_2\}\{X^{k\top}_3 (\HSigmakx^{-1} - \Sigmakx^{-1}) X^k_2\}(Y^k_2)^2\\
&\  + \sum_{\substack{1\leq i\neq j\neq l\neq m\leq n \\ i,j,l,m \in [k]}} (A_i -\pi_{n[k]})(A_l -\pi_{n[k]})A_lA_m\Ek\{X_{1}^{k \top} (\HSigmakx^{-1} - \Sigmakx^{-1}) X^k_2\}\{X^{k\top}_3 (\HSigmakx^{-1} - \Sigmakx^{-1}) X^k_4\}Y^k_2 Y^k_4\Bigg].
\end{align*}} 

\noindent By Assumption \ref{ap1}, the conditional mean of $Y_i(a)|X_i$ is bounded up to the fourth moment. From the last equality of the above display, except for the prefactors involving $A_i$'s, by \Holder's inequality, we need to analyze the following terms: 
\begin{align*}
    &\Ek\{X_{1}^{k \top} (\HSigmakx^{-1} - \Sigmakx^{-1}) X^k_2\}^2,\\
    &\Ek\{X_{1}^{k \top} (\HSigmakx^{-1} - \Sigmakx^{-1}) X^k_2\}\{X_{1}^{k \top} (\HSigmakx^{-1} - \Sigmakx^{-1}) X^k_3\}, \text{ and } \\
    &\Ek\{X_{1}^{k \top} (\HSigmakx^{-1} - \Sigmakx^{-1}) X^k_2\}\{X^{k\top}_3 (\HSigmakx^{-1} - \Sigmakx^{-1}) X^k_4\}.
\end{align*}
By Lemmas \ref{lem:leave-out-matrix} and \ref{lem:repeated Sherman-Morrison} in Appendix~\ref{app:matrices}, we have 
\begin{align}
    &\Ek\{X_{1}^{k \top} (\HSigmakx^{-1} - \Sigmakx^{-1}) X^k_2\}^2=o(n), \label{eq:matrix1}\\
    &\Ek\{X_{1}^{k \top} (\HSigmakx^{-1} - \Sigmakx^{-1}) X^k_2\}\{X_{1}^{k \top} (\HSigmakx^{-1} - \Sigmakx^{-1}) X^k_3\}=o(1),\label{eq:matrix2}\\
    &\Ek\{X_{1}^{k \top} (\HSigmakx^{-1} - \Sigmakx^{-1}) X^k_2\}\{X^{k\top}_3 (\HSigmakx^{-1} - \Sigmakx^{-1}) X^k_4\}=o(1)\label{eq:matrix3}.
\end{align}
We then analyze the prefactors involving $A_i$'s. In fact, these prefactors have at most the same rate as the total number of permutations, that is, 
\begin{equation*}
\left.
\begin{aligned}
  &\sum_{\substack{1\leq i\neq j\leq n \\ i,j \in [k]}} (A_i -\pi_{n[k]})^2A_j\\
  &\sum_{\substack{1\leq i\neq j\leq n \\ i,j \in [k]}} (A_i -\pi_{n[k]})(A_j -\pi_{n[k]})A_iA_j
\end{aligned}
\right\} = O(n^2),
\end{equation*}
and
\begin{equation*}
\left.
\begin{aligned}
    &\sum_{\substack{1\leq i\neq j\neq l\leq n \\ i,j,l \in [k]}} (A_i -\pi_{n[k]})^2A_jA_l\\ &\sum_{\substack{1\leq i\neq j\neq l\leq n \\ i,j,l \in [k]}} (A_i -\pi_{n[k]})(A_j -\pi_{n[k]})A_jA_l\\
    &\sum_{\substack{1\leq i\neq j\neq l\leq n \\ i,j,l \in [k]}} (A_i -\pi_{n[k]})(A_l -\pi_{n[k]})A_jA_i\\
&\sum_{\substack{1\leq i\neq j\neq l\leq n \\ i,j,l \in [k]}} (A_i -\pi_{n[k]})(A_l -\pi_{n[k]})A_j^2 
\end{aligned}
\right\}= O(n^3).
\end{equation*}

For the last term, we can control it by the following argument: 
\begin{align*}
    &\sum_{\substack{1\leq i\neq j\neq l\neq m\leq n \\ i,j,l,m \in [k]}} (A_i -\pi_{n[k]})(A_l -\pi_{n[k]})A_lA_m\\
    & = -n^3_{[k]a}(1-\pi) + O(n^2_{[k]a}) \\
    & = O(n^3).
\end{align*}
Combining \eqref{eq:matrix1}--\eqref{eq:matrix3} with the above results, we can then obtain the following result:
\begin{equation}
\label{eq:key}
\Eka\{\Res^2(X^k,Y^k(1))\} = o(1).
\end{equation}
The desired result \eqref{desired treatment} for the treatment group follows by Markov's inequality, \eqref{eq:cinp_expectation} and \eqref{eq:key}:
\begin{equation*}
\frac{1}{\sqrt{n}}\sum\limits_{k=1}^K n_{[k]}\mathbf{U}_{n_{[k]},2}(\HSigmak{XX}^{-1} - \Sigmakx^{-1};1) = o_P(1).
\end{equation*}
Finally, the desired result \eqref{desired} holds by symmetry, which concludes the proof.

\subsection{Proof of Theorem~\ref{thm:var}}
\label{app:thm:var}

The consistency of $\hat{\zeta}^2_H$ was established in \citet{Bugni2018}, that is 
\[\hat{\zeta}^2_H \stackrel{P}\to \zeta^2_H.\]
Therefore, we are left to show the following:
\[\hat\zeta^2_{\mathrm{I},r}(\pi_{n[k]})\stackrel{P}\to  \zeta^2_{\mathrm{I},r}(\pi_{[k]}) \text{ and } \hat\zeta^2_{\mathrm{II}} \stackrel{P}\to  \zeta^2_{\mathrm{II}}.\]
Recall that we have 
\[\hat\zeta^2_{\mathrm{I},r}(\pi_{[k]}) = \hat\sigma^2_Y(\pi_{[k]}) - \hat\sigma^2_{\mathrm{I},\eta}(\pi_{[k]}) - 2 \hat\sigma_{\mathrm{I},\eta(1)\eta(0)},\]
and
\[\hat\zeta^2_{\mathrm{II}} =  \HzetaSec{Y(1)} +  \HzetaSec{Y(0)}-2\HzetaSec{Y(1,0)}.\]
 
Similar to the proof of Theorem \ref{thm1}, we construct $(Y^k_i(0),Y^k_i(1),X^k_i)$ for $i \in \{1, \cdots, n\}$, which are i.i.d.\ with marginal distribution equal to the distribution of $(Y_i(0),Y_i(1),X_i)|B_i=k$ independently for each stratum $k$ and independently of $(A^{(n)},B^{(n)})$. Note that for all $k \in \mathcal{K}$, $(Y^k_i(0),Y^k_i(1),X^k_i)$ are mutually independent due to coupling and a bounded number of strata; therefore, we only need to show that in each stratum $k$, our variance estimator converges to the asymptotic variance in probability. In the proof, we only show the results for the treatment group, and the results for the control group follow by symmetry.

In $\hat\zeta^2_{\mathrm{I},r}(\pi_{[k]})$, the terms that involve treatment groups are 
\begin{equation}\label{varterm1}
    \sum\limits_{k=1}^K p_{n[k]}\Big[\frac{1}{\pi_{n[k]}} \frac{1}{n_{[k]1}}\sum\limits_{i\in[k]} A_i(Y_i-\bar Y_{[k]1})^2\Big]
\end{equation} and
\begin{equation}\label{varterm2}
    \sum\limits_{k=1}^K p_{n[k]}\Big[\frac{1-\pi_{n[k]}}{\pi_{n[k]}}\frac{1}{n_{[k]1}(n_{[k]1}-1)} \sum_{\substack{1\leq i\neq j\leq n\\i,j\in [k]}} A_iA_jY_iX^{\top}_i\HSigmakx^{-1}X_jY_j\Big].
\end{equation}
Specifically, \eqref{varterm1} is used to estimate
\[\sum\limits_{k=1}^K\frac{p_{[k]}}{\pi_{[k]}} \sigma^2_{[k]Y(1)}\]
and \eqref{varterm2} is used to estimate
\[\sum\limits_{k=1}^K\frac{p_{[k]}(1-\pi_{[k]})}{\pi_{[k]}} \eta^{\top}_{[k]}(1)\Sigmakx^{-1}\eta_{[k]}(1).\]
As shown in \citet{Bugni2018}, 
\[\sum\limits_{k=1}^K p_{n[k]}\Big[\frac{1}{\pi_{n[k]}} \frac{1}{n_{[k]1}}\sum\limits_{i\in[k]} A_i(Y_i-\bar Y_{[k]1})^2\Big]\stackrel{P}\to \sum\limits_{k=1}^K p_{[k]}\frac{\sigma^2_{[k]Y(1)}}{\pi_{[k]}}.\]
In addition, by Assumption~\ref{ap1}, $\pi_{n[k]} \stackrel{P}\to \pi_{[k]}$, and thus we are left to show that, in each stratum $k$, we have 
\begin{equation}\label{var_target1}
    \frac{1}{n_{[k]}(n_{[k]}-1)} \sum\limits_{\substack{1\leq i \neq j\leq n \\ i,j\in[k]}} A_i A_j Y_i X_i^{\top} \HSigmakx^{-1} X_j Y_j - \eta^{\top}_{[k]}{(1)} \Sigmakx^{-1}\eta^{\top}_{[k]}{(1)} = o_P(1).
\end{equation}
In $\hat\zeta^2_{\mathrm{II}}$, the term involves treatment group is 
\[\HzetaSec{Y(1)} = \sum\limits_{k=1}^K \frac{p_{n[k]}}{n_{[k]}-1}\Big\{\frac{1-\pi_{[n]k}}{\pi^2_{n[k]}}\hat\sigma^2_{\mathrm{II},Y(1)[1]} + \frac{(1-\pi_{[n]k})^2}{\pi^2_{n[k]}}\hat\sigma^2_{\mathrm{II},Y(1)[1,2]}\Big\}\]
and
\[\hat\sigma^2_{\mathrm{II},Y(1)[1]} =  \frac{1}{n_{[k]1}} \sum\limits_{i\in [k]} A_i X_i^{\top} \HSigmakx^{-1}X_i Y_i^2,\]
\[\hat\sigma^2_{\mathrm{II},Y(1)[1,2]} = \frac{1}{n_{[k]1}(n_{[k]1}-1)} \sum\limits_{\substack{1\leq i \neq j\leq n \\ i,j\in[k]}} A_iA_j (X_i^{\top} \HSigmakx^{-1}X_j)^2 Y_iY_j.\]
They are constructed to estimate, respectively, $\sigma^2_{\mathrm{II},Y(1)[1]} = \Ek\{Y_1(1)X_1^{\top}\Sigmakx^{-1}X_1^kY_1^k(1)\}$ and $\sigma^2_{\mathrm{II},Y(1)[1,2]} = \Ek\{(X_1^{\top}\Sigmakx^{-1}X_2)^2Y_1(1)Y_2(1)\}$.

We will show that our variance estimator in each stratum $k$ converges in probability in the following two steps: (1) We show that, when we replace $\HSigmakx$ by $\Sigmakx$ in the variance estimator, this oracle variance estimator converges to the asymptotic variance in probability; (2) The difference between our proposed variance estimator and the oracle variance estimator is $o_P (1)$. 

In the first step, we will show that 
\begin{align*}
    &\frac{1}{n_{[k]}(n_{[k]}-1)} \sum\limits_{\substack{1\leq i \neq j\leq n \\ i,j\in[k]}} A_i A_j Y_i X_i^{\top} \Sigmakx^{-1} X_j Y_j - \eta^{\top}_{[k]}{(1)} \Sigmakx^{-1}\eta^{\top}_{[k]}{(1)} = o_P(1),\\
    &\frac{1}{n_{[k]}-1}\Big\{ \frac{1}{n_{[k]}} \sum\limits_{i\in [k]} A_i X_i^{\top} \Sigmakx^{-1}X_i Y_i^2\Big\}-\frac{1}{n_{[k]}-1} E\{Y_1(1)X_1^{\top}\Sigmakx^{-1}X_1^kY_1^k(1)\}=o_P(1),\\
    &\frac{1}{n_{[k]}-1}\Big\{ \frac{1}{n_{[k]}(n_{[k]}-1)} \sum\limits_{\substack{1\leq i \neq j\leq n \\ i,j\in[k]}} A_i A_j (X_i^{\top} \Sigmakx^{-1}X_j)^2 Y_iY_j\Big\} \\
    & - \frac{1}{n_{[k]}-1} E\{(X_1^{\top}\Sigmakx^{-1}X_2)^2Y_1(1)Y_2(1)\}=o_P(1).
\end{align*}
As they involve the true $\Sigmakx$, they are unbiased. To show that they are consistent estimators, we only need to show that the variance of each term converges to zero. First, conditional on $B=k$ and $A=1$, we have
\begin{align*}
    &\var\Big\{\frac{1}{n_{[k]}(n_{[k]}-1)} \sum\limits_{\substack{1\leq i \neq j\leq n \\ i,j\in[k]}} A_i A_j Y_i X_i^{\top} \Sigmakx^{-1} X_j Y_j\Big|B_i=k, A_i=1\} \\
    &= \frac{2}{n^2_{[k]}} \var \Big\{Y_1(1) X_1^{\top} \Sigmakx^{-1} X_2 Y_2(1)|B_i=k, A_i=1\Big\} \\
    &\quad + \frac{4}{n_{[k]}} \mathrm{Cov}\Big\{Y_1(1) X_1^{\top} \Sigmakx^{-1} X_2 Y_2(1),Y_1(1) X_1^{\top} \Sigmakx^{-1} X_3 Y_3(1) \Big|B_i=k, A_i=1\}\\
    &\quad + \mathrm{Cov}\Big\{Y_1(1) X_1^{\top} \Sigmakx^{-1} X_2 Y_2(1),Y_3(1) X_3^{\top} \Sigmakx^{-1} X_4 Y_4(1) \Big|B_i=k, A_i=1\} \\ 
    & \leq \frac{2}{n^2_{[k]}} \Ek \Big\{Y_1 X_1^{\top} \Sigmakx^{-1} X_2 Y_2\Big\}^2 + \frac{4}{n_{[k]}} \ProjBeta{1}^{\top} \Ek\{X_1X_1^{\top}Y^2_1(1)\}\ProjBeta{1} \\
    & = O \left( \frac{p}{n^2} \right) + O \left( \frac{1}{n} \right) = o(1).
\end{align*}
Next, we bound the conditional variance of $\dfrac{1}{n_{[k]}(n_{[k]}-1)} \sum\limits_{i\in [k]} A_i X_i^{\top} \Sigmakx^{-1}X_i Y_i^2$ as follows:
\begin{align*}
    &\var\Big\{\frac{1}{n_{[k]}(n_{[k]}-1)} \sum\limits_{i\in [k]} A_i X_i^{\top} \Sigmakx^{-1}X_i Y_i^2|B_i=k, A_i=1\Big\} \\
    &\leq \frac{1}{n_{[k]}(n_{[k]}-1)^2}\Ek\Big\{(X_1^{\top} \Sigmakx^{-1}X_1)^2 Y_1^4(1)\Big\} \\
    &= O \left( \frac{p^2}{n^3} \right) = o(1).
\end{align*}
We then turn to $\dfrac{1}{n_{[k]}-1}\Big\{ \dfrac{1}{n_{[k]}(n_{[k]}-1)} \sum\limits_{\substack{1\leq i \neq j\leq n \\ i,j\in[k]}} A_i A_j (X_i^{\top} \Sigmakx^{-1}X_j)^2 Y_iY_j\Big\}$.
Similarly, we have 
\begin{align*}
    &\var\Big\{ \frac{1}{n_{[k]}(n_{[k]}-1)^2} \sum\limits_{\substack{1\leq i \neq j\leq n \\ i,j\in[k]}} A_i A_j (X_i^{\top} \Sigmakx^{-1}X_j)^2 Y_iY_j|B_i=k, A_i=1\Big\} \\
    &= \frac{1}{n_{[k]}^4} \var\Big\{(X_1^{\top} \Sigmakx^{-1}X_2)^2 Y_1(1)Y_2(1)|B_i=k, A_i=1\Big\} \\
    &\quad + \frac{1}{n_{[k]}^3} \mathrm{Cov}\Big\{(X_1^{\top} \Sigmakx^{-1}X_2)^2 Y_1(1)Y_2(1),(X_1^{\top} \Sigmakx^{-1}X_3)^2 Y_1(1)Y_3(1)|B_i=k, A_i=1\Big\}\\
    &\leq  \frac{1}{n_{[k]}^4} \Ek\{(X_1^{\top} \Sigmakx^{-1}X_2)^4 Y^2_1Y^2_2\} + \frac{1}{n_{[k]}^3}\Ek\Big\{(X_1^{\top} \Sigmakx^{-1}X_2)^2 Y_1(1)Y_2(1)(X_1^{\top} \Sigmakx^{-1}X_3)^2 Y_1(1)Y_3(1)\Big\}.
\end{align*}
For the second term in the last line of the above display, by Assumption \ref{ap1}, $E[Y_i(1)|X_i]$ and $E[Y^2_i(1)|X_i]$ are bounded, then 
\[\Ek\Big\{(X_1^{\top} \Sigmakx^{-1}X_2)^2 Y_1(1)Y_2(1)(X_1^{\top} \Sigmakx^{-1}X_3)^2 Y_1(1)Y_3(1)\Big\} = O(\Ek[(X_1^{\top} \Sigmakx^{-1}X_2)^2(X_1^{\top} \Sigmakx^{-1}X_3)^2]).\]
As $X_1$, $X_2$ and $X_3$ are mutually independent, we have
\begin{equation*}
    \Ek[(X_1^{\top} \Sigmakx^{-1}X_2)^2(X_1^{\top} \Sigmakx^{-1}X_3)^2] = \Ek [(X_1^{\top}\Sigmakx^{-1}X_1)^2] = O(p^2).   
\end{equation*}
For the first term, by Assumption \ref{ap1}, we need to show
\[\frac{1}{n_{[k]}^4} \Ek\{(X_1^{\top} \Sigmakx^{-1}X_2)^4\} = o(1).\]
To this end, we have the following.
\begin{align*}
    \frac{1}{n_{[k]}^4} \Ek\{(X_1^{\top} \Sigmakx^{-1}X_2)^4\}&=\frac{1}{n_{[k]}^4}\Ek\{(X_1^{\top} \Sigmakx^{-1}X_2)^2(X_1^{\top} \Sigmakx^{-1}X_2)^2\} \\
    &\leq \frac{1}{n_{[k]}^4} \Ek\{\sup_{X_1,X_2}(X_1^{\top} \Sigmakx^{-1}X_2)^2\} \Ek \{(X_1^{\top} \Sigmakx^{-1}X_2)^2\}\\
    &\leq \frac{M^4}{n_{[k]}^4} \Ek\{(1^{\top} \Sigmakx^{-1}1)^2\} \Ek \{X_1^{\top} \Sigmakx^{-1}X_1\}\\
    & = \frac{M^4}{n^4_{[k]}}\cdot O(p^2) \cdot O(p) = o(1).
\end{align*}
Therefore, we have shown that the variance of each oracle variance estimator is $o(1)$. By Chebyshev's inequality, we derive that
\begin{align*}
    &\frac{1}{n_{[k]}(n_{[k]}-1)} \sum\limits_{\substack{1\leq i \neq j\leq n \\ i,j\in[k]}} A_i A_j Y_i X_i^{\top} \Sigmakx^{-1} X_j Y_j - \ProjBeta{1}^{\top} \Sigmakx\ProjBeta{1} = o_P(1),\\
    &\frac{1}{n_{[k]}-1}\Big\{ \frac{1}{n_{[k]}} \sum\limits_{i\in [k]} A_i X_i^{\top} \Sigmakx^{-1}X_i Y_i^2\Big\}-\frac{1}{n_{[k]}-1} E\{Y_1(1)X_1^{\top}\Sigmakx^{-1}X_1^kY_1^k(1)\}=o_P(1),\\
    &\frac{1}{n_{[k]}-1}\Big\{ \frac{1}{n_{[k]}(n_{[k]}-1)} \sum\limits_{\substack{1\leq i \neq j\leq n \\ i,j\in[k]}} A_i A_j (X_i^{\top} \Sigmakx^{-1}X_j)^2 Y_iY_j\Big\} \\
    & - \frac{1}{n_{[k]}-1} E\{(X_1^{\top}\Sigmakx^{-1}X_2)^2Y_1(1)Y_2(1)\}=o_P(1).
\end{align*}

In the second step, we need to show that
\begin{align}
    &\frac{1}{n_{[k]}(n_{[k]}-1)} \sum\limits_{\substack{1\leq i \neq j\leq n \\ i,j\in[k]}} A_i A_j Y_i X_i^{\top} (\HSigmakx^{-1}-\Sigmakx^{-1}) X_j Y_j  = o_P(1), \label{eq:variance1}\\
    &\frac{1}{n_{[k]}-1}\Big\{ \frac{1}{n_{[k]}} \sum\limits_{i\in [k]} A_i X_i^{\top} (\HSigmakx^{-1}-\Sigmakx^{-1})X_i Y_i^2\Big\}=o_P(1),\label{eq:variance2}\\
    &\frac{1}{n_{[k]}-1}\Big\{ \frac{1}{n_{[k]}(n_{[k]}-1)} \sum\limits_{\substack{1\leq i \neq j\leq n \\ i,j\in[k]}} A_i Y_i [(X_i^{\top}  \HSigmakx^{-1} X_j)^2 - (X_i^{\top}  \Sigmakx^{-1} X_j)^2] A_j Y_j\Big\}=o_P(1)\label{eq:variance3}.
\end{align}
Equation \eqref{eq:variance1} can be shown with a strategy similar to the proof of Theorem \ref{thm:main}, and therefore we omit it in this proof. Next, we analyze $\dfrac{1}{n_{[k]}(n_{[k]}-1)} \sum\limits_{i\in [k]} A_i X_i^{\top} \HSigmakx^{-1}X_i Y_i^2$. By Lemma~\ref{lem:variance}, we have
\[\Ek\{A_i X_i^{\top} (\HSigmakx^{-1} - \Sigmakx^{-1})X_i Y_i^2\} = O(p \sqrt{p / n}),\]
so
\[\Ek\Big\{\frac{1}{n_{[k]}(n_{[k]}-1)} \sum\limits_{i\in [k]} A_i X_i^{\top} (\HSigmakx^{-1} -\Sigmakx^{-1}) X_i Y_i^2\Big\} = O\Big(\frac{p}{n}\cdot \sqrt{\frac{p}{n}}\Big) = o(1).\]
Then we compute the variance of \eqref{eq:variance2}. We have 
\begin{align*}
    &\var\Big\{\frac{1}{n_{[k]}(n_{[k]}-1)} \sum\limits_{i\in [k]} A_i X_i^{\top} (\HSigmakx^{-1} -\Sigmakx^{-1}) X_i Y_i^2|B_i=k,A_i=1\Big\} \\
    &\leq \Eka\Big\{\frac{1}{n_{[k]}(n_{[k]}-1)} \sum\limits_{i\in [k]} A_i X_i^{\top} (\HSigmakx^{-1} -\Sigmakx^{-1}) X_i Y_i^2\Big\}^2\\
    &= \frac{1}{n_{[k]}(n_{[k]}-1)^2} \Ek\big[\{X_1^{\top} (\HSigmakx^{-1} -\Sigmakx^{-1}) X_1\}^2 Y_1^4\big] \\
    &\quad + \frac{1}{n_{[k]}(n_{[k]}-1)} \Ek\big[\{X_1^{\top} (\HSigmakx^{-1} -\Sigmakx^{-1}) X_1\}\{X_2^{\top} (\HSigmakx^{-1} -\Sigmakx^{-1}) X_2\} Y_1^2Y_2^2\big] \\
    & = O \left( \frac{p^4}{n^4} \right) + \frac{1}{n_{[k]}(n_{[k]}-1)}\cdot O\{(\Ek\big[\{X_1^{\top} ((\HSigmakx)_{-1,2}^{-1} -\Sigmakx^{-1}) X_1\}])^2\} \\
    & = O \left( \frac{p^4}{n^4} \right) = o (1).
\end{align*}

We then proceed to analyze the following component in \eqref{eq:variance3}.
\[\frac{1}{n_{[k]}-1}\Big\{ \frac{1}{n_{[k]}(n_{[k]}-1)} \sum\limits_{\substack{1\leq i \neq j\leq n \\ i,j\in[k]}} A_i Y_i (X_i^{\top} \HSigmakx^{-1}X_j)^2 A_j Y_j\Big\}.\]

We first analyze the difference between this term and its oracle version (replacing $\HSigmakx$ by $\Sigmakx$) as 
\[\frac{1}{n_{[k]}-1}\Big[ \frac{1}{n_{[k]}(n_{[k]}-1)} \sum\limits_{\substack{1\leq i \neq j\leq n \\ i,j\in[k]}} A_i Y_i \{X_i^{\top}(\HSigmakx^{-1} - \Sigmakx^{-1})X_j\}\{X_i^{\top}(\HSigmakx^{-1}+\Sigmakx^{-1})X_j\} A_j Y_j\Big].\]
By Lemma~\ref{lem:repeated Sherman-Morrison}, the expectation of the difference can be bounded as follows:
\begin{align*}
    & \frac{1}{n_{[k]}-1} E [A_1 Y_1 \{X_1^{\top}((\HSigmakx)^{-1}_{-1,2} - \Sigmakx^{-1})X_2\} \{X_1^{\top}((\HSigmakx)^{-1}_{-1,2}+\Sigmakx^{-1})X_2\} A_2 Y_2] \\
    &=\frac{1}{n_{[k]}-1} E [A_1 Y_1 \{X_1^{\top}((\HSigmakx)^{-1}_{-1,2} - \Sigmakx^{-1}) \Ek [A_2 Y_2 X_2 X_2^\top] ((\HSigmakx)^{-1}_{-1,2}+\Sigmakx^{-1})X_1\}] \\
    & = o (n^{-1}).
\end{align*}
Then we analyze its variance. Similar to the proof of Theorem~\ref{thm:main}, we have
\begin{align*}
    &\var\Big\{ \frac{1}{n_{[k]}(n_{[k]}-1)^2} \sum\limits_{\substack{1\leq i \neq j\leq n \\ i,j\in[k]}} A_i A_j [(X_i^{\top} \HSigmakx^{-1}X_j)^2-(X_i^{\top} \Sigmakx^{-1}X_j)^2 ] Y_iY_j|B_i=k, A_i=1\Big\}\\
    &\leq \Eka\Big[\Big\{ \frac{1}{n_{[k]}(n_{[k]}-1)^2} \sum\limits_{\substack{1\leq i \neq j\leq n \\ i,j\in[k]}} A_i A_j [(X_i^{\top} \HSigmakx^{-1}X_j)^2-(X_i^{\top} \Sigmakx^{-1}X_j)^2 ] Y_iY_j\Big\}^2\Big] \\
    &= \frac{2}{n^4_{[k]}}\Ek\Big[\{X_1^{\top}(\HSigmakx^{-1} - \Sigmakx^{-1})X_2\}^2\{X_1^{\top}(\HSigmakx^{-1}+\Sigmakx^{-1})X_2\}^2 Y_1(1)Y_2(1)\Big]\\
    &\quad +\frac{4}{n^3_{[k]}}\Ek\Big[\{X_1^{\top}(\HSigmakx^{-1} - \Sigmakx^{-1})X_2\}\{X_1^{\top}(\HSigmakx^{-1}+\Sigmakx^{-1})X_2\} Y_1(1)Y_2(1)\\
    &\quad\quad \times\{X_1^{\top}(\HSigmakx^{-1} - \Sigmakx^{-1})X_3\}\{X_1^{\top}(\HSigmakx^{-1}+\Sigmakx^{-1})X_3\} Y_1(1)Y_3(1)\Big]\\
    &\quad+ \frac{1}{n^2_{[k]}}\Ek\Big[\{X_1^{\top}(\HSigmakx^{-1} - \Sigmakx^{-1})X_2\}\{X_1^{\top}(\HSigmakx^{-1}+\Sigmakx^{-1})X_2\} Y_1(1)Y_2(1)\\
    &\quad\quad \times\{X_3^{\top}(\HSigmakx^{-1} - \Sigmakx^{-1})X_4\}\{X_3^{\top}(\HSigmakx^{-1}+\Sigmakx^{-1})X_4\} Y_3(1)Y_4(1)\Big].
\end{align*}
We focus on the following terms: 
\begin{itemize}
    \item $\dfrac{1}{n^4_{[k]}} \Ek\Big[\big\{X_1^{\top}(\HSigmakx^{-1} - \Sigmakx^{-1})X_2\big\}^2\big\{X_1^{\top}(\HSigmakx^{-1} + \Sigmakx^{-1})X_2\big\}^2 Y_1^2(1)Y_2^2(1)\Big]$;
    \item $\begin{aligned}
        &\frac{1}{n^3_{[k]}}\Ek\Big[\{X_1^{\top}(\HSigmakx^{-1} - \Sigmakx^{-1})X_2\}\{X_1^{\top}(\HSigmakx^{-1}+\Sigmakx^{-1})X_2\} \\
    &\quad\times\{X_1^{\top}(\HSigmakx^{-1} - \Sigmakx^{-1})X_3\}\{X_1^{\top}(\HSigmakx^{-1}+\Sigmakx^{-1})X_3\} Y^2_1(1)Y_2(1)Y_3(1)\Big];
    \end{aligned}
    $
    \item $\begin{aligned}
        &\frac{1}{n^2_{[k]}}\Ek\Big[\{X_1^{\top}(\HSigmakx^{-1} - \Sigmakx^{-1})X_2\}\{X_1^{\top}(\HSigmakx^{-1}+\Sigmakx^{-1})X_2\} \\
    &\quad\times\{X_3^{\top}(\HSigmakx^{-1} - \Sigmakx^{-1})X_4\}\{X_3^{\top}(\HSigmakx^{-1}+\Sigmakx^{-1})X_4\} Y_1(1)Y_2(1)Y_3(1)Y_4(1)\Big].
    \end{aligned}$
\end{itemize}

The first term can be controlled as follows: 
\begin{align*}
    &\frac{1}{n^4_{[k]}} E\Big[\big\{X_1^{\top}(\HSigmakx^{-1} - \Sigmakx^{-1})X_2\big\}^2\big\{X_1^{\top}(\HSigmakx^{-1} + \Sigmakx^{-1})X_2\big\}^2 Y_1^2Y_2^2\Big] \\
    & \leq \frac{1}{n^4_{[k]}} E\big\{X_1^{\top}(\HSigmakx^{-1} - \Sigmakx^{-1})X_2\big\}^2\big\{X_1^{\top}(\HSigmakx^{-1} + \Sigmakx^{-1})X_2\big\}^2\cdot O (1) \\
    &\leq \frac{1}{n^4_{[k]}} E\big\{X_1^{\top}(\HSigmakx^{-1} - \Sigmakx^{-1})X_2\big\}^2 \cdot O(p^2) \\
    &=O \left( \frac{p^4}{n^5} \right) = o \left( \frac{1}{n} \right).
\end{align*}

The second term can be controlled by using Lemma~\ref{lem:repeatSH2} in Appendix~\ref{app:matrices}:
\begin{align*}
    &\frac{1}{n^3_{[k]}}\Ek\Big[\{X_1^{\top}(\HSigmakx^{-1} - \Sigmakx^{-1})X_2\}\{X_1^{\top}(\HSigmakx^{-1}+\Sigmakx^{-1})X_2\} \\
    &\quad\times\{X_1^{\top}(\HSigmakx^{-1} - \Sigmakx^{-1})X_3\}\{X_1^{\top}(\HSigmakx^{-1}+\Sigmakx^{-1})X_3\} Y^2_1(1)Y_2(1)Y_3(1)\Big] \\
    &= \frac{1}{n^3_{[k]}} E\Big[\{X_1^{\top}((\HSigmakx)^{-1}_{-(1,2,3)} - \Sigmakx^{-1})^2X_1\}\Sigmakx^2\{X_1^{\top}((\HSigmakx)^{-1}_{-(1,2,3)}+\Sigmakx^{-1})^2X_1\}\Big]\cdot O(1) +o(1)\\
    & = \frac{1}{n^3_{[k]}} \cdot O \left( \frac{p^4}{n} \right)\cdot \frac{9}{\lambda^2_{\min}} = O \left( \frac{p^3}{n^4} \right) = o (1).
\end{align*}

For the third part, we have the following:
\begin{align*}
 &\frac{1}{n^2_{[k]}}\Ek\Big[\{X_1^{\top}(\HSigmakx^{-1} - \Sigmakx^{-1})X_2\}\{X_1^{\top}(\HSigmakx^{-1}+\Sigmakx^{-1})X_2\} \\
    &\quad\times\{X_3^{\top}(\HSigmakx^{-1} - \Sigmakx^{-1})X_4\}\{X_3^{\top}(\HSigmakx^{-1}+\Sigmakx^{-1})X_4\} Y_1(1)Y_2(1)Y_3(1)Y_4(1)\Big]\\
    &= \frac{1}{n^2_{[k]}} E\Big[\big\{X_1^{\top}((\HSigmakx)_{-(1,2,3,4)}^{-1} - \Sigmakx^{-1})X_2\big\}^2Y_1Y_2\Big]^2 + o (1)\\
    &= \frac{1}{n^2_{[k]}} \cdot O \left( \frac{p^2}{n} \right)\\
    &= O \left( \frac{p^{2}}{n^{3}} \right) = o \left( \frac{1}{n} \right).
\end{align*}

In summary, we have shown that 
\begin{itemize}
    \item $\dfrac{1}{n^4_{[k]}} \Ek\Big[\big\{X_1^{\top}(\HSigmakx^{-1} - \Sigmakx^{-1})X_2\big\}^2\big\{X_1^{\top}(\HSigmakx^{-1} + \Sigmakx^{-1})X_2\big\}^2 Y_1^2(1)Y_2^2(1)\Big]=o(1)$;
    \item $\begin{aligned}
        &\frac{1}{n^3_{[k]}}\Ek\Big[\{X_1^{\top}(\HSigmakx^{-1} - \Sigmakx^{-1})X_2\}\{X_1^{\top}(\HSigmakx^{-1}+\Sigmakx^{-1})X_2\} \\
    &\quad\times\{X_1^{\top}(\HSigmakx^{-1} - \Sigmakx^{-1})X_3\}\{X_1^{\top}(\HSigmakx^{-1}+\Sigmakx^{-1})X_3\} Y^2_1(1)Y_2(1)Y_3(1)\Big]=o(1);
    \end{aligned}
    $
    \item $\begin{aligned}
        &\frac{1}{n^2_{[k]}}\Ek\Big[\{X_1^{\top}(\HSigmakx^{-1} - \Sigmakx^{-1})X_2\}\{X_1^{\top}(\HSigmakx^{-1}+\Sigmakx^{-1})X_2\} \\
    &\quad\times\{X_3^{\top}(\HSigmakx^{-1} - \Sigmakx^{-1})X_4\}\{X_3^{\top}(\HSigmakx^{-1}+\Sigmakx^{-1})X_4\} Y_1(1)Y_2(1)Y_3(1)Y_4(1)\Big]=o(1).
    \end{aligned}$
\end{itemize}
which together lead to
\[\var\Big\{ \frac{1}{n_{[k]}(n_{[k]}-1)^2} \sum\limits_{\substack{1\leq i \neq j\leq n \\ i,j\in[k]}} A_i A_j [(X_i^{\top} \HSigmakx^{-1}X_j)^2-(X_i^{\top} \Sigmakx^{-1}X_j)^2 ] Y_iY_j|B_i=k, A_i=1\Big\}=o(1).\]

Then in each stratum $k$, we have that
\begin{align*}
    &\frac{1}{n_{[k]}(n_{[k]}-1)} \sum\limits_{\substack{1\leq i \neq j\leq n \\ i,j\in[k]}} A_i A_j Y_i X_i^{\top} \HSigmakx^{-1} X_j Y_j - \eta^{\top}_{[k]}(1)\Sigmakx^{-1}\eta_{[k]}(1) = o_P(1),\\
    &\frac{1}{n_{[k]}-1}\Big\{ \frac{1}{n_{[k]}} \sum\limits_{i\in [k]} A_i X_i^{\top} \HSigmakx^{-1}X_i Y_i^2\Big\}-\frac{1}{n_{[k]}-1} E\{Y_1(1)X_1^{\top}\Sigmakx^{-1}X_1^kY_1^k(1)\}=o_P(1),\\
    &\frac{1}{n_{[k]}-1}\Big\{ \frac{1}{n_{[k]}(n_{[k]}-1)} \sum\limits_{\substack{1\leq i \neq j\leq n \\ i,j\in[k]}} A_i A_j (X_i^{\top} \HSigmakx^{-1}X_j)^2 Y_iY_j\Big\}-
    \frac{1}{n_{[k]}-1} E\{(X_1^{\top}\Sigmakx^{-1}X_2)^2Y_1(1)Y_2(1)\}\\
    & =o_P(1).
\end{align*}
Finally, we can follow the same strategy to prove the consistency of the variance estimator in the control group. The consistency of the estimator of the covariance part follows from similar arguments and the Cauchy-Schwarz inequality. We then sum all terms in each stratum $k$ and eventually obtain the desired result.

\section{Preparatory Technical Results}
\label{app:technical}

\subsection{Asymptotic normality of \texorpdfstring{$U$}{}-statistics}
\label{app:U-stats}

Since our newly proposed estimator is based on $U$-statistics, its asymptotic normality relies on the classical results of \citet{bhattacharya1992class} and their extensions.

\begin{lemma}\label{lem:BG}
    Let 
    \[M_2 = \sum\limits_{k \in \mathcal{K}} \frac{1}{\sqrt n (n_{[k]} - 1)}\frac{1}{\pi^2_{[k]}} \sum_{\substack{1 \leq i \neq j \leq n \\ i, j \in [k]}} (A_i -\pi_{n[k]}) A_j X_i^{\top} \Sigmakx^{-1} X_j Y_j(1).\]
    Under Assumptions \ref{ap1}--\ref{ap3}, we have 
    \[ M_2 \stackrel{d}\to \operatorname{N}(0, \zetaSec{Y(1)} + \sigma^2_{\mathrm{I},X(1)}) \]
\end{lemma}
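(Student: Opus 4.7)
My plan is to generalize the coupling technique of \citet{Bugni2018} used for sample means to the second-order $U$-statistic $M_2$. Within each stratum $k$, I would first introduce i.i.d.\ copies $(X_i^k, Y_i^k(1))$ of the conditional law of $(X_i, Y_i(1))$ given $B_i = k$, independent of $(A^{(n)}, B^{(n)})$. This expresses $M_2 = \sum_{k \in \mathcal{K}} M_2^k$ as a sum of conditionally independent stratum-specific weighted second-order $U$-statistics in the i.i.d.\ $(X_i^k, Y_i^k(1))$ with the (conditionally fixed) weights $(A_i - \pi_{n[k]}) A_j$; it then suffices to establish a conditional CLT for each $M_2^k$ and invoke the Cram\'er-Wold device to combine across strata.

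Next, I would perform a Hoeffding-type decomposition of the kernel by writing
\begin{align*}
X_i^{\top} \Sigmakx^{-1} X_j Y_j(1) &= \Ek\{X\}^{\top}\beta_{[k]}(1) + (X_i - \Ek\{X\})^{\top}\beta_{[k]}(1) \\
&\quad+ \Ek\{X\}^{\top}\Sigmakx^{-1}(X_j Y_j(1) - \eta_{[k]}(1)) \\
&\quad+ (X_i - \Ek\{X\})^{\top}\Sigmakx^{-1}(X_j Y_j(1) - \eta_{[k]}(1)),
\end{align*}
i.e.\ a constant, two first-order projections, and a doubly-centered degenerate quadratic. Exploiting $\sum_{i \in [k]}(A_i - \pi_{n[k]}) = 0$ and $n_{[k]1} = \pi_{n[k]} n_{[k]}$, elementary algebra will show that, after multiplication by the weights and normalization by $\sqrt n (n_{[k]} - 1) \pi_{[k]}^2$, both the constant term and the projection onto $j$ become $o_P(1)$, leaving $M_2^k = L_k + Q_k + o_P(1)$, where $L_k$ is the rescaled projection onto the $i$-argument and $Q_k$ is the weighted degenerate $U$-statistic.

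Finally, I would treat $L_k$ and $Q_k$ separately and combine. Up to a further $o_P(1)$ remainder, $L_k$ equals $(\sqrt n \pi_{[k]})^{-1} \beta_{[k]}(1)^{\top}\sum_{i \in [k]}(A_i - \pi_{n[k]}) X_i^k$; conditioning on $(A^{(n)}, B^{(n)})$ reduces it to a sum of independent mean-zero terms whose Lindeberg condition follows from Assumption~\ref{ap2} and whose conditional variance converges in probability to $\sigma^2_{\mathrm{I}, X(1)}$, in agreement with the covariance calculation at the end of the proof of Theorem~\ref{thm1}. For $Q_k$, I would invoke the degenerate $U$-statistic CLT of \citet{bhattacharya1992class} conditionally on $(A^{(n)}, B^{(n)})$; a direct second-moment calculation relying on $\Ek\{X_1^{\top}\Sigmakx^{-1} X_2 X_2^\top \Sigmakx^{-1} X_1\} = \tr(\Sigmakx^{-1} \Ek\{X_2 X_2^\top\})$ yields the limit $\zetaSec{Y(1)}$. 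The orthogonality of Hoeffding components kills the conditional cross-covariance between $L_k$ and $Q_k$, so their variances add. The hardest step will be verifying the Lindeberg-type moment conditions for the degenerate $U$-statistic when the $p \times p$ kernel $\Sigmakx^{-1}$ appears and $p$ is allowed to grow with $n$; this is handled by combining the bounded spectrum of $\Sigmakx$ (Assumption~\ref{ap3}) with the bounded covariates and bounded conditional fourth moment of $Y(1)$ from Assumption~\ref{ap2}, mirroring the fourth-moment bounds already used in the proof of Theorem~\ref{thm:main}.
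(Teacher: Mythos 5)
Your proposal is correct and follows essentially the same route as the paper: coupling the stratum-specific data to i.i.d.\ copies independent of $(A^{(n)},B^{(n)})$, a Hoeffding decomposition of the kernel into a linear projection plus a degenerate quadratic, Lindeberg's CLT for the linear part, the martingale CLT of \citet{bhattacharya1992class} for the degenerate part (with the fourth-moment conditions verified via Assumptions~\ref{ap2}--\ref{ap3}), and the same variance bookkeeping yielding $\sigma^2_{\mathrm{I},X(1)}+\zetaSec{Y(1)}$. The only cosmetic difference is that you decompose the asymmetric kernel directly and explicitly argue the constant and $j$-projection terms are $o_P(1)$, whereas the paper works with the symmetrized kernel and absorbs both first-order projections into its $R_1$ term.
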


\begin{proof}
We first derive the asymptotic normality of $M_2$ and then calculate its asymptotic variance. Using the same strategy and notation in the proof of Theorem~\ref{thm1}, we construct i.i.d.\ random variables $(Y^k_i(0),Y^k_i(1),X^k_i)$ with marginal distributions equal to the marginal distribution of $(Y_i(0),Y_i(1),X_i)|B_i=k$ independent of $(A^{(n)},B^{(n)})$, independently for each stratum $k$. Note that for any $k \in \mathcal{K}$, $(Y^k_i(0),Y^k_i(1),X^k_i)$ are mutually independent by coupling construction, and therefore we only need to derive the asymptotic normality in each stratum $k$ and then sum them up to obtain the asymptotic normality of $M_2$. Then in stratum $k$, we construct
\[M^k_2 = \frac{1}{\sqrt n(n_{[k]}-1)}\frac{1}{\pi^2_{[k]}} \sum_{\substack{N(k)+1\leq i\neq j\leq N(k+1)}} (A_i -\pi_{n[k]})A_jX^{k\top}_i\Sigmakx^{-1}X^k_j Y^k_j(1),\]
and 
\[M^k_2 \stackrel{d}=M_2|B_i=k,\]
conditional on $(A^{(n)},B^{(n)})$. Then we apply Hoeffding decomposition to decompose $M_2^k$ into two orthogonal terms: a linear summation part and a second-order degenerate $U$-statistic part, in each stratum $k$. The CLT of the linear part can be established by Lindeberg's CLT and the CLT of the degenerate second-order U-statistic part is proved by using \Levy's martingale CLT.

Let $Z^k = (X^k,Y^k(1))$. We first construct a symmetric function $h_{[k]}(Z^k_1,Z^k_2)$ and then use Hoeffding decomposition to analyze the second-order $U$-statistics. Denote
\[U_{[k]}(Z^k_1,Z^k_2) =  X_{1}^{k \top} \Sigmakx^{-1}X^k_2 Y^k_2(1),\]
and 
\[h_{[k]} (Z^k_1,Z^k_2) = U_{[k]}(Z^k_1,Z^k_2) + U_{[k]}(Z^k_1,Z^k_2).\]
Conditional on $(A^{(n)},B^{(n)})$, by Hoeffding decomposition, 
\begin{align*}
M^k_2 = \frac{1}{\sqrt n (n_{[k]}-1)}\sum_{\substack{1\leq i\neq j\leq n \\ i,j\in [k]}} & (A_i-\pi_{[k]})A_j \Big\{ \Ek\{h_{[k]} (Z^k_1, Z^k_2)\} + \{h_{[k]}^{(1)}(Z^k_i) + h^{(1)}(Z^k_j)\} \\
& + \frac{1}{\sqrt n (n_{[k]}-1)}  h_{[k]}^{(2)}(Z^k_i,Z^k_j) \Big\},
\end{align*}
where
\begin{align*}
\Ek(h_{[k]}(Z^k_1,Z^k_2)) &= \Ek[X^k_1]^{\top}\ProjBeta{1}, \\
h_{[k]}^{(1)}(u) &= \Ek\{h_{[k]}(Z^k,u)\} =  u^{\top} \Sigmakx^{-1} \Ek\{X^k Y^k(1)\} =  u^{\top} \ProjBeta{1}, \\
h_{[k]}^{(2)}(Z^k_1,Z^k_2) &= h_{[k]}(Z^k_1,Z^k_2) - h_{[k]}^{(1)}(Z^k_1) - h_{[k]}^{(1)}(Z^k_2) - \Ek(h_{[k]}(Z^k_1,Z^k_2))\\
&= X_1^{k\top}\Sigmakx^{-1}X^k_2 Y^k_2(1) + X_2^{k\top}\Sigmakx^{-1}X^k_1 Y^k_1(1) \\
&\quad - X_1^{k\top}\ProjBeta{1} - X_2^{k\top}\ProjBeta{1} - \Ek[X_1]^{\top}\ProjBeta{1}.
\end{align*}
Let 
\begin{align*}
    R_{1} &= \frac{1}{\sqrt n (n_{[k]}-1)}\sum_{\substack{1\leq i\neq j\leq n \\ i,j\in [k]}} (A_i - \pi_{[k]})A_j (X^k_i + X^k_j)^{\top} \ProjBeta{1}, \text{ and} \\
    R_{2} &= \frac{1}{\sqrt n (n_{[k]}-1)}\sum_{\substack{1\leq i\neq j\leq n \\ i,j\in [k]}}(A_i-\pi_{[k]})A_j \Big[X_{i}^{k \top}\Sigmakx^{-1}X^k_j Y^k_j(1) + X_j^{k\top}\Sigmakx^{-1}X^k_i Y^k_i(1) \\
    &\quad - X_{i}^{k \top}\ProjBeta{1} - X_j^{k\top}\ProjBeta{1}\Big].
\end{align*}
These two parts are uncorrelated. We first analyze $R_{1}$. Let $G_{ij} = (A_i - \pi_{[k]})A_j$ and $\sgij = G_{ij} + G_{ji}$. We can then write $R_1$ as 
\begin{align*}
    R_{1} &= \frac{1}{\sqrt n (n_{[k]}-1)}\frac{1}{\pi_{[k]}^2}\sum_{\substack{1\leq i < j\leq n \\ i,j\in [k]}} \sgij (X^k_i + X^k_j)^{\top} \ProjBeta{1} \\
    &= \frac{1}{\sqrt n (n_{[k]}-1)^2}\frac{1}{\pi_{[k]}^2}\sum\limits_{i=1}^n X_{i}^{k \top}\ProjBeta{1} \sum_{j\neq i}\sgij,\ i,j\in [k] \\
    &= \frac{1}{\sqrt n (n_{[k]}-1)^2}\frac{1}{\pi_{[k]}^2}\sum\limits_{i=1}^n W_i X_{i}^{k \top}\ProjBeta{1},\ i,j\in [k] \\
    &= \frac{1}{\pi_{[k]}^2}\sum\limits_{i=1}^n R_{1,i},
\end{align*}
where
\[R_{1,i} = \frac{1}{\sqrt n(n_{[k]}-1)} W_i X_i^{k\top}\ProjBeta{1}.\]
We can now invoke Lindeberg's CLT by showing that 
\[\sum\limits_{i=1}^n \Ek\{R^2_{1,i}\} = O_P(1),\]
and also invoke Lyapunov's condition by showing that
\[\sum\limits_{i=1}^n \Ek\{R^3_{1,i}\}\stackrel{P}\to 0.\]
To this end, conditional on $B_i=k$, we have
\begin{align*}
\Ek\{R^2_{1,i}\} &= \frac{1}{n(n_{[k]}-1)^2} \Ek\{W^2_i\} \Ek\{X_i^{\top}\ProjBeta{1}\}^2 \\
&= \frac{1}{n(n_{[k]}-1)^2} \ProjBeta{1}^{\top} \Sigmakx \ProjBeta{1} \Ek \Big\{ \Big( \sum\limits_{j=1}^n \sgij - \overline G_{ii} \Big)^2 \Big\}, \\
\Big( \sum\limits_{j=1}^n \sgij - \overline G_{ii} \Big)^2 & = \Big( \sum\limits_{j=1}^n\sgij \Big)^2 - 2\overline G_{ii}\sum\limits_{j=1}^n \sgij + \overline G_{ii}^2 \\
&= \{n_{[k]a}(2A_i - \pi_{[k]a}) - A_i \pi_{[k]a}\}^2+ O(n_{[k]a}) \\
&= A_i \cdot O(n^2_{[k]a}) + O(n^2_{[k]a}),
\end{align*}
and also $\sum\limits_{i=1}^n  \Big( \sum\limits_{j=1}^n \sgij - \overline G_{ii} \Big)^2 = O(n^3)$. Therefore, $\sum\limits_{i=1}^n \Ek\{R^2_{1,i}\} = O(1)$. Similarly, we have
\[R^3_{1,i} = \frac{1}{n^{3/2}(n_{[k]}-1)^3} \Big\{\sum\limits_{j\neq i}\sgij X_{i}^{k \top}\ProjBeta{1} \Big\}^3 = O_P(n^{-3/2}).\]
Then 
\[\sum\limits_{i=2}^n \Ek\{R^3_{1,i}\}=O_P(n^{-1/2}).\]
Therefore, $R_1$ satisfies Lyapunov's condition, which leads to $R_1$ being asymptotically normal. 

Similar to the strategy for analyzing $R_1$, we first write
\begin{align*}
R_{2} &= \frac{1}{\sqrt n (n_{[k]}-1)}\frac{1}{\pi^4}\sum_{\substack{N(k)+1\leq i < j\leq N(k+1)}} \sgij \Big[(X_i^{k\top}\Sigmakx^{-1}X_j^k Y_j^k(1) + X_j^{k\top}\Sigmakx^{-1}X_i Y_i(1)) \\
& \quad - (X_i^{k\top}\ProjBeta{1} + X_j^{k\top}\ProjBeta{1})\Big] \\
& = \frac{1}{\sqrt n (n_{[k]}-1)^2}\frac{1}{\pi^4}\sum\limits_{i=N(k)+1}^{N(k+1)} \sum\limits_{j=1}^{i-1}\sgij\Big[ (X_i^{k\top}\Sigmakx^{-1}X_j^k Y_j^k(1) + X_j^{k\top}\Sigmakx^{-1}X^k_i Y^k_i(1)) \\
& \quad - (X_i^{k\top}\ProjBeta{1} + X_j^{k\top}\ProjBeta{1})\Big] \\
& = \frac{1}{\pi^4}\sum\limits_{i=N(k)+2}^{N(k+1)} R_{2,i},
\end{align*}
where 
\[R_{2,i} = \frac{1}{\sqrt n (n_{[k]}-1)^2}\sum\limits_{j=1}^{i-1}\sgij\Big[ (X_i^{k\top}\Sigmakx^{-1}X_j^k Y_j^k(1) + X_j^{k\top}\Sigmakx^{-1}X^k_i Y^k_i(1)) - (X_i^{k\top}\ProjBeta{1} + X_j^{k\top}\ProjBeta{1})\Big].\]
We can show that $R_{2,i}$ is a martingale sequence. Following a similar strategy as in the proof of Theorem~1 in \cite{bhattacharya1992class}, we need to show the following:
\begin{align}
\sum\limits_{i = 1}^n \Ek \{R^2_{2, i} |\mathcal{F}^k_{i - 1}\} = \sigma^2_{2, n} & < \infty, \label{eq4} \\
\sum\limits_{i = 1}^n \Ek \{R^2_{2, i} I (|R_{2, i}| > \epsilon)\} & \stackrel{P}\to 0. \label{eq5}
\end{align}

For \eqref{eq5}, by Markov inequality, we have the following.
\begin{align*}
\Ek \{R^2_{2,i} I (|R_{2, i}| > \epsilon)\} & \leq \sqrt{\Ek \{R^4_{2, i}\}} \sqrt{\Pr (|R_{2, i}| > \epsilon)} \leq \sqrt{\Ek \{R^4_{2, i}\}} \frac{1}{\epsilon} \sqrt{\Ek \{R_{2, i}^2\}}.
\end{align*}
Let
\[h_R(Z^k_i,Z^k_j) = \frac{1}{\sqrt n (n_{[k]}-1)^2}X_i^{k\top}\Sigmakx^{-1}X_j^k Y_j^k(1) + X_j^{k\top}\Sigmakx^{-1}X^k_i Y^k_i(1) - X_i^{k\top}\ProjBeta{1} - X_j^{k\top}\ProjBeta{1}.\]
We first compute the fourth moment of $R_{2, i}$: 
\begin{align*}
\Ek\{R^4_{2,i}\} & = \Ek \Big\{\sum\limits_{j=1}^{i-1} \sgij h_R(Z^k_i,Z^k_j) \Big\}^4 \\
& = \sum\limits_{j_1,j_2,j_3,j_4} \overline{G}_{i,j_1}\overline{G}_{i,j_2}\overline{G}_{i,j_3}\overline{G}_{i,j_4} \Ek\{h_R(Z^k_i,Z^k_{j_1})h_R(Z^k_i,Z^k_{j_2})h_R(Z^k_i,Z^k_{j_3})h_R(Z^k_i,Z^k_{j_4})\}.
\end{align*}
As we have $\Ek \{h_R (Z^k_1, z) \} = 0$, $\forall z$, the remaining terms not equal to zero are $h_R^4(Z^k_i,Z^k_j)$ and $h_R^2(Z^k_i,Z^k_{j_1})h_R^2(Z^k_i,Z^k_{j_2})$. For the latter, by Lemma~\ref{lem:variance}, we can show that
\[\Ek\{h_R^2(Z^k_i,Z^k_{j_1})h_R^2(Z^k_i,Z^k_{j_2})\} = \Ek[\Ek\{h_R^2(Z^k_i,Z^k_{j_1})|Z^k_i\}\Ek\{h_R^2(Z^k_i,Z^k_{j_2}|Z^k_i\}] = O(1).\]
Then we have
\[ \Ek\{R^4_{2,i}\} \leq \Big(\sum\limits_{j=1}^{i-1} \sgij^4 + \sum\limits_{j_1\neq j_2} \overline{G}_{i,j_1}^2\overline{G}_{i,j_2}^2\Big) \Ek\{h_R^4(Z^k_1,Z^k_2)\}\leq i \Ek\{h_R^4(Z^k_1,Z^k_2)\}.\]
Again, by Lemma~\ref{lem:variance}, we have 
\[\Ek\{R^2_{2,i}\} \to i \sigma^2_{2,i},\]
Then we have 
\begin{align*}
    \sum\limits_{i=N(k)}^n \Ek\{R^2_{2,i}I(|R_{2,i}|>\epsilon)\} &\leq \frac{n}{\epsilon} \sqrt{\Ek\{h_R^4(Z^k_i,Z^k_j)} + \frac{\sigma_{2,n}}{\sqrt{n}} .
\end{align*}
We are then left to show 
\begin{equation}\label{BG_cond5}
n^2 \Ek\{h_R^4(Z^k_1,Z^k_2)\} \to 0.
\end{equation}
For \eqref{eq4}, similarly, we have 
\begin{align*}
& \Ek \{R^2_{2,i}|\mathcal{F}_{i-1}\} = \Ek \Big\{\sum\limits_{j=1}^{i-1} \sgij h_R(Z^k_i,Z^k_j)|\mathcal{F}_{i-1} \Big\}^2 \\
&= \sum\limits_{j=1}^{i-1} \overline{G}_{i,j}^2 \Ek\{h_R^2(Z^k_i,Z^k_{j})|\mathcal{F}_{i-1}\} + \sum\limits_{j_1\neq j_2} \overline{G}_{i,j_{1}}\overline{G}_{i,{j_2}} \Ek\{h_R(Z^k_i,Z^k_{j_{1}})h_R(Z^k_i,Z^k_{j_{2}})|\mathcal{F}_{i-1}\}.
\end{align*}
We can verify that the expectation of the first term is the asymptotic variance and that the mean of the second term is zero. Then we only need to show that the variance of each term converges to zero, which is equivalent to showing the following two statements
\begin{align}
    &n^3 \Ek \left(\int h_R^2(Z^k_1, u) F (\diff u)\right)^2 \to 0, \text{ and} \label{BG_cond3}\\ 
     &n^4 \Ek \left(\int h_R(Z^k_1, u) h_R(Z^k_2, u) F (\diff u)\right)^2 \to 0. \label{BG_cond4}
\end{align}
The following lemma is devoted to showing that \eqref{BG_cond5}--\eqref{BG_cond4} hold. Therefore, we can obtain that $R_2$ is asymptotic normal. Combining $ R_1$ with $ R_2$ is asymptotically normal, and since $R_1$ is orthogonal to $R_2$, this leads to $M_2$ being asymptotically normal. Then, by Lemma \ref{lem:variance}, we reach our conclusion.

\begin{lemma}\label{lem:BGconditions}
Recall that we define the following notation in the proof of Lemma \ref{lem:BG}:
\[h_R(Z^k_i,Z^k_j) = \frac{1}{\sqrt n(n_{[k]}-1)}\{X_i^{k\top}\Sigmakx^{-1}X_j^k Y_j^k(1) + X_j^{k\top}\Sigmakx^{-1}X^k_i Y^k_i(1) - X_i^{k\top}\ProjBeta{1} - X_j^{k\top}\ProjBeta{1}\}.\]
Under Assumptions \ref{ap1}--\ref{ap3}, conditional on $B_i=k$, we have: as $n \rightarrow \infty$, almost surely,
\begin{enumerate}[label = (\alph*)]
\item \eqref{BG_cond5} holds: $n^2 \Ek\{h_R^4(Z^k_1,Z^k_2)\} \to 0$;
\item \eqref{BG_cond3} holds: $n^3 \Ek \left(\int h_R^2(Z^k_1, u) F (\diff u)\right)^2 \to 0$;
\item \eqref{BG_cond4} holds: $n^4 \Ek \left(\int h_R(Z^k_1, u) h_R(Z^k_2, u) F (\diff u)\right)^2 \to 0$.
\end{enumerate}
\end{lemma}

\begin{proof}
\noindent\textbf{Condition \eqref{BG_cond5}:}
\begin{align*}
    h_R&(Z^k_1,Z^k_2)^4 \\
    &= \frac{1}{n^2(n_{[k]}-1)^4}\left\{X^{k\top}_1\left(\Sigmakx^{-1}X^{k}_2Y^k_2(1) - \ProjBeta{1}\right) + X^{k\top}_2\left(\Sigmakx^{-1}X_1^kY_1^k(1)-\ProjBeta{1}\right)\right\}^4 \\
    & =  \frac{1}{n^2(n_{[k]}-1)^4}\Bigg[\left\{X^{k\top}_1\left(\Sigmakx^{-1}X^{k}_2Y^k_2(1) - \ProjBeta{1}\right)\right\}^4 \\
    &\quad + 4\left\{X^{k\top}_1\left(\Sigmakx^{-1}X^{k}_2Y^k_2(1) - \ProjBeta{1}\right)\right\}^3\left\{X^{k\top}_2\left(\Sigmakx^{-1}X_1^kY_1^k(1)-\ProjBeta{1}\right)\right\}\\
    &\quad + 6\left\{X^{k\top}_1\left(\Sigmakx^{-1}X^{k}_2Y^k_2(1) - \ProjBeta{1}\right)\right\}^2\left\{X^{k\top}_2\left(\Sigmakx^{-1}X_1^kY_1^k(1)-\ProjBeta{1}\right)\right\}^2 \\
    &\quad + 4\left\{X^{k\top}_1\left(\Sigmakx^{-1}X^{k}_2Y^k_2(1) - \ProjBeta{1}\right)\right\}\left\{X^{k\top}_2\left(\Sigmakx^{-1}X_1^kY_1^k(1)-\ProjBeta{1}\right)\right\}^3 \\
    &\quad + \left\{X^{k\top}_2\left(\Sigmakx^{-1}X_1^kY_1^k(1)-\ProjBeta{1}\right)\right\}^4\Bigg] \\
    &= O\left\{ \frac{1}{n^2(n_{[k]}-1)^4}(X_1^{k\top}\Sigmakx^{-1}X^{k}_2)^4(Y^k_1(1) + Y^k_2(1))^4\right\},
\end{align*}
Then we need to show
\[n^{-4} \Ek\{(X_1^{\top}\Sigmakx^{-1}X_2)^4(Y^k_1(1) + Y^k_2(1))^4\} = o (1),\]
which follows from \Holder's inequality and Assumption \ref{ap2} as follows:  
\begin{align*}
&n^{-4} \Ek\{(X_1^{k\top}\Sigmakx^{-1}X^k_2)^4(Y^k_1(1) + Y^k_2(1))^4\}\\
&= n^{-4} \Ek\{(X_1^{k\top}\Sigmakx^{-1}X^k_2)^4 \Ek\{(Y^k_1(1) + Y^k_2(1))^4|X_1,X_2\}\}\\
&\leq C_1 n^{-4} \Ek\{(X_1^{k\top}\Sigmakx^{-1}X^k_2)^2(X_1^{k\top}\Sigmakx^{-1}X^k_2)^2\}\\
&\leq C_1n^{-4} \Vert (X_1^{k\top}\Sigmakx^{-1}X^k_2)^2 \Vert_{\infty} \Ek\{(X_1^{k\top}\Sigmakx^{-1}X^k_2)^2\}\\
&\leq C_1M^4n^{-4} (1^{\top}\Sigmakx^{-1}1)^2\Ek\{(X_1^{k\top}\Sigmakx^{-1}X^k_1)\} \\
&\leq C_1M^4n^{-4}p^3 = O (p^{3} / n^{4}) = o (1).
\end{align*}
\noindent\textbf{Condition \eqref{BG_cond3}:}

The integration is a conditional expectation, so we have 
\begin{align*}
    \int &h_R^2(Z^k_1, u) F (\diff u) \\
    &=  \frac{1}{n(n_{[k]}-1)^2}\Big\{\tr\Big[\Ek\{(Y^{k}(1))^2 X^kX^{k\top}\}\Sigmakx^{-2}X^k_1X_1^{k\top}Y^2(1)\Big] \\
    &\quad +Y^k_1X^{k\top}_1\Sigmakx^{-1}X^k_1Y^k_1 + \ProjBeta{1}^{\top} X_1^k X_1^{k\top} \ProjBeta{1} + \ProjBeta{1}^{\top}\Sigmakx\ProjBeta{1} \\
    &\quad + \tr(\Ek\{Y^k(1)X^kX^{k\top}\}\Sigmakx^{-2}X^k_1X_1^{k\top}Y^k(1)) - 2 \ProjBeta{1}^{\top}X^k_1X_1^{k\top}\ProjBeta{1} \\
    &\quad- 2 X_1^{k\top}\Sigmakx^{-1}\Ek\{X^kX^{k\top}Y^k(1)\}\ProjBeta{1}\Big\} 
\end{align*}

By Assumption \ref{ap2} and \ref{ap3}, $Y^2(1)$ and $Y(1)$ are bounded conditional on $X_1$, so that $\Ek\{Y^2(1) XX^{\top}\}$ and $\Ek\{Y(1) XX^{\top}\}$ are equivalent to $c\Sigmakx$, then we have 
\begin{equation*}
     \int h_R^2(Z^k_1, u) F (\diff u) = O_p\Bigg( \frac{1}{n(n_{[k]}-1)^2}Y(1)X^{\top}\Sigmakx^{-1} XY(1) +X^{\top}\Sigmakx^{-1} XY(1)+ XX^{\top} \Bigg).
\end{equation*}
Then we have 
\[\Ek \left(\int h_R^2(Z^k_1, u) F (\diff u)\right)^2 = O\Big(\frac{1}{n^2(n_{[k]}-1)^4}\Ek\{(X^{\top}\Sigmakx^{-1}X)^2Y^4(1)\}\Big),\]
Then we have
\[\frac{n^3}{n^2(n_{[k]}-1)^4} \Ek\{(X^{\top} \Sigmakx^{-1}X)^2Y^4(1)\} \leq n^{-3}p^2=o(1).\]

\noindent\textbf{Condition \eqref{BG_cond4}:}

First, we have
\begin{align*}
    \int &h_R(Z^k_1, u) h_R(Z^k_2, u) F (\diff u) \\
    &=\frac{1}{n(n_{[k]}-1)^2}\Big[ 2X_1^{k\top}\Sigmakx^{-1}\Ek\{Y(1)X\Sigmakx^{-1}X^{\top}Y(1)\}\Sigmakx^{-1}X^k_2  + X_1^{k\top}\Sigmakx^{-1}X^k_2 \Ek\{X\Sigmakx^{-1}X^{\top}Y(1)\} \\
    &\quad - X_1^{k\top}\Ek\{X\Sigmakx^{-1}X^{\top}Y(1)\}\ProjBeta{1} - X_2^{k\top}\Ek\{X\Sigmakx^{-1}X^{\top}Y(1)\}\ProjBeta{1} \\
    &\quad + Y^k_2(1)X_2^{k\top}\Sigmakx^{-1}X_1^kY_1^k(1) - \ProjBeta{1}^{\top}\{X^k_1 Y^k_1(1) +X_2^k Y^k_2(1))  + \ProjBeta{1}^{\top} \Sigmakx \ProjBeta{1}\\
    &\quad - \ProjBeta{1}^{\top} X^k_1X^{k\top}_2\ProjBeta{1}\Big].
\end{align*}
As
\[\Ek\{(X\Sigmakx^{-1}X^{\top})Y^2(1)\} = O(1) \text{ and } \Ek\{(X\Sigmakx^{-1}X^{\top})Y(1)\} = O(1),\]
we have 
\[E \left(\int h(X_1, u) h(X_2, u) F (\diff u)\right)^2 = O\Big(\frac{p}{n^2(n_{[k]}-1)^4}\Big). \] 
Then we have
\[n^4 E \left(\int h_R(Z^k_1, u) h_R(Z^k_2, u) F(\diff u)\right)^2 \to 0.\]
\end{proof}

Therefore, we have obtained that $M_2^k$ is asymptotically normal, and the remaining part is to compute its asymptotic variance. To simplify our calculation, we assume that $X^k_i$ is centered in stratum $k$, denoted as $\tilde X_i$, without loss of generality. Then we have
    \begin{align*}
    \var&\{M^k_2|B_i=k\} \\
    &= \frac{1}{n(n_{[k]}-1)^2}\frac{1}{\pi^4_{[k]}}\Bigg[ \sum\limits_{1\leq i\neq j \leq n_{[k]}} (A_i-\pi_{[k]})^2 A_j^2 \Ek\{\tilx_i^{\top} \Sigmakx^{-1} \tilx_j Y_j(1)\}^2 \\
    &\quad + \sum\limits_{1\leq i\neq j\leq n_{[k]}} (A_i-\pi_{[k]})(A_j-\pi_{[k]})A_iA_j \Ek\{\tilx_i^{\top}\Sigmakx^{-1}\tilx_j Y_j(1)Y_i(1) \tilx_i^{\top}\Sigmakx^{-1}\tilx_j\} \\
    &\quad + \sum\limits_{1 \leq i\neq j\neq l\leq n_{[k]}} (A_i-\pi_{[k]})^2A_jA_l \Ek\{\tilx^{\top}_i \Sigmakx^{-1} \tilx_j Y_j(1) Y_l(1) \tilx_l^{\top} \Sigmakx^{-1} \tilx_i\}
    \Bigg] \\
    &= \frac{1}{n(n_{[k]}-1)^2}\frac{1}{\pi^4_{[k]}} \Bigg[  \sum\limits_{1\leq i\leq j \neq n_{[k]}} (A_i-\pi_{[k]})^2 A_j^2 \Ek\{Y_j(1)\tilx^{\top}_j\Sigmakx^{-1} \tilx_j Y_j(1)\} \\
    &\quad + \sum\limits_{1\leq i\neq j\leq n_{[k]}} (A_i-\pi_{[k]})(A_j-\pi_{[k]})A_iA_j \Ek\{(\tilx_i^{\top}\Sigmakx^{-1}\tilx_j)^2 Y_j(1)Y_i(1)\} \\
    &\quad + \sum\limits_{1 \leq i\neq j\neq l\leq n_{[k]}} (A_i-\pi_{[k]})^2A_jA_l \ProjBeta{1}^{\top} \Sigmakx \ProjBeta{1}
    \Bigg] \\
    &= \frac{p_{[k]} }{n_{[k]}(n_{[k]}-1)^2} \frac{1}{\pi^4_{[k]}}\Bigg[\sum\limits_{1\leq i\neq j\leq n_{[k]}} \{(1-2\pi_{[k]})A_iA_j + \pi^2_{[k]}A_j\}\Ek\{Y_j(1)\tilx^{\top}_j\Sigmakx^{-1} \tilx_j Y_j(1)\} \\
    &\quad + \sum\limits_{1\leq i\neq j\leq n_{[k]}} (1-\pi_{[k]})^2 A_i A_j \Ek\{(\tilx_i^{\top}\Sigmakx^{-1}\tilx_j)^2 Y_j(1)Y_i(1)\} \\
    &\quad + \sum\limits_{1 \leq i\neq j\neq l\leq n_{[k]}} \{(1-2\pi_{[k]})A_iA_jA_l + \pi^2_{[k]}A_jA_l\} \ProjBeta{1}^{\top} \Sigmakx \ProjBeta{1}
    \Bigg]  \\
    & = \frac{p_{[k]}}{n_{[k]}-1} \frac{(1-\pi_{[k]})}{\pi^2_{[k]}} \Ek\{Y_j(1)\tilx^{\top}_j\Sigmakx^{-1} \tilx_j Y_j(1)\} \\
    &\quad + \frac{p_{[k]}}{n_{[k]}-1} \frac{(1-\pi_{[k]})^2}{\pi^2_{[k]}}  \Ek\{(\tilx_i^{\top}\Sigmakx^{-1}\tilx_j)^2 Y_j(1)Y_i(1)\} \\
    &\quad +\frac{p_{[k]}(1-\pi_{[k]})}{\pi_{[k]}} \ProjBeta{1}^{\top} \Sigmakx \ProjBeta{1},
\end{align*}
where $\ProjBeta{1} = \Sigmakx^{-1} \eta_{[k]} (1)$ is the population-level regression coefficient. Because $M_2^k$'s are mutually independent across different values of $k$, we have 
\[\var\{M_2\} = \sum\limits_{k=1}^K \var\{M_2^k|B_i=k\} = \zetaSec{Y(1)} + \sigma^2_{\mathrm{I},X(1)}.\]
\end{proof}

\subsection{Results related to the sample Gram matrix estimator}
\label{app:matrices}

We first record the following well-known Sherman-Morrison formula \citep{derezinski2021sparse} that is used throughout the paper.

\begin{lemma}
\label{lem:Sherman-Morrison}
Given a real-valued $p \times p$-matrix $A$ and two $p$-dimensional vectors $u$ and $v$, the following identities hold as long as all the inverses appeared are well defined:
\begin{align}
& (A + u v^{\top})^{-1} = A^{-1} - \frac{A^{-1} u v^{\top} A^{-1}}{1 + v^{\top} A^{-1} u}, \label{eq:Sherman-Morrison 1} \\
& v^{\top} (A + u v^{\top})^{-1} = \frac{v^{\top} A^{-1}}{1 + v^{\top} A^{-1} u}. \label{eq:Sherman-Morrison 2}
\end{align}
\end{lemma}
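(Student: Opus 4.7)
The plan is to treat this as a purely algebraic verification, since both identities are classical and do not require any probabilistic input. For the first identity \eqref{eq:Sherman-Morrison 1}, I would simply multiply the proposed expression on the left by $A + u v^{\top}$ and check that the result is the identity matrix. Expanding gives
\begin{align*}
(A + u v^{\top}) \left[ A^{-1} - \frac{A^{-1} u v^{\top} A^{-1}}{1 + v^{\top} A^{-1} u} \right]
= I + u v^{\top} A^{-1} - \frac{(u + u v^{\top} A^{-1} u)\, v^{\top} A^{-1}}{1 + v^{\top} A^{-1} u},
\end{align*}
and the key observation is that the scalar $v^{\top} A^{-1} u$ can be factored out of the numerator so that $u + u v^{\top} A^{-1} u = u(1 + v^{\top} A^{-1} u)$, causing the denominator to cancel and leaving $I + u v^{\top} A^{-1} - u v^{\top} A^{-1} = I$. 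An analogous computation with the product order reversed confirms that the expression is the two-sided inverse. The only hypothesis needed is that $1 + v^{\top} A^{-1} u \neq 0$, which is implicit in ``all the inverses appeared are well defined.''

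For the second identity \eqref{eq:Sherman-Morrison 2}, I would left-multiply the formula from \eqref{eq:Sherman-Morrison 1} by $v^{\top}$ and use that $v^{\top} A^{-1} u$ is a scalar. This gives
\begin{align*}
v^{\top} (A + u v^{\top})^{-1}
= v^{\top} A^{-1} - \frac{(v^{\top} A^{-1} u)\, v^{\top} A^{-1}}{1 + v^{\top} A^{-1} u}
= v^{\top} A^{-1} \left( 1 - \frac{v^{\top} A^{-1} u}{1 + v^{\top} A^{-1} u} \right),
\end{align*}
and the bracketed factor simplifies to $(1 + v^{\top} A^{-1} u)^{-1}$, which yields the claim.

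There is no real obstacle here: the entire argument is a one-line verification followed by a scalar manipulation, and the statement is standard. I include it in the appendix only because it is invoked repeatedly, in particular in the leave-two-out expansions \eqref{S-M-twice}--\eqref{S-M-twice-consequence} used in the proof of Theorem~\ref{thm:main}, where the ability to peel off two rank-one updates from $\HSigmakx$ in a controlled way is what drives the bias analysis.
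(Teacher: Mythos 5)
Your verification is correct and complete: multiplying the candidate inverse by $A + u v^{\top}$ and using that $v^{\top} A^{-1} u$ is a scalar so that $u + u v^{\top} A^{-1} u = u(1 + v^{\top} A^{-1} u)$ is exactly the standard argument, and the second identity does follow by left-multiplying the first by $v^{\top}$. The paper itself offers no proof of this lemma (it is recorded as a classical fact with a citation), so there is nothing to compare against beyond noting that your one-line check is the canonical way to establish it.
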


Next, the following version of the matrix Bernstein or moment inequality \citep{rudelson1999random, tropp2012user, vershynin2012close, bandeira2023matrix} is frequently used when controlling the error of using the sample Gram matrix $\hat{\Sigma}_{[k]}$ as an estimator of the population Gram matrix $\Sigma_{[k]}$, since we consider the superpopulation model here.

\begin{lemma}
\label{lem:matrix Bernstein}
Given a sequence $\{W_{i}\}_{i = 1}^{n}$ of independent and symmetric random matrices with dimension $p$. Assume that each matrix satisfies:
\begin{align*}
E [W_i] = 0 \text{ and } \lambda_{\max} (W_i) \lesssim k \text{ almost surely,}
\end{align*}
for some finite $k$, which is possibly a sequence $k = k (n)$ growing with $n$. Let $S_{n} = \sum\limits_{i = 1}^{n} W_{i}$. Then for all $t \geq 0$,
\begin{align*}
P \left( \lambda_{\max} (S_{n}) \geq t \right) \leq p \cdot \exp \left( - \frac{t^{2} / 2}{\nu^{2} + k t / 3} \right), \text{ where } \nu^{2} = \Big\Vert \sum_{i = 1}^{n} E [W_i^{2}] \Big\Vert_{\op}.
\end{align*}
In particular, when $k = p$, the following hold:
\begin{align*}
\frac{1}{n} S_{n} = O_{P} \left( \frac{p}{n} + \sqrt{\frac{p}{n}} \right).
\end{align*}
\end{lemma}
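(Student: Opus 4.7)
The plan is to prove the tail bound via the classical Laplace transform method for random matrices. For any $\theta > 0$, Markov's inequality yields
\[
P(\lambda_{\max}(S_n) \geq t) \leq e^{-\theta t} \, E\{\tr \exp(\theta S_n)\}.
\]
To decouple the dependence across $i$, I would apply Tropp's master inequality (a consequence of Lieb's concavity theorem for the trace-exponential function), which gives $E\{\tr \exp(\theta S_n)\} \leq p \cdot \exp\bigl(\lambda_{\max}\bigl(\sum_i \log E\{\exp(\theta W_i)\}\bigr)\bigr)$. This step is what makes the argument genuinely ``matrix'' rather than scalar: the $W_i$'s need not commute, so a naive factorisation of the matrix moment generating function is unavailable, and Lieb's concavity theorem is precisely what licenses the decoupling.

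Next, for each per-summand mgf, I would combine the scalar Bernstein inequality $e^x \leq 1 + x + (x^2/2)/(1 - x/3)$, valid for $x < 3$, with functional calculus on the bounded symmetric matrix $W_i$ to obtain the semidefinite bound
\[
\log E\{\exp(\theta W_i)\} \preceq \frac{\theta^2 / 2}{1 - k \theta / 3} \, E\{W_i^2\}, \qquad 0 < \theta < 3/k.
\]
Summing over $i$, taking $\lambda_{\max}$, and recalling $\nu^2 = \bigl\Vert \sum_i E\{W_i^2\} \bigr\Vert_{\op}$ bounds the exponent by $\nu^2 \theta^2 / (2(1 - k \theta/3))$. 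Plugging this back into the Laplace bound and choosing the classical optimiser $\theta = t / (\nu^2 + kt/3)$ yields exactly the stated tail inequality.

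For the ``in particular'' consequence with $k = p$, I would specialise to the sample-Gram setting where the lemma is actually invoked, namely $W_i = X_i X_i^{\top} - \Sigmakx$ with bounded $X_i$ (Assumption~\ref{ap2}(1)) and $\Vert \Sigmakx \Vert_{\op}$ bounded (Assumption~\ref{ap3}). Then $\lambda_{\max}(W_i) \lesssim p$ automatically, and the key additional estimate is $E\{W_i^2\} \preceq E\{\Vert X_i \Vert_2^2 \, X_i X_i^{\top}\} \preceq p \, \Sigmakx$, which gives $\nu^2 \lesssim n p$. Choosing $t = C(\sqrt{np \log p} + p \log p)$ drives the right-hand side of the master tail bound to zero, so that $\lambda_{\max}(S_n) = O_P(\sqrt{np} + p)$ once the logarithmic factors are absorbed into the $O_P$ notation; dividing by $n$ then recovers the advertised rate $n^{-1} S_n = O_P(\sqrt{p/n} + p/n)$.

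The only genuinely non-routine ingredient is the lift from the scalar to the matrix Bernstein mgf estimate via Lieb's concavity theorem; everything else is either Markov's inequality, a one-line scalar Bernstein-type optimisation, or the elementary matrix-ordering calculation that converts the bound on $X_i$ into a bound on $E\{W_i^2\}$.
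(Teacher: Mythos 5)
The paper states this lemma without proof, citing \citet{tropp2012user}, \citet{vershynin2012close}, \citet{rudelson1999random} and \citet{bandeira2023matrix}, so there is no in-paper argument to compare against. Your derivation of the tail bound itself is the standard Tropp route (Laplace transform, Lieb's concavity theorem to decouple the non-commuting summands, the scalar Bernstein mgf estimate $e^x \leq 1 + x + (x^2/2)/(1-x/3)$ lifted by functional calculus, and the optimiser $\theta = t/(\nu^2 + kt/3)$), and that part is correct. Your reduction of the ``in particular'' clause to the sample-Gram case $W_i = X_iX_i^{\top} - \Sigmakx$ with $E\{W_i^2\} \preceq M^2 p\,\Sigmakx$, hence $\nu^2 \lesssim np$, is also the right specialisation.

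The gap is in the last step. With the dimensional prefactor $p$ in front of the exponential, the tail bound only gives, for $t = M(\sqrt{np}+p)$, a probability bound of the form $p\,e^{-c_M}$ with $c_M$ a constant depending on $M$ but not on $n$ or $p$; since $p = p(n) \to \infty$, no fixed $M$ makes this small, so the bound does \emph{not} deliver $\lambda_{\max}(S_n) = O_P(\sqrt{np}+p)$. To beat the prefactor you must take $t \gtrsim \sqrt{np\log p} + p\log p$, exactly as you do, and the resulting rate is $O_P(\sqrt{(p\log p)/n} + (p\log p)/n)$ --- the $\log p$ cannot be ``absorbed into the $O_P$ notation,'' because $O_P(a_n)$ requires a constant multiple of $a_n$, not a slowly diverging one. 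The log-free rate $\sqrt{p/n} + p/n$ asserted in the lemma is true for bounded (hence sub-Gaussian) rows, but it is obtained by a genuinely different argument: either the $\varepsilon$-net/chaining proof for sample covariance matrices in \citet{vershynin2012close}, or the dimension-free matrix concentration bounds of \citet{bandeira2023matrix} --- which is presumably why those references appear alongside Tropp's. As written, your proof establishes the displayed tail inequality but only a logarithmically weaker version of the final $O_P$ claim; to close the lemma you would need to replace the last step by one of these net-based or dimension-free arguments (or else weaken the conclusion to carry the $\log p$, which would in turn tighten the condition $p = o(n)$ to $p\log p = o(n)$ wherever the lemma is invoked).
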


It is worth noting that for $\{X_{i}\}_{i = 1}^{n}$ satisfying Assumptions~\ref{ap1}--\ref{ap3}, we have
\begin{align}
\label{matrix l_inf}
\lambda_{\max} (X_i X_i^{\top}) \lesssim p \text{ almost surely.}
\end{align}

\begin{lemma}\label{lem: sample covariance1}
Suppose that $\{X_i\}_{i=1}^n$ are i.i.d.\ random vectors. Under Assumptions~\ref{ap1}--\ref{ap3}, when $p=o(n)$, given any two squared integrable functions $f_1, f_2: \mathbb{R}^{p} \rightarrow \mathbb{R}$ of $X$, we have
\begin{align*}
& E\{f_1 (X_1) X^{\top}_1\HSigmaX^{-1}_{-(1,2)}X_2 f_2 (X_2)\} = O (1), \text{ and } \\
& E\big[f_1 (X_1) X^{\top}_1\{\HSigmaX^{-1}_{-(1,2)} - \SigmaX^{-1}\}X_2 f_2 (X_2)\big] = o (1).
\end{align*}
\end{lemma}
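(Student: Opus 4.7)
The plan is to exploit the independence between $(X_1,X_2)$ and the leave-two-out sample Gram matrix $\HSigmaX_{-(1,2)}$, which depends only on $\{X_i\}_{i\geq 3}$. Conditioning on $\HSigmaX_{-(1,2)}$ and letting $g_j:=E\{f_j(X)X\}\in\mathbb{R}^{p}$, the tower property collapses the trilinear expectation to a deterministic bilinear form:
\[E\{f_1(X_1)X_1^{\top}\HSigmaX^{-1}_{-(1,2)}X_2 f_2(X_2)\} = g_1^{\top}E\{\HSigmaX^{-1}_{-(1,2)}\}\,g_2.\]
Both claims then reduce to controlling $g_1^{\top}\HSigmaX^{-1}_{-(1,2)}g_2$ and $g_1^{\top}(\HSigmaX^{-1}_{-(1,2)}-\SigmaX^{-1})g_2$ in expectation against deterministic vectors $g_1,g_2$.

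The key analytic ingredient is the $L_2$-projection bound $g_j^{\top}\SigmaX^{-1}g_j\leq E\{f_j(X)^2\}$, which I would derive by recognizing $\SigmaX^{-1}g_j$ as the coefficient of the $L_2$-best linear predictor of $f_j(X)$ from $X$: the Pythagorean identity $E\{f_j^2\}=g_j^{\top}\SigmaX^{-1}g_j+E\{(f_j(X)-X^{\top}\SigmaX^{-1}g_j)^2\}$ immediately gives the inequality. Equivalently, the whitened vectors $\tilde g_j:=\SigmaX^{-1/2}g_j$ satisfy $\|\tilde g_j\|_2=O(1)$, which is what prevents the naive estimate $\|g_j\|_2=O(\sqrt{p})$ (via $\tr(\SigmaX)\lesssim p$) from producing a spurious $p$ factor downstream.

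For the matrix side, I would apply Lemma~\ref{lem:matrix Bernstein} to the i.i.d.\ centered summands $\{X_iX_i^{\top}-\SigmaX\}_{i\geq 3}$, using $\lambda_{\max}(X_iX_i^{\top})\lesssim p$ from Assumption~\ref{ap2}, to obtain $\|\HSigmaX_{-(1,2)}-\SigmaX\|_{\op}\lesssim\sqrt{p\log p/n}$ on a high-probability event $\mathcal{E}_n$ with $P(\mathcal{E}_n^{c})\leq p\exp(-cn/p)=o(1)$ under $p=o(n)$. On $\mathcal{E}_n$, standard resolvent perturbation combined with Assumption~\ref{ap3} yields $\|\SigmaX^{1/2}\HSigmaX^{-1}_{-(1,2)}\SigmaX^{1/2}-I\|_{\op}=o(1)$, whence
\[\bigl|g_1^{\top}\HSigmaX^{-1}_{-(1,2)}g_2\bigr|=\bigl|\tilde g_1^{\top}\bigl(\SigmaX^{1/2}\HSigmaX^{-1}_{-(1,2)}\SigmaX^{1/2}\bigr)\tilde g_2\bigr|\leq(1+o(1))\|\tilde g_1\|_2\|\tilde g_2\|_2=O(1),\]
and the analogous identity with $I$ subtracted on the inner matrix delivers the $o(1)$ conclusion for the second claim.

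The main obstacle is the complementary event $\mathcal{E}_n^{c}$, on which $\|\HSigmaX^{-1}_{-(1,2)}\|_{\op}$ is a priori unbounded because $\HSigmaX_{-(1,2)}$ could be nearly singular. My plan is to combine the boundedness of the covariates (Assumption~\ref{ap2}) with a lower-tail matrix Chernoff bound to obtain $P(\lambda_{\min}(\HSigmaX_{-(1,2)})\leq\kappa_l/2)\leq e^{-cn}$, and then pair this exponentially small probability with a crude polynomial-in-$n$ moment bound on $\|\HSigmaX^{-1}_{-(1,2)}\|_{\op}$ via Cauchy-Schwarz to show $E\{\bigl|g_1^{\top}\HSigmaX^{-1}_{-(1,2)}g_2\bigr|\mathbbm{1}_{\mathcal{E}_n^{c}}\}=o(1)$. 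This leave-out plus decoupling strategy is precisely the probabilistic backbone that also feeds into Lemma~\ref{lem:repeated Sherman-Morrison} and the main proof of Theorem~\ref{thm:main}.
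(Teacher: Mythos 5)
Your proposal follows essentially the same route as the paper's proof: decouple via the independence of $(X_1,X_2)$ from $\HSigmaX_{-(1,2)}$ to reduce everything to the deterministic bilinear form $g_1^{\top}E\{\HSigmaX^{-1}_{-(1,2)}-\SigmaX^{-1}\}g_2$, control $\|g_j\|$ through the $L_2$-projection (best-linear-predictor) inequality $g_j^{\top}\SigmaX^{-1}g_j\le \|f_j\|_{L_2(P)}^2$, and control the matrix factor by matrix Bernstein together with Weyl's inequality and a resolvent-type perturbation under $p=o(n)$. Your explicit treatment of the complement event is more careful than the paper (which simply works on the high-probability event where $\lambda_{\min}(\HSigmaX)\ge \lambda_{\min}(\SigmaX)/2$), though be aware that a polynomial-in-$n$ moment bound on $\norm{\HSigmaX^{-1}_{-(1,2)}}{\op}$ need not exist for bounded (e.g.\ discrete) covariates, so on that event one should restrict to the invertibility set or argue with a pseudo-inverse rather than via Cauchy--Schwarz on the raw operator norm.
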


\begin{proof}
We first show that $E\{f (X_1) X^{\top}_1\SigmaX^{-1}X_2 f_2 (X_2)\} = O (1)$, which is a consequence of the following sequence of (in)equalities: 
\begin{align*}
E\{f_1 (X_1) X^{\top}_1\SigmaX^{-1}X_2 f_2 (X_2) \} &= E\{f_1 (X_1) X^{\top}_1\SigmaX^{-1}\SigmaX \SigmaX^{-1}X_2 f_2 (X_2)\} \\
& = E\{f_1 (X) X\}\SigmaX^{-1}E\{XX^{\top}\}\SigmaX^{-1}E\{X f_2 (X)\}\\
& = E [\Pi(f_1|X) \cdot \Pi(f_2|X)] \\
& \leq \Vert \Pi(f_1|X) \Vert_{L_2 (P)} \cdot \Vert \Pi(f_2|X) \Vert_{L_2 (P)} \\
& \leq \Vert f_1 \Vert_{L_2 (P)} \cdot \Vert f_2 \Vert_{L_2 (P)} = O(1),
\end{align*}
where we introduce the notation that $\Pi(f | X) = X^{\top}\SigmaX^{-1} E\{X f (X)\}$ is the $L_{2} (P)$-projection of $f$ on the linear span of $X$. Then we can show that   
\[E\big[f_1 (X_1) X^{\top}_1\{\HSigmaX^{-1}_{-(1,2)} - \SigmaX^{-1}\}X_2 f_2 (X_2)\big] = o(1),\]
which follows from:
\begin{align*}
& |E\{f_1 (X_1) X^{\top}_1\{\HSigmaX^{-1}_{-(1,2)} - \SigmaX^{-1}\}X_2 f_2 (X_2)\}| = |E\{f_1 (X) X^{\top}\} E\{\HSigmaX^{-1}_{-(1,2)} - \SigmaX^{-1}\}E\{X f_2 (X)\}| \\
& \lesssim \Vert \Pi(f_1|X) \Vert_{L_2 (P)} \cdot \Vert \Pi(f_2|X) \Vert_{L_2 (P)} \cdot \norm{E\{\HSigmaX^{-1}_{-(1,2)} - \SigmaX^{-1}\}}{\op} \lesssim \norm{E\{\HSigmaX^{-1}_{-(1,2)} - \SigmaX^{-1}\}}{\op}.
\end{align*}
As $X_i$ is uniformly bounded (hence sub-Gaussian), by Lemma~\ref{lem:matrix Bernstein} and \eqref{matrix l_inf}, we have 
\[\norm{\HSigmaX - \SigmaX}{{\rm{op}}}= O_P \left( \frac{p}{n} + \sqrt{\frac{p}{n}} \right).\]
By Weyl's inequality,
\[|\lambda_{\min}(\HSigmaX) - \lambda_{\min}(\SigmaX)|\leq \norm{\HSigmaX - \SigmaX}{{\rm{op}}}.\]
Since we assume that $p = o(n)$, with probability converging to 1,
\begin{align*}
\norm{\HSigmaX - \SigmaX}{{\rm{op}}} \leq \frac{\lambda_{\min} (\Sigma)}{2}.
\end{align*}
Then we have, with probability converging to 1,
\[\lambda_{\min}(\HSigmaX) \geq \frac{\lambda_{\min}(\SigmaX)}{2}\]
and 
\[\norm{\HSigmaX^{-1}}{\op} \leq \frac{2}{\lambda_{\min}(\SigmaX)}.\]
Therefore, we have
\begin{align*}
    \norm{E\{\HSigmaX^{-1}_{-(1,2)} - \SigmaX^{-1}\}}{\op} &= \norm{E\big[\HSigmaX^{-1}\{\HSigmaX_{-(1,2)} - \SigmaX\}\SigmaX^{-1}\big]}{\op} \\
    &\leq E [\norm{\HSigmaX^{-1}}{\op} \norm{\{\HSigmaX_{-(1,2)} - \SigmaX\}}{\op}] \norm{\SigmaX^{-1}}{\op} \\
    &= o(1).
\end{align*}
Thus, we finish the proof of $E\big[f_1 (X_1) X^{\top}_1\{\HSigmaX^{-1}_{-(1,2)} - \SigmaX^{-1}\}X_2 f_2 (X_2)\big] = o (1)$.
\end{proof}

\iffalse
\begin{lemma}\label{lem:sample variance2}
Suppose that $\{X_i\}_{i=1}^n$ are i.i.d. random vectors with dimension $p$. Let $\HSigmaX_{-(1,2)} = n^{-1} \sum\limits_{i=3}^n X_i X_i^{\top}$ and $f: \mathbb{R}^p \times \mathbb{R}^p \to \mathbb{R}$ be a function that admits the following decomposition 
\[f(X_1,X_2) = \sum_{i=1}^{m_1}\sum_{j=1}^{m_2} g_i(X_1)g_j(X_2) + o_P(1),\]
where $g_i,g_j:\mathbb{R}^p\to \mathbb{R}$ satisfies $\norm{g_i(X_1)}{{L_2(P)}}\norm{g_j(X_2)}{{L_2(P)}}=o(1)$ for $m_1$ and $m_2$ being finite, then we have 
\[E\{f(X_1,X_2) X_1^{\top}\HSigmaX_{-(1,2)}^{-1} X_2\} = o(1).\]
\end{lemma}

\begin{proof}
By the decomposition we defined for $f(X_1,X_2)$ and Lemma~\ref{lem: sample covariance1}, we have 
\begin{align*}
& \quad E\{f(X_1,X_2) X_1^{\top}\HSigmaX_{-(1,2)}^{-1} X_2\} \\
& = E\Big\{\sum_{i=1}^{m_1}\sum_{j=1}^{m_2} g_i(X_1)g_j(X_2) X_1^{\top}\HSigmaX_{-(1,2)}^{-1} X_2\Big\}\\
& = \sum_{i=1}^{m_1}\sum_{j=1}^{m_2}\Big( E\Big\{ g_i(X_1) X_1^{\top}(\HSigmaX_{-(1,2)}^{-1} - \SigmaX^{-1}) X_2g_j(X_2)\Big\} + E\Big\{ g_i(X_1) X_1^{\top}\SigmaX^{-1} X_2g_j(X_2)\Big\} \Big) \\
&\lesssim \sum_{i=1}^{m_1}\sum_{j=1}^{m_2} \Big\{ \Vert \Pi(g_i|X) \Vert_{L_2 (P)} \cdot \Vert \Pi(g_j|X) \Vert_{L_2 (P)} \cdot \norm{E\{\HSigmaX^{-1}_{-(1,2)} - \SigmaX^{-1}\}}{\op} \\
& \quad \quad + \Vert \Pi(g_i|X) \Vert_{L_2 (P)} \cdot \Vert \Pi(g_j|X) \Vert_{L_2 (P)} \Big\} \\
& \lesssim \max_{i,j} \Vert g_i\Vert_{2}\Vert g_j\Vert_{2} = o(1).
\end{align*}
Here, the third inequality is a consequence of Lemma~\ref{lem: sample covariance1}.
\end{proof}
\fi

\begin{lemma}\label{lem:leave-out-matrix}
Suppose that $\{X_i\}_{i=1}^n$ are i.i.d.\ random vectors. $\SigmaX = E\{X_iX_i^{\top}\}$, $\HSigmaX = n^{-1} \sum_{i=1}^n X_iX_i^{\top}$. Under Assumptions \ref{ap1}--\ref{ap3}, when $p=o(n)$, we have
\begin{itemize}
    \item $E\{X_1^{\top}(\HSigmaX^{-1} - \SigmaX ^{-1}) X_2\}^2 = o(n)$;
    \item $E\{X_1^{\top}(\HSigmaX^{-1} - \SigmaX ^{-1}) X_2\}\{X_1^{\top}(\HSigmaX^{-1} - \SigmaX ^{-1}) X_3\} = o(1)$;
    \item $E\{X_1^{\top}(\HSigmaX^{-1} - \SigmaX ^{-1}) X_2\}\{X_3^{\top}(\HSigmaX^{-1} - \SigmaX ^{-1}) X_4\} = o(1)$.
\end{itemize}    
\end{lemma}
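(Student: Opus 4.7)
My plan is to decouple each of the three expectations via a repeated leave-out analysis of the sample Gram matrix inverse. For each bound, let $S$ denote the set of indices of the $X_i$'s appearing in the expression (so $|S|=2,3,4$ for the first, second, and third bounds), write $\hat{\Sigma} = \hat{\Sigma}_{-S} + n^{-1}\sum_{i\in S} X_i X_i^{\top}$, and apply the Sherman-Morrison formula (Lemma~\ref{lem:Sherman-Morrison}) $|S|$ times to obtain an expansion $\hat{\Sigma}^{-1} = \hat{\Sigma}_{-S}^{-1} + \mathcal{R}_S$, where $\mathcal{R}_S$ is a finite sum of rank-one matrices $\hat{\Sigma}_{-S}^{-1} X_i X_j^{\top} \hat{\Sigma}_{-S}^{-1}$ scaled by denominators of the form $n + X_l^{\top} \hat{\Sigma}_{-S}^{-1} X_l$. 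Under Assumptions~\ref{ap2}--\ref{ap3}, each such denominator is $n + O_P(p)$, so $\mathcal{R}_S$ contributes lower-order corrections that I plan to bound via the geometric-series argument used in~\eqref{key decomposition}--\eqref{key expansion} of the proof of Theorem~\ref{thm:main}. The main work is then to handle the leading term, obtained by replacing $\hat{\Sigma}^{-1}$ with $\hat{\Sigma}_{-S}^{-1}$; crucially, $\hat{\Sigma}_{-S}^{-1}$ is statistically independent of $\{X_i: i\in S\}$.

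For the leading term, I condition on $\mathcal{F}_{-S}=\sigma(\{X_j: j\notin S\})$ and use the independence of $\{X_i:i\in S\}$ from $B := \hat{\Sigma}_{-S}^{-1}-\Sigma^{-1}$, together with $E[X_iX_i^{\top}]=\Sigma$ and $E[X_i]=\mu$, to obtain
\begin{align*}
E\bigl[(X_1^{\top} B X_2)^2 \mid \mathcal{F}_{-S}\bigr] &= \tr((B\Sigma)^2),\\
E\bigl[(X_1^{\top} B X_2)(X_1^{\top} B X_3) \mid \mathcal{F}_{-S}\bigr] &= \mu^{\top} B\Sigma B\mu,\\
E\bigl[(X_1^{\top} B X_2)(X_3^{\top} B X_4) \mid \mathcal{F}_{-S}\bigr] &= (\mu^{\top} B\mu)^2.
\end{align*}
The first bound is then routine: $E[\tr((B\Sigma)^2)] \le p\, \Vert\Sigma\Vert_{\op}^2\, E[\Vert B\Vert_{\op}^2]$, and matrix Bernstein (Lemma~\ref{lem:matrix Bernstein}) together with Assumption~\ref{ap3} gives $E[\Vert B\Vert_{\op}^2]=O(p/n)$, yielding a leading-term bound of $O(p^2/n)=o(n)$ under $p=o(n)$, as required.

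The main obstacle is bounds 2 and 3, whose leading conditional expectations involve the mean vector $\mu=E[X]$. Since Assumption~\ref{ap2} only controls $\mu$ coordinate-wise, the naive estimate $\Vert\mu\Vert_2^2\le pM^2$ combined with $E[\Vert B\Vert_{\op}^2]=O(p/n)$ yields only $O(p^2/n)$ and $O(p^3/n)$, which are $o(1)$ under strictly stronger scalings than $p=o(n)$. To upgrade these bounds to $o(1)$ under the advertised rate, my plan is to expand $B$ through the resolvent identity $B=-\Sigma^{-1}D\,\hat{\Sigma}_{-S}^{-1}$ with $D=\hat{\Sigma}_{-S}-\Sigma$, write $D$ as a centered empirical average $n^{-1}\sum_{j\notin S}(X_jX_j^{\top}-\Sigma)$ plus an $O(|S|/n)\Sigma$ leave-out bias, and then compute the resulting quartic $U$-statistic-type forms in the $X_j$'s directly rather than through operator norms. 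The key moment identities to exploit are $E[(\mu^{\top}\Sigma^{-1}X)^2]=\mu^{\top}\Sigma^{-1}\mu$ and the fourth-moment bounds implicit in Assumption~\ref{ap2}; the delicate part of the proof will be to show that the $1/n^2$ prefactor from $D\otimes D$ absorbs the $\Vert\mu\Vert^2$ overhead once these cancellations are accounted for, and to verify that the Sherman-Morrison corrections $\mathcal{R}_S$ remain of smaller order throughout.
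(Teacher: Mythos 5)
Your skeleton---leave out the indices in $S$, decouple via repeated Sherman--Morrison, and reduce each expectation to a conditional expectation over the left-out $X_i$'s, which are independent of $\hat{\Sigma}_{-S}^{-1}$---is the same as the paper's, which packages exactly this bookkeeping into Lemma~\ref{lem:repeated Sherman-Morrison} and Lemma~\ref{lem: sample covariance1_plus}. Your treatment of the first bullet is correct and matches the paper's ($O(p^2/n)=o(n)$ is all that is needed there). The gap is in the second and third bullets. Having correctly reduced them to $E[\mu^{\top}B\Sigma B\mu]$ and $E[(\mu^{\top}B\mu)^2]$ with $B=\hat{\Sigma}_{-S}^{-1}-\Sigma^{-1}$ independent of $X_1,\dots,X_4$, you declare the direct bound insufficient because you estimate $\lVert\mu\rVert_2^2\le pM^2$, and you then propose a resolvent/quartic-$U$-statistic program to recover the lost factor of $p$ through "cancellations." That program, as described, does not close: both quantities are \emph{quadratic} in $B$, so the mean-zero structure of $D=\hat{\Sigma}_{-S}-\Sigma$ buys you nothing (its variance, not its mean, is what survives), and your "key identity" $E[(\mu^{\top}\Sigma^{-1}X)^2]=\mu^{\top}\Sigma^{-1}\mu$ merely trades $\lVert\mu\rVert_2^2$ for $\mu^{\top}\Sigma^{-1}\mu$, which you would still need to control.

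The missing observation is elementary: since $\Sigma=E[XX^{\top}]$ is the second-moment (Gram) matrix rather than the covariance, $\Sigma-\mu\mu^{\top}=\mathrm{Cov}(X)\succeq 0$, hence $\mu^{\top}\Sigma^{-1}\mu\le 1$ and $\lVert\mu\rVert_2^2\le\lambda_{\max}(\Sigma)\le\kappa_u=O(1)$ by Assumption~\ref{ap3}. With this in hand your "naive" estimates already finish the job:
\begin{align*}
E[\mu^{\top}B\Sigma B\mu]\le \lVert\mu\rVert_2^2\,\lVert\Sigma\rVert_{\mathrm{op}}\,E[\lVert B\rVert_{\mathrm{op}}^2]=O(p/n)=o(1),
\qquad
E[(\mu^{\top}B\mu)^2]\le \lVert\mu\rVert_2^4\,E[\lVert B\rVert_{\mathrm{op}}^2]=O(p/n),
\end{align*}
so the entire final paragraph of your plan can be deleted; this is also, in effect, how the paper concludes (it bounds the reduced expression by $\lVert E(X)\rVert^2\lVert\Sigma\rVert_{\mathrm{op}}$ times an operator-norm factor that is $o(1)$). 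Two smaller points to tighten: justify $E[\lVert B\rVert_{\mathrm{op}}^2]=O(p/n)$ on the high-probability event that $\hat{\Sigma}_{-S}$ is well-conditioned (matrix Bernstein gives an exponentially small exception set when $p=o(n)$, and $\lVert B\rVert_{\mathrm{op}}$ must be handled separately off that event), and when you fold the Sherman--Morrison corrections $\mathcal{R}_S$ back in, the cross terms they generate must be bounded with the same $\lVert\mu\rVert=O(1)$ fact, as is done in Lemma~\ref{lem: sample covariance1_plus}.
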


\begin{proof}
The main challenge lies in the fact that $\HSigmaX$ is not independent of $X_1,X_2,X_3,X_4$. To decouple this correlation, we apply the Sherman-Morrison formula (Lemma \ref{lem:Sherman-Morrison}) multiple times. We first introduce the following notation: 
\begin{align*}
f_{I} (Q; (i,j)) = X_i^{\top} (\HSigmaX_{-I}^{-1} - Q) X_j.
\end{align*}
Here, $Q \in \mathbb{R}^{p \times p}$ denotes a deterministic matrix, the subscript $I \subseteq \{0,1,2,\dots,n\}$ indicates that the sample Gram matrix is estimated leaving out units with indices in $I$. Specifically, we use $(0)$ as the subscript to denote the sample Gram matrix estimator computed using all units. When $(i, j) \subseteq I$, $\HSigmaX_{-I}^{-1}$ is independent of $X_i$ and $X_j$, we can use a strategy similar to that we applied in the proof of Lemma~\ref{lem: sample covariance1}. We introduce the following lemma to show that, when the Sherman–Morrison formula is applied multiple times, the convergence rate of the terms of interest remains unchanged after removing the dependence on $\HSigmaX$. This lemma could be of independent interest for related problems that involve leave-$m$-out analysis of inverse sample Gram matrices \citep{liu2017semiparametric}.

\begin{lemma}
\label{lem:repeated Sherman-Morrison}
Under the same conditions as in Lemma~\ref{lem:leave-out-matrix}, we have
\begin{itemize}
\item $E \{f^2_{(0)}(\SigmaX ^{-1};(1,2))\} = E \{f^2_{(1,2)}(\SigmaX ^{-1};(1,2))\}+O(p)$; 
    
\item $E \{f_{(0)}(\SigmaX ^{-1};(1,2))f_{(0)}(\SigmaX ^{-1};(1,3))\} = E \{f_{(1,2,3)}(\SigmaX ^{-1};(1,2))f_{(1,2,3)}(\SigmaX ^{-1};(1,3))\}+o(1)$;
    
\item $E \{f_{(0)}(\SigmaX ^{-1};(1,2))f_{(0)}(\SigmaX ^{-1};(3,4))\} =E \{f_{(1,2,3,4)}(\SigmaX ^{-1};(1,2))f_{(1,2,3,4)}(\SigmaX ^{-1};(3,4))\}+o(1)$.
\end{itemize}
\end{lemma}
We continue the proof of Lemma~\ref{lem:leave-out-matrix} by taking Lemma~\ref{lem:repeated Sherman-Morrison} as a given and will prove Lemma~\ref{lem:repeated Sherman-Morrison} afterward. By Lemma~\ref{lem:repeated Sherman-Morrison}, we only need to show that $E\{f^2_{(1,2)}(\SigmaX ^{-1};1,2)\} = o(n)$. We have
\begin{align}
   &E\{f^2_{(1,2)}(\SigmaX ^{-1};1,2)\} \notag\\
   &= E\big[X_1^{\top}\{\HSigmaX_{-(1,2)}^{-1} - \SigmaX ^{-1})\} X_2X_2^{\top}\{\HSigmaX_{-(1,2)}^{-1} - \SigmaX ^{-1})\} X_1 \big] \notag \\
   &= E\Big(X_1^{\top}E\big[\{\HSigmaX_{-(1,2)}^{-1} - \SigmaX ^{-1})\} X_2X_2^{\top}\{\HSigmaX_{-(1,2)}^{-1} - \SigmaX ^{-1})\}|X_1 \big]X_1 \Big)\notag  \\
   & = E\big[X_1^{\top}\{\HSigmaX_{-(1,2)}^{-1} - \SigmaX ^{-1})\} \SigmaX \{\HSigmaX_{-(1,2)}^{-1} - \SigmaX ^{-1})\} X_1 \big]\notag  \\
   &\leq \norm{\SigmaX }{\op} E\{\norm{\HSigmaX_{-(1,2)}^{-1} - \SigmaX ^{-1}}{\op}^2\} E\{\norm{X_1}{2}^2\}\notag \\
   &= O(1) \cdot O \left( \frac{p}{n} \right) \cdot O(p) \notag  \\
   &= o(n). \notag
\end{align}
Therefore, we have
\begin{equation}
    E \{f^2_{(0)}(\SigmaX ^{-1};(1,2))\} = o(n). \label{eq: SH-result1}
\end{equation}
Similarly, for the second statement, we only need to show
\begin{align*}
E\{f_{(1,2,3)}(\SigmaX ^{-1};(1,2))f_{(1,2,3)}(\SigmaX ^{-1};(1,3))\} = o(1).
\end{align*}
Following a similar strategy as in Lemma~\ref{lem: sample covariance1}, we have 
\begin{align*}
    &|E\{f_{(1,2,3)}(\SigmaX ^{-1};(1,2))f_{(1,2,3)}(\SigmaX ^{-1};(1,3))\}| \\
    &=|E(X)^{\top} E [\HSigmaX_{-(1,2,3)}^{-1} - \SigmaX ^{-1}] \SigmaX  E [\HSigmaX_{-(1,2,3)}^{-1} - \SigmaX ^{-1}] E(X)| \\
    &\leq \norm{E(X)}{\op}^2 \norm{\SigmaX }{\op} \norm{E [\HSigmaX_{-(1,2,3)}^{-1} - \SigmaX^{-1}]}{\op}^2 \notag\\
    &=o(1).
\end{align*}
Then we have
\begin{equation}
    E\{f_{(0)}(\SigmaX ^{-1};(1,2))f_{(0)}(\SigmaX ^{-1};(1,3))\} = o(1). \label{eq: SH-result2}
\end{equation}
Similarly, we can obtain the desired results for $E\{f_{(1,2,3,4)}(\SigmaX ^{-1};(1,2))f_{(1,2,3,4)}(\SigmaX ^{-1};(3,4))\} = o(1)$
and 
\begin{equation}
    E\{f_{(0)}(\SigmaX ^{-1};(1,2))f_{(0)}(\SigmaX ^{-1};(3,4))\} = o(1). \label{eq: SH-result3}
\end{equation}
In summary, equations \eqref{eq: SH-result1}, \eqref{eq: SH-result2} and \eqref{eq: SH-result3} are proved.
\end{proof}
We next prove Lemma~\ref{lem:repeated Sherman-Morrison}.
\begin{proof}[Proof of Lemma~\ref{lem:repeated Sherman-Morrison}]
By Lemma~\ref{lem:Sherman-Morrison}, we have
    \begin{align}
        f_{(0)}(\SigmaX ^{-1};(i,j)) &= f_{(i)}(\SigmaX ^{-1};(i,j)) - \gamma_{(i)}(i) f_{(i)}(0;(i,j)), \label{eq:repeatS}\\
        f_{(0)}(0;(i,j)) &= f_{(i)}(0;(i,j))\{1-\gamma_{(i)}(i)\}, \label{eq:repeat0}
    \end{align}
    where we let 
    \[\gamma_{I}(i) = \frac{X_i^{\top}\HSigmaX_{-I}^{-1}X_i}{n+\sum\limits_{j=i}^{|I|}X_j^{\top}\HSigmaX_{-I}^{-1}X_j}.\]
    Here, $|I|$ is the number of elements in $I$. Then we can obtain
     \begin{align*}
        \Ek \{f_{(0)}(\SigmaX ^{-1};(i,j))\} &= \Ek \{f_{(i)}(\SigmaX ^{-1};(i,j))\} - \Ek \{f_{(i)}(0;(i,j)) \gamma_{(i)}(i)\}, \\
        \Ek \{f_{(0)}(0;(i,j))\} &= \Ek \{(1-\gamma_{i}(i))f_{(i)}(0;(i,j))\}.
     \end{align*}
     The key is that we can represent all three terms by the second-order combinations of their leave-one-out forms $f_{(i)}(\SigmaX ^{-1};(i,j))$ and $f_{(i)}(0;(i,j))$ with the product $\gamma_{I}(i)$. Then we apply equation \eqref{eq:repeatS} to the first goal:
\begin{align*}
    &f^2_{(0)}(\SigmaX ^{-1};(1,2)) \\
    &= \{f_{(1)}(\SigmaX ^{-1};(1,2)) - \gamma_{(1)}(1) f_{(1)}(0;(1,2))\}^2 \\
    &= f^2_{(1)}(\SigmaX ^{-1};(1,2)) + \gamma^2_{(1)}(1) f^2_{(1)}(0;(1,2)) - 2\gamma_{(1)}(1) f_{(1)}(\SigmaX ^{-1};(1,2))f_{(1)}(0;(1,2)).
\end{align*}
For the second term, we can directly bound $E\{f^2_{(1)}(0;(1,2))\}$ as $\gamma_{(1)}(1)<1$. For the third term, we can obtain a similar conclusion because
\begin{align}\label{eqgamma}
    E\Big\{\gamma_{i}(i)f_{(1)}(\SigmaX ^{-1};(1,2))f_{(1)}(0;(1,2))\Big\} &= E\{\gamma_{i}(i)X_1^{\top} (\hat\Sigma^{-1}_{-1} - \Sigma^{-1})X_2X_2^{\top} \HSigmaX^{-1}_{-1} X_1\} \notag\\
    &\leq E|X_1^{\top} \hat\Sigma^{-1}_{-1}X_2X_2^{\top} \HSigmaX_{-1}^{-1} X_1| +E| X_1^{\top} \Sigma^{-1}X_2X_2^{\top} \HSigmaX^{-1}_{-1} X_1|\notag\\
    &= E\{X_1^{\top} \hat\Sigma^{-1}_{-1}X_2X_2^{\top} \HSigmaX_{-1}^{-1} X_1\} +E\{ X_1^{\top} \Sigma^{-1}X_2X_2^{\top} \HSigmaX^{-1}_{-1} X_1\}.
\end{align}

That is, we are left to analyze the following three terms
\begin{equation}
    f^2_{(1)}(\SigmaX ^{-1};(1,2)),\ f^2_{(1)}(0;(1,2)),\ f_{(1)}(\SigmaX ^{-1};(1,2))f_{(1)}(0;(1,2)). \label{eqSH}
\end{equation}
Then we can apply the Sherman--Morrison formula to the terms in \eqref{eqSH}, leaving the 2nd unit out. By \eqref{eq:repeatS} and \eqref{eq:repeat0}, the result only concerns 
\begin{equation*}
    f^2_{(1,2)}(\SigmaX ^{-1};(1,2)),\ f^2_{(1,2)}(0;(1,2)),\ f_{(1,2)}(\SigmaX ^{-1};(1,2))f_{(1,2)}(0;(1,2)). 
\end{equation*}
In particular, we have
\[\Ek\{X^{\top}_1\SigmaX ^{-1}X_2X^{\top}_2\SigmaX ^{-1}X_1\} = O(p)\]
and 
\begin{equation}
    E f^2_{(1,2)}(0;1,2) = E\big\{f_{(1,2)}(\SigmaX ^{-1};1,2) + X^{\top}_1\SigmaX ^{-1}X_2\big\}^2=O(p). \label{eq:SHR2}
\end{equation}
As of $f_{(1,2)}(\SigmaX ^{-1};(1,2))f_{(1,2)}(0;(1,2))$, we have
\begin{align}
    &|E\{f_{(1,2)}(\SigmaX ^{-1};(1,2))f_{(1,2)}(0;(1,2))\}| \notag \\
    &=|E\{f_{(1,2)}(\SigmaX ^{-1};(1,2))\}\{f_{(1,2)}(\SigmaX ^{-1};(1,2)) + X_1^{\top} \SigmaX ^{-1}X_2\}|\notag\\
    &=O(p)+o(1) = O(p). \label{eq:SHR3}
\end{align}
We now substitute \eqref{eq:SHR2} and \eqref{eq:SHR3} back into the Sherman–Morrison formula step by step. Then we can show that 
\[E \{f^2_{(0)}(\SigmaX ^{-1};(1,2))\} = E \{f^2_{(1,2)}(\SigmaX ^{-1};(1,2))\}+O(p).\]
Similarly, we can write $f_{(0)}(\SigmaX ^{-1};(1,2))f_{(0)}(\SigmaX ^{-1};(1,3))$ as
\begin{align*}
    &f_{(0)}(\SigmaX ^{-1};(1,2))f_{(0)}(\SigmaX ^{-1};(1,3))\\
    &= \{f_{(1)}(\SigmaX ^{-1};(1,2)) - \gamma_{(1)}(1) f_{(1)}(0;(1,2))\}\{f_{(1)}(\SigmaX ^{-1};(1,3)) - \gamma_{(1)}(1) f_{(1)}(0;(1,3))\} \\
    &= f_{(1)}(\SigmaX ^{-1};(1,2))f_{(1)}(\SigmaX ^{-1};(1,3)) \\
    &\quad - \gamma_{(1)}(1)\big\{f_{(1)}(\SigmaX ^{-1};(1,2))f_{(1)}(0;(1,3)) + f_{(1)}(0;(1,2))f_{(1)}(\SigmaX ^{-1};(1,3))\big\} \\
    & \quad + \gamma^2_{(1)}(1) f_{(1)}(0;(1,2))f_{(1)}(0;(1,3)).
\end{align*}
Applying Lemma~\ref{lem:Sherman-Morrison} multiple times, we can show the following results by Lemma~\ref{lem: sample covariance1_plus}: 
\begin{align}\label{eqlem10_1}
\Big| E \Big[\prod_{i=1}^3\prod_{j=1}^3\prod_{k_i,k_j} r_I(i)^{k_i}\{1-r_I(j)\}^{k_j} &f_{(1,2,3)}(\Omega_1;(1,2))f_{(1,2,3)}(\Omega_2;(1,3))\Big] \Big| = o (1),
\end{align}
where $k_i$ and $k_j$ can be $0,1,$ or $2$, and $\Omega_1$ and $\Omega_2$ can be $0$ or $\Sigma^{-1}$. We can then obtain the desired results by showing that
\begin{align*}
    E\{f_{(1,2,3)}(\SigmaX ^{-1};(1,2))f_{(1,2,3)}(0;(1,3))\}= o(1),\\
    E\{f_{(1,2,3)}(0;(1,2))f_{(1,2,3)}(0;(1,3))\} = O(1).
\end{align*}
We have
\begin{align*}
    & E\{f_{(1,2,3)}(0;(1,2))f_{(1,2,3)}(0;(1,3))\} \notag\\
    &\leq E\{f_{(1,2,3)}(\SigmaX ^{-1};(1,2)) + X_1^{\top} \SigmaX ^{-1}X_2\}\{f_{(1,2,3)}(\SigmaX ^{-1};(1,3)) + X_1^{\top} \SigmaX ^{-1}X_3\}\notag\\
    &= E \{X_2^{\top}(\HSigmaX_{-(1,2,3)}^{-1} - \SigmaX^{-1})X_1X^{\top}_1(\HSigmaX_{-(1,2,3)}^{-1} - \SigmaX^{-1})X_3+2X_2^{\top}(\HSigmaX_{-(1,2,3)}^{-1} - \SigmaX^{-1})X_1X^{\top}_1\SigmaX^{-1}X_3 \\
    & \quad + X_2^{\top} \SigmaX^{-1}X_1X^{\top}_1\SigmaX^{-1}X_3\} \notag\\
    &\leq \norm{E\{X\}}{2}^2 \left( E \{\norm{\HSigmaX_{-(1,2,3)}^{-1} - \SigmaX^{-1}}{\op}^2 + 2\norm{\HSigmaX_{-(1,2,3)}^{-1} - \SigmaX^{-1}}{\op}\} + \frac{1}{\lambda_{\min}} \right) \\
    &=O(1).
\end{align*}
Similarly, we have
\begin{align*}
    &|E\{f_{(1,2,3)}(\SigmaX ^{-1};(1,2))f_{(1,2,3)}(0;(1,3))\}| \notag\\
    &=|E\{f_{(1,2,3)}(\SigmaX ^{-1};(1,2))\}\{f_{(1,2,3)}(\SigmaX ^{-1};(1,3)) + X_1^{\top} \SigmaX ^{-1}X_3\}|\notag\\
    &=o(1).
\end{align*}
Then we have 
\[E \{f_{(0)}(\SigmaX ^{-1};(1,2))f_{(0)}(\SigmaX ^{-1};(1,3))\} = E \{f_{(1,2,3)}(\SigmaX ^{-1};(1,2))f_{(1,2,3)}(\SigmaX ^{-1};(1,3))\}+o(1).\]
The strategy of analyzing $E \{f_{(0)}(\SigmaX ^{-1};(1,2))f_{(0)}(\SigmaX ^{-1};(3,4))\}$ is similar, as we need to show that there exists a universal constant $C > 0$ such that
\begin{align} \label{eqlem10_2}
\Big| E \Big[\prod_{i=1}^4\prod_{j=1}^4 \prod_{k_i,k_j} r^{k_i}_I(i)\{1-r_I(j)\}^{k_j} f_{(1,2,3,4)}(\Omega_1;(1,2))f_{(1,2,3,4)}(\Omega_2;(3,4))\Big] \Big| = o (1).
\end{align}
Then we can have
\[E \{f_{(0)}(\SigmaX ^{-1};(1,2))f_{(0)}(\SigmaX ^{-1};(3,4))\} =E \{f_{(1,2,3,4)}(\SigmaX ^{-1};(1,2))f_{(1,2,3,4)}(\SigmaX ^{-1};(3,4))\}+o(1).\]
\end{proof}

\begin{lemma}\label{lem: sample covariance1_plus}
Under the assumptions of Lemma~\ref{lem:leave-out-matrix}, let 
    \[r_I(i) = \frac{X_i^{\top}\hat\Sigma^{-1}_{-I}X_i}{n+\sum\limits_{j=i}^{|I|}X_j^{\top}\hat\Sigma^{-i}_{-I}X_j}.\]
    Then we have
    \begin{itemize}
        \item When $I = \{1,2,3\}$, we have 
        \begin{align*}      E\Big[\prod_{i=1}^3\prod_{j=1}^3\prod_{k_i,k_j} r_I(i)^{k_i}\{1-r_I(j)\}^{k_j} f_{(1,2,3)}(\Omega_1;(1,2))f_{(1,2,3)}(\Omega_2;(1,3))\Big] = o (1).
        \end{align*}
        \item When $I = \{1,2,3,4\}$, we have 
        \begin{align*}     E\Big[\prod_{i=1}^4\prod_{j=1}^4 \prod_{k_i,k_j} r^{k_i}_I(i)\{1-r_I(j)\}^{k_j} f_{(1,2,3,4)}(\Omega_1;(1,2))f_{(1,2,3,4)}(\Omega_2;(3,4))\Big] = o (1),
        \end{align*}
    \end{itemize}
    where $k_i$ and $k_j$ can be $0,1,$ or $2$. $\Omega_1$ and $\Omega_2$ can be $0$ or $\Sigma^{-1}$ and $|I|$ is the number of elements in $I$.
\end{lemma}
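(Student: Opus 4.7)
The proof strategy closely parallels that of Lemma~\ref{lem: sample covariance1}, with the additional book-keeping required to handle the prefactor $P \coloneqq \prod_{i} r_I(i)^{k_i} \prod_j \{1 - r_I(j)\}^{k_j}$. The key reduction is the uniform bound $|P| \leq 1$ almost surely, which follows from the fact that $r_I(i)$ is a ratio of a nonnegative scalar to itself plus an additional nonnegative quantity of size at least $n$, hence $r_I(i) \in [0, 1]$ (and so $1 - r_I(j) \in [0, 1]$). Combined with Assumption~\ref{ap2} and Lemma~\ref{lem:matrix Bernstein}, one further obtains the sharper bound $r_I(i) = O_P(p/n)$ on the high-probability event on which $\hat{\Sigma}_{-I}^{-1}$ has bounded operator norm.

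First, I would condition on $\hat{\Sigma}_{-I}$ and exploit the leave-$|I|$-out independence: $\hat{\Sigma}_{-I}$ is independent of $\{X_i : i \in I\}$. For the first claim (with $I = \{1,2,3\}$), the random vector $X_2$ appears linearly in $f_{(1,2,3)}(\Omega_1;(1,2)) = X_1^\top (\hat{\Sigma}_{-I}^{-1} - \Omega_1) X_2$, and it enters the prefactor only through the scalar quadratic form $X_2^\top \hat{\Sigma}_{-I}^{-1} X_2$. Writing $X_2 = \mu + (X_2 - \mu)$ and integrating over $X_2$ first, the linear-in-$X_2$ piece contributes $X_1^\top (\hat{\Sigma}_{-I}^{-1} - \Omega_1) \mu$ times a conditional expectation of the prefactor (a scalar function of $\hat{\Sigma}_{-I}$ and the other $X_i$'s), while the centered contribution is controlled by the fourth-moment bounds underlying Lemma~\ref{lem:leave-out-matrix}. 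Iterating the same marginalization in $X_3$ and then in $X_1$, together with $\|\mu\|_2 = O(1)$ (Assumption~\ref{ap2}) and the operator-norm bound $\|E[\hat{\Sigma}_{-I}^{-1} - \Sigma^{-1}]\|_{\op} = o(1)$ already established in the proof of Lemma~\ref{lem: sample covariance1}, produces the $o(1)$ rate.

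For the second claim ($I = \{1,2,3,4\}$), the situation is in fact cleaner because the two $f$-factors involve disjoint index pairs $(1,2)$ and $(3,4)$: after conditioning on $\hat{\Sigma}_{-I}$ and repeating the marginalization argument on each of $X_1, X_2, X_3, X_4$, the leading term reduces to a scalar of the form $\mu^\top E\bigl[(\hat{\Sigma}_{-I}^{-1} - \Omega_1) \mu \mu^\top (\hat{\Sigma}_{-I}^{-1} - \Omega_2)\bigr] \mu$, which is $o(1)$ by the same operator-norm estimate. The residual contributions coming from the centered pieces of each $X_i$ and from the non-product part of the prefactor are absorbed using the sharpening $r_I(i) = O_P(p/n)$ together with the moment bounds on $\|X_i\|_2$.

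The main obstacle will be the joint coupling between the prefactor and the two $f$-factors: a direct Cauchy--Schwarz estimate only gives $(E f_1^2)^{1/2}(E f_2^2)^{1/2} = O(n^{1/2})$ by \eqref{eq: SH-result1}, which is far too crude. Circumventing this requires the iterative marginalization just described, which keeps track of whether each $X_i$ enters the $f$-factors linearly (so that its mean $\mu$ dominates after taking expectation) or only via the prefactor's quadratic forms $X_i^\top \hat{\Sigma}_{-I}^{-1} X_i$ (for which the uniform bound $|P| \leq 1$ and the high-probability sharpening $r_I(i) = O_P(p/n)$ suffice to yield a vanishing contribution).
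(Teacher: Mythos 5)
Your route is genuinely different from the paper's (which Taylor-expands each $r_I(i)$ into a geometric series in $g(X_l) = X_l^{\top}\hat{\Sigma}_{-I}^{-1}X_l/n$, factors every resulting monomial exactly across the independent $X_l$'s, bounds each factor by the $L_2$-projection argument of \eqref{key projection} at rate $O((p/n)^{\mathrm{deg}})$, and resums via the multinomial theorem), and as sketched it has a genuine gap. The decisive case is $\Omega_1=0$ or $\Omega_2=0$: there $\hat{\Sigma}_{-I}^{-1}-\Omega=\hat{\Sigma}_{-I}^{-1}$ and $\Vert E[\hat{\Sigma}_{-I}^{-1}-\Omega]\Vert_{\op}=O(1)$, not $o(1)$, so the operator-norm estimate you invoke for the leading terms --- e.g.\ $\mu^{\top}E[(\hat{\Sigma}_{-I}^{-1}-\Omega_1)\mu\mu^{\top}(\hat{\Sigma}_{-I}^{-1}-\Omega_2)]\mu$ in the disjoint-pair case --- yields only $O(1)$. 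In that regime the sole source of smallness is the prefactor itself (each $r_I(i)$ contributes a factor $O(p/n)$), but your ``key reduction'' $|P|\le 1$ discards exactly that gain, and your leading term is written with the prefactor's contribution dropped. This is precisely why the paper keeps the exact expansion $\prod_l g(X_l)^{k_l}\{\cdots\}$ in front before applying the projection bound, ending with $(p/n)^{\sum_l k_l}\,\Vert\hat{\Sigma}_{-I}^{-1}-\Omega_1\Vert_{\op}\Vert\hat{\Sigma}_{-I}^{-1}-\Omega_2\Vert_{\op}=o(1)$ even when the operator norms are only $O(1)$ (and, consistently, the lemma is only ever invoked with $\Omega=0$ in the presence of at least one $r_I$ factor).

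Second, the residual terms of your iterative marginalization are not shown to close in the full regime $p=o(n)$. After writing $X_2=\mu+(X_2-\mu)$, the centered contribution is $X_1^{\top}(\hat{\Sigma}_{-I}^{-1}-\Omega_1)E_{X_2}[(X_2-\mu)(P-E_{X_2}P)]$; coordinatewise Cauchy--Schwarz gives $\Vert E_{X_2}[(X_2-\mu)(P-E_{X_2}P)]\Vert_2\lesssim \sqrt{p}\cdot(p/n)$, and after multiplying by $\Vert X_1\Vert_2=O(\sqrt{p})$ and the remaining factor $f_{(1,2,3)}(\Omega_2;(1,3))$ one is left with contributions of order $p^2/n$ or worse, which do not vanish when $p$ is close to $n$. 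The paper avoids any such residual because its monomial expansion of $P$ is exact up to a uniformly negligible remainder, so each coordinate enters through a clean product factor controlled in $L_2$. To repair your argument you would need to (i) carry the prefactor's $O((p/n)^{k})$ gain into the leading term whenever some $k_i\ge 1$, and (ii) replace the mean-plus-residual decomposition by an exact factorization of $P$ across $X_1,\dots,X_{|I|}$ --- which is essentially the paper's series-expansion device.
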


\begin{proof}
Let $g(X_i) = X_i^{\top}\hat\Sigma^{-1}_{-I}X_i/n$, for $i=1,2,3,4$. We only need to prove the statements when $k_{i} = k_{j} = 2$. Because $p=o(n)$, $g(X_i)<1$, we can write $r_I(i)$ and $r^2_I (i)$ as Taylor series:
\[r_I(i) = g(X_i)\sum\limits_{j=0}^{m}(-1)^j\Big\{\sum\limits_{l=i}^{|I|}g(X_l)\Big\}^j + O\Big(\Big(\frac{p}{n}\Big)^{m+1}\Big)\]
and
\[r^2_I(i) = g(X_i)^2\sum\limits_{j=1}^{m}(-1)^{j+1}j\Big\{\sum\limits_{l=i}^{|I|}g(X_l)\Big\}^j + O\Big((m+1)\Big(\frac{p}{n}\Big)^{m+1}\Big).\]
Meanwhile, we have
\[1-r_I(i) = \sum\limits_{j=0}^{m}(-1)^j\Big\{\sum\limits_{l=i}^{|I|}g(X_l)\Big\}^j + O\Big(\Big(\frac{p}{n}\Big)^{m+1}\Big)\]
and
\[\{1-r_I(i)\}^2 = \sum\limits_{j=1}^{m}(-1)^{j+1}j\Big\{\sum\limits_{l=i}^{|I|}g(X_l)\Big\}^j + O\Big((m+1)\Big(\frac{p}{n}\Big)^{m+1}\Big).\]
We can choose a sufficiently large $m$ so that all the above remainders can be made arbitrarily small. Then we can simplify $\Big\{\sum\limits_{l=i}^{|I|}g(X_l)\Big\}^j$ for $i=1,2$ using the Binomial theorem:
\begin{align*}
\Big\{\sum\limits_{l=1}^{|I|}g(X_l)\Big\}^{j} = \sum\limits_{j_1+j_2+j_3=j} \binom{j}{j_1, j_2, j_3} g(X_1)^{j_1}g(X_2)^{j_2}g(X_3)^{j_3}
\end{align*}
and
\begin{align*}
\Big\{\sum\limits_{l=2}^{|I|}g(X_l)\Big\}^j = \sum\limits_{j_2+j_3=j} \binom{j}{j_2} g(X_2)^{j_2}g(X_3)^{j_3}.
\end{align*}
We also denote $\zeta_{j} \leq \sqrt{2}$ as a sequence converging to 1 as $j \rightarrow \infty$ such that $\zeta_{j}^{j} = j$. As we choose $p = o (n)$, both $g (X)$ and $\zeta_{j} g (X)$ are of order $O (p / n) = o (1)$ under Assumptions~\ref{ap2}--\ref{ap3}.

We first consider the case $I = \{1, 2, 3\}$. Then we can represent the leading terms of $r_{I}^{2} (i)$ and $\{1 - r_{I} (i)\}^{2}$ as follows, where we abuse the notation by ignoring the remainder terms:
\begin{align*}
& r_{I}^{2} (1) = g (X_{1})^{2} \sum_{j = 1}^{m} (-1)^{j + 1} \sum_{j_1 + j_2 + j_3 = j} \binom{j}{j_1, j_2, j_3} \{\zeta_{j} g (X_1)\}^{j_1} \{\zeta_{j} g (X_2)\}^{j_2} \{\zeta_{j} g (X_3)\}^{j_3}, \\
& \{1 - r_{I} (1)\}^{2} = \sum_{j = 1}^{m} (-1)^{j + 1} \sum_{j_1 + j_2 + j_3 = j} \binom{j}{j_1, j_2, j_3} \{\zeta_{j} g (X_1)\}^{j_1} \{\zeta_{j} g (X_2)\}^{j_2} \{\zeta_{j} g (X_3)\}^{j_3}, \\
& r_{I}^{2} (2) = g (X_2)^{2} \sum_{j = 1}^{m} (-1)^{j + 1} \sum_{j_2 + j_3 = j} \binom{j}{j_2} \{\zeta_{j} g (X_2)\}^{j_2} \{\zeta_{j} g (X_3)\}^{j_3}, \\
& \{1 - r_{I} (2)\}^{2} = \sum_{j = 1}^{m} (-1)^{j + 1} \sum_{j_2 + j_3 = j} \binom{j}{j_2} \{\zeta_{j} g (X_2)\}^{j_2} \{\zeta_{j} g (X_3)\}^{j_3}, \\
& r_{I}^{2} (3) = g (X_{3})^{2} \sum_{j = 1}^{m} (-1)^{j + 1} \{\zeta_{j} g (X_{3})\}^{j}, \\
& \{1 - r_{I} (3)\}^{2} = \sum_{j = 1}^{m} (-1)^{j + 1} \{\zeta_{j} g (X_{3})\}^{j}.
\end{align*}

Our proof strategy is then to explicitly expand \eqref{eqlem10_1} as a summation of monomials of $\zeta_{j} g (X_{i})$ with different degrees and then bound each of the monomials of degree $k$ by $O ((p / n)^{k})$ using the same projection argument as in \eqref{key projection}. Finally, as in \eqref{key expansion}, we can turn the summation of monomials back to a form with which we can directly apply the multinomial theorem. 

In particular, we can write \eqref{eqlem10_1} as follows:
\begin{align*}
& |\eqref{eqlem10_1}| = \Big| E \Big[ \prod_{l = 1}^3 r_I (l)^{2} \{1 - r_I(l)\}^{2} f_{(1,2,3)} (\Omega_1; (1, 2)) f_{(1,2,3)} (\Omega_2; (1, 3)) \Big] \Big| \\ 
& = \Bigg| E \Big[ \prod_{l = 1}^{3} g (X_{l})^{2}  \Big( \sum_{j = 1}^{m} (-1)^{j + 1} \sum_{\sum\limits_{i = 1}^{l} j_{l} = j} \binom{j}{j_{1}, \cdots, j_{l}} \prod_{i = 1}^{l} \{\zeta_{j_{l}} g (X_{j_{i}})\}^{j_{i}} \Big)^{2} f_{(1,2,3)} (\Omega_1; (1, 2)) f_{(1,2,3)} (\Omega_2; (1, 3)) \Big] \Bigg| \\
& \lesssim \prod_{l = 1}^{3} \Big( \frac{p}{n} \Big)^{2} \Big\{ \sum_{j = 1}^{m} (-1)^{j + 1} \sum_{\sum\limits_{i = 1}^{l} j_{l} = j} \binom{j}{j_{1}, \cdots, j_{l}} \prod_{i = 1}^{l} \Big( \frac{\zeta_{j_{l}} p}{n} \Big)^{j_{i}} \Big\}^{2} \Vert\HSigmaX^{-1}_{-(1,2,3)}-\Omega_1\Vert_{\op} \Vert\HSigmaX^{-1}_{-(1,2,3)}-\Omega_2\Vert_{\op} \\
& = o (1),
\end{align*}
where the first inequality follows from expanding the product of sums as the summation of monomials of $\zeta_{j} g (X_{i})$ with different degrees together with the projection argument as in \eqref{key projection} and the last line follows from the assumption $p = o (n)$ and the multinomial theorem. % \Lin{it seems that the norm should be 1. please check if im right.} \Gu{i think it's $O(p)$? This term should be the $L_2(P)$-norm of $X_1^{\top}\Omega_1 X_2$, which contains $E[X_1^{\top}\Omega_1 X_2X_2^{\top}\Omega_1 X_1]=E[X_1^{\top}\Omega X_1]$. And i write another arguments, looks like $X_2$ and $X_3$ can be eased by projection, and $\norm{EX_1X_1^{\top}}{\op}=O(1)$.} 

\iffalse
\begin{align*}
& \eqref{eqlem10_1} = E \Big[\prod_{i=1}^3 r_I(i)^{2} \prod_{j=1}^3 \{1-r_I(j)\}^{2} f_{(1,2,3)}(\Omega_1;(1,2))f_{(1,2,3)}(\Omega_2;(1,3))\Big] \\ 
& \lesssim E \Big[ \prod_{i=1}^3 \prod_{j=1}^3 \sum g(X_1)^{j_1} g(X_2)^{j_2} g(X_3)^{j_3}  f_{(1,2,3)}(\Omega_1;(1,2))f_{(1,2,3)}(\Omega_2;(1,3)) \Big] \\
& = E\sum \Big(\prod g(X_2)^{j_2}\Big) X^{\top}_2 \Omega_1 \Big\{ X_1 \Big(\prod g(X_1)^{j_1}\Big) X_1 ^{\top}\Big\} \Omega_2 X_3 \Big(\prod g(X_3)^{j_3}\Big) \\
&\Gu{\leq \Big\Vert \prod g(X_2)^{j_2} \Big\Vert_{2}  E\Big\{\Omega_1 X_1 \Big(\prod g(X_1)^{j_1}\Big) X_1 ^{\top}\Omega_2\Big\}\Big\Vert \prod g(X_3)^{j_3} \Big\Vert_{2}} \\
& \lesssim \Big(\frac{p}{n}\Big)^{M}.
\end{align*}
\fi

The same argument can be applied to show the desired bound for \eqref{eqlem10_2}.
\end{proof}

\iffalse
Below, we present Lemma~\ref{lem:counting}, which was used in the above proof.
\begin{lemma}
\label{lem:counting}
Given $m > 0$, the following identity holds by a direct exchange between summation and multiplication:
\begin{align*}
& \ \prod_{l = 1}^{k} \Big\{ \sum_{j = 1}^{m} (-1)^{j + 1} \sum_{\sum\limits_{i = 1}^{l} j_{i} = j} \binom{j}{j_{1}, \cdots, j_{l}} \prod_{i = 1}^{l} \{\zeta_{j} g (X_{i})\}^{j_{i}} \Big\}^{2} \\
= & \ \sum_{1 \leq j^{(1)}, \cdots, j^{(k)} \leq m} \sum_{1 \leq j^{(1)'}, \cdots, j^{(k)'} \leq m} (-1)^{\sum\limits_{l = 1}^{k} (j^{(l)} + j^{(l)'}) + 2 k} \sum_{j_{1}^{(2)} = 0}^{j^{(2)}} \cdots \sum_{j_{1}^{(k)} = 0}^{j^{(k)}} \sum_{j_{2}^{(k)} = 0}^{j^{(k)} - j_{1}^{(k)}} \cdots \sum_{j_{k - 1}^{(k)} = 0}^{j^{(k)} - \sum\limits_{q = 1}^{k - 1} j_{q}^{(k)}} \\
& \quad \Big[ \binom{j^{(2)}}{j_{1}^{(2)}} \cdots \binom{j^{(k)}}{j_{1}^{(k)}, \cdots, j_{k}^{(k)}} \{\zeta_{j^{(1)}} g (X_{1})\}^{\sum\limits_{q = 1}^{k} j_{1}^{(q)}} \cdots \{\zeta_{j^{(k)}} g (X_{k})\}^{j_{k}^{(k)}} \Big].
\end{align*}
\end{lemma}
\fi

\begin{lemma}\label{lem:variance}
Under Assumptions~\ref{ap1}--\ref{ap3}, when $p=o(n)$, we have 
\[E\{X^{\top} (\HSigmaX^{-1} - \SigmaX^{-1})X\} = o(p),\ E\{X^{\top} (\HSigmaX^{-1} - \SigmaX^{-1})X\}^2 = O\left(\frac{p^3}{n}\right).\]
\end{lemma}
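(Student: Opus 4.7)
The plan is to isolate $X_{1}$ from the sample Gram matrix via Sherman--Morrison and then convert operator-norm concentration of $\hat{\Sigma} - \Sigma$ into the required moment bounds.

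First, I would write $\hat{\Sigma} = \hat{\Sigma}_{-1} + n^{-1} X_{1} X_{1}^{\top}$ with $\hat{\Sigma}_{-1} = n^{-1}\sum_{i \ne 1} X_{i} X_{i}^{\top}$ independent of $X_{1}$, and apply Lemma~\ref{lem:Sherman-Morrison} to obtain
\[
X_{1}^{\top}\hat{\Sigma}^{-1} X_{1} = \frac{n a_{1}}{n + a_{1}}, \qquad a_{1} \coloneqq X_{1}^{\top} \hat{\Sigma}_{-1}^{-1} X_{1},
\]
so that, writing $b_{1} = X_{1}^{\top} \Sigma^{-1} X_{1}$, we have the clean decomposition
\[
X_{1}^{\top}(\hat{\Sigma}^{-1} - \Sigma^{-1}) X_{1} = (a_{1} - b_{1}) - \frac{a_{1}^{2}}{n + a_{1}}.
\]
For the first-moment claim I would simply observe the trace identity $\sum_{i=1}^{n} X_{i}^{\top} \hat{\Sigma}^{-1} X_{i} = \operatorname{tr}(n \hat{\Sigma} \cdot \hat{\Sigma}^{-1}) = np$; because the $X_{i}$'s are i.i.d., $E[X_{1}^{\top} \hat{\Sigma}^{-1} X_{1}] = p = E[X_{1}^{\top} \Sigma^{-1} X_{1}]$, giving the expectation exactly zero (and hence $o(p)$).

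For the second-moment claim I would square the decomposition and bound the two pieces separately. Assumption~\ref{ap2} gives the deterministic bound $\|X_{1}\|^{2} \le p M^{2}$, so on the event that $\hat{\Sigma}_{-1}$ is well-conditioned,
\[
(a_{1} - b_{1})^{2} \le \|X_{1}\|^{4}\, \|\hat{\Sigma}_{-1}^{-1} - \Sigma^{-1}\|_{\mathrm{op}}^{2} \le p^{2} M^{4}\, \|\hat{\Sigma}_{-1}^{-1} - \Sigma^{-1}\|_{\mathrm{op}}^{2},
\]
and $a_{1}^{4}/(n+a_{1})^{2} \le a_{1}^{4}/n^{2} \le p^{4} M^{8}\, \|\hat{\Sigma}_{-1}^{-1}\|_{\mathrm{op}}^{4}/n^{2}$. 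I would then apply Lemma~\ref{lem:matrix Bernstein} with summands $n^{-1}(X_{i}X_{i}^{\top} - \Sigma)$ (each with operator norm $O(p/n)$) to get a sub-exponential tail for $\|\hat{\Sigma}_{-1} - \Sigma\|_{\mathrm{op}}$ of the form $P(\|\hat{\Sigma}_{-1} - \Sigma\|_{\mathrm{op}} \ge t) \le 2p \exp(-cnt^{2}/p)$ in the relevant range. Integrating yields $E\|\hat{\Sigma}_{-1} - \Sigma\|_{\mathrm{op}}^{2} = O(p/n)$, and then via $\hat{\Sigma}_{-1}^{-1} - \Sigma^{-1} = \hat{\Sigma}_{-1}^{-1} (\Sigma - \hat{\Sigma}_{-1}) \Sigma^{-1}$ together with Weyl's inequality and Assumption~\ref{ap3}, the same rate $O(p/n)$ for $E\|\hat{\Sigma}_{-1}^{-1} - \Sigma^{-1}\|_{\mathrm{op}}^{2}$ and $E\|\hat{\Sigma}_{-1}^{-1}\|_{\mathrm{op}}^{4} = O(1)$. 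Combining, $E[(a_{1} - b_{1})^{2}] = O(p^{3}/n)$ and $E[a_{1}^{4}/(n+a_{1})^{2}] = O(p^{4}/n^{2}) = o(p^{3}/n)$ under $p = o(n)$, which gives the second-moment bound.

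The main obstacle will be Step~4: upgrading the high-probability operator-norm concentration from Lemma~\ref{lem:matrix Bernstein} into the \emph{expectation} bounds $E\|\hat{\Sigma}_{-1}^{-1} - \Sigma^{-1}\|_{\mathrm{op}}^{2} = O(p/n)$ and $E\|\hat{\Sigma}_{-1}^{-1}\|_{\mathrm{op}}^{4} = O(1)$. This requires a truncation argument that splits on the event $\mathcal{E} = \{\|\hat{\Sigma}_{-1} - \Sigma\|_{\mathrm{op}} \le \lambda_{\min}(\Sigma)/2\}$; on $\mathcal{E}$ the inverses behave well and the sub-exponential tail integrates cleanly, while on $\mathcal{E}^{c}$ one must use the trivial deterministic bound $\|X_{1}\|^{2} \le pM^{2}$ and the fact that $P(\mathcal{E}^{c}) \le 2p \exp(-cn)$ decays fast enough (when $p = o(n)$) to make the complement contribution negligible. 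All remaining steps are either Sherman--Morrison algebra or elementary inequalities.
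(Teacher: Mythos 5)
Your proposal is correct, and for the harder second‑moment bound it follows essentially the paper's own route: the identical Sherman--Morrison split $X_1^{\top}(\HSigmaX^{-1}-\SigmaX^{-1})X_1=(a_1-b_1)-a_1^2/(n+a_1)$, the Cauchy--Schwarz step $(a_1-b_1)^2\le\|X_1\|^4\,\|\HSigmaX_{-1}^{-1}-\SigmaX^{-1}\|_{\mathrm{op}}^2\le p^2M^4\,\|\HSigmaX_{-1}^{-1}-\SigmaX^{-1}\|_{\mathrm{op}}^2$ exploiting independence of $X_1$ and $\HSigmaX_{-1}$, and matrix concentration plus Weyl to get the $O(p/n)$ squared operator-norm rate, with the correction term $a_1^4/(n+a_1)^2=O(p^4/n^2)=o(p^3/n)$. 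Where you genuinely depart is the first moment: the paper bounds $E[a_1-b_1]$ by $E\|X\|^2\cdot E\|\HSigmaX_{-1}^{-1}-\SigmaX^{-1}\|_{\mathrm{op}}=O(p\sqrt{p/n})$ and the diagonal correction by $O(p^2/n)$, whereas your trace identity $\sum_i X_i^{\top}\HSigmaX^{-1}X_i=np$ combined with exchangeability yields $E[X_1^{\top}\HSigmaX^{-1}X_1]=p=E[X_1^{\top}\SigmaX^{-1}X_1]$ on the invertibility event — an exact cancellation rather than a rate, which is cleaner and strictly sharper (off that event the identity is vacuous, so state the conclusion as $p\{P(\mathcal{E})-1\}=O(p\,P(\mathcal{E}^c))$, which is harmless). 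Two loose ends you should be aware of, both shared with the paper's own proof: (i) integrating the tail from Lemma~\ref{lem:matrix Bernstein} to get $E\|\HSigmaX_{-1}-\SigmaX\|_{\mathrm{op}}^2=O(p/n)$ for bounded designs actually incurs an extra $\log p$ factor unless one invokes effective-rank bounds; and (ii) on $\mathcal{E}^c$ the deterministic bound $\|X_1\|^2\le pM^2$ does not control $\|\HSigmaX_{-1}^{-1}\|_{\mathrm{op}}$ — you need something like the nonnegativity-plus-trace bound $0\le X_1^{\top}\HSigmaX^{-1}X_1\le np$ (which your own identity supplies) before multiplying by the exponentially small bad-event probability. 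Neither affects the stated conclusion.
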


\begin{proof}
By the Sherman--Morrison formula in Lemma \ref{lem:Sherman-Morrison}, we have
\[\HSigmaX^{-1} = \HSigmaX^{-1}_{-i} - \HSigmaX^{-1}_{-i}X_i(n + X_i^{\top}\HSigmaX^{-1}_{-i}X_i)^{-1}X_i^{\top}\HSigmaX^{-1}_{-i}.\]
In turn, we have 
\begin{equation*}
X_i^{\top} (\HSigmaX^{-1} - \SigmaX^{-1})X_i = X_i^{\top} \{\HSigmaX^{-1}_{-i} - \SigmaX^{-1}\}X_i - \frac{(X_i^{\top}\HSigmaX^{-1}_{-i}X_i)^2}{n + X_i^{\top}\HSigmaX^{-1}_{-i}X_i}.
\end{equation*}
The first term in the above display can be upper bounded by:
\[E[X_i^{\top} \{\HSigmaX^{-1}_{-i} - \SigmaX^{-1}\}X_i]\leq E [\norm{X}{2}^{2}] E [\norm{\HSigmaX^{-1}_{-i} - \SigmaX^{-1}}{\op}] = O\left(p\cdot \sqrt\frac{p}{n}\right),\]
and the second term is of the same order as $(X_i^{\top} \HSigmaX^{-1}_{-i} X_i)^2/n$ when $p=o(n)$ and has a mean of order $O (p^2/n)$. Therefore, we have
\[\frac{1}{p} E\{X_i^{\top} (\HSigmaX^{-1} - \SigmaX^{-1})X_i\} = O \left( \sqrt{\frac{p}{n}} \right) = o (1).\]
Next, we consider $(X_i^{\top} (\HSigmaX^{-1} - \SigmaX^{-1})X_i)^2$. Using a similar argument, we can conclude that
\begin{align*}
E [(X_i^{\top} (\HSigmaX^{-1} - \SigmaX^{-1})X_i)^2] \lesssim E [(X_i^{\top} (\HSigmaX^{-1}_{-i} - \SigmaX^{-1})X_i)^2] + O \left( \frac{p^{4}}{n^{2}} \right).
\end{align*}
As $X_i$ is uniformly bounded, we have
\begin{align*}
& \ E \big[(X_i^{\top} \{\HSigmaX^{-1}_{-i} - \SigmaX^{-1}\}X_i )^2\big] \\
\leq & \ E \big[ \Vert X_i \Vert_{2}^{4} \big] E \big [ \Vert \HSigmaX^{-1}_{-i} - \HSigmaX^{-1} \Vert_{\op}^{2} \big] \lesssim p^{2} \frac{p}{n} = O \left( \frac{p^{3}}{n} \right).
\end{align*}
\iffalse
    \begin{align*}
        &E\big[(X_i^{\top} \{\HSigmaX^{-1}_{-i} - \SigmaX^{-1}\}X_i )^2\big]\\
        &\leq M^4 E\big[(1^{\top}\{\HSigmaX^{-1}_{-i} - \SigmaX^{-1}\}1)^2\big]\\
        &\leq M^4 \norm{11^{\top}}{F}^2 E\norm{\HSigmaX^{-1}_{-i} - \SigmaX^{-1}\})}{F}^2 \\
        &\leq M^4 p^3 E\norm{\HSigmaX^{-1}_{-i} - \SigmaX^{-1}\})}{\op}^2\\
        &= O\left(\frac{p^4}{n}\right).
    \end{align*}
    Therefore, we have 
    \[E\{X_i^{\top} (\HSigmaX^{-1} - \SigmaX^{-1})X_i\}^2 = O\left(\frac{p^4}{n}\right).\]
\fi
\end{proof}

\begin{lemma}\label{lem:repeatSH2}
Under Assumptions \ref{ap1}--\ref{ap3}, $\{X_i\}_{i=1}^n$ consists of $n$ copies of $p$-dimensional i.i.d.\ random vectors, using notations in the proof of Lemma \ref{lem:leave-out-matrix}, when $p=o(n)$, we have
    \begin{align*}       
        E\{f_{(0)}&(\SigmaX^{-1};(1,2))f_{(0)}(-\SigmaX^{-1};(1,2))f_{(0)}(\SigmaX^{-1};(1,3))f_{(0)}(-\SigmaX^{-1};(1,3))\} \\
        &= E\big\{f_{(1,2,3)}(\SigmaX^{-1};(1,2))f_{(1,2,3)}(-\SigmaX^{-1};(1,2))f_{(1,2,3)}(\SigmaX^{-1};(1,3))f_{(1,2,3)}(-\SigmaX^{-1};(1,3))\big\} + O(p^2),\\
       E\{f_{(0)}&(\SigmaX^{-1};(1,2))f_{(0)}(-\SigmaX^{-1};(1,2))f_{(0)}(\SigmaX^{-1};(3,4))f_{(0)}(-\SigmaX^{-1};(3,4))\} \\
        &= E\big\{f_{(1,2,3,4)}(\SigmaX^{-1};(1,2))f_{(1,2,3,4)}(-\SigmaX^{-1};(1,2))f_{(1,2,3,4)}(\SigmaX^{-1};(1,3))f_{(1,2,3,4)}(-\SigmaX^{-1};(1,3))\big\} + O(p).
    \end{align*}
        
\end{lemma}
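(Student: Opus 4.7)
The plan is to mimic the strategy of Lemma~\ref{lem:repeated Sherman-Morrison}, but applied to the product of four $f$-factors rather than two. I will use the Sherman--Morrison identities \eqref{eq:repeatS}--\eqref{eq:repeat0} repeatedly to peel off the dependence of $\hat\Sigma^{-1}$ on $X_1,X_2,X_3$ (resp. $X_1,X_2,X_3,X_4$), one index at a time. After each peeling, each factor $f_{(I)}(\Omega;(i,j))$ gets replaced by a linear combination of $f_{(I\cup\{l\})}(\Omega;(i,j))$ and $f_{(I\cup\{l\})}(0;(i,j))$, with coefficients $1$, $-\gamma_{(I\cup\{l\})}(l)$, and $1-\gamma_{(I\cup\{l\})}(l)$. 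Expanding the resulting product yields one ``main'' term in which every $\gamma$-coefficient has been replaced by $1$ (this is precisely the right-hand side of the claimed identity, after noting that $f(0;(i,j))$ cancels out when combined with $f(-\Sigma^{-1};(i,j))$ appropriately, or can be re-expressed via $f_{(I)}(0;(i,j)) = f_{(I)}(\Sigma^{-1};(i,j)) + X_i^\top\Sigma^{-1}X_j$), plus a host of ``residual'' terms each carrying at least one factor of $\gamma_{(\cdot)}(l)$.

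The second step is to bound each residual. A single $\gamma$ factor contributes $O_P(p/n)$ under Assumptions~\ref{ap2}--\ref{ap3}. Meanwhile, a rough projection argument, of the same flavor as in the proof of Lemma~\ref{lem: sample covariance1_plus} and equation~\eqref{key projection}, shows that any product of four factors of the form $X_i^\top(\hat\Sigma^{-1}_{-I}-\Omega)X_j$ (where $\Omega\in\{0,\pm\Sigma^{-1}\}$) taken in expectation against bounded functions of $X_i$'s is $O(p^2)$ for the first identity (because the indices $(1,2)$ and $(1,3)$ share the index $1$, and the projection onto $X_1$ collapses two pairs into a single trace) and $O(p)$ for the second identity (because the two pairs $(1,2)$ and $(3,4)$ are disjoint, so one pair can be completely decoupled from the other via independence, leaving only one ``surviving'' projection of order $p$). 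Combined with the extra $\gamma$-factor of order $p/n$, each residual is therefore $O(p^3/n)$ or $O(p^2/n)$, which is absorbed into $O(p^2)$ or $O(p)$ respectively under $p=o(n)$.

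The third step is to enumerate and organize the residual terms. Here one can use a Taylor expansion of $\gamma_I(l)=g(X_l)/(1+\sum_{i}g(X_i))$ in powers of $g(X_l)=X_l^\top \hat\Sigma^{-1}_{-I}X_l/n$, truncating at some finite order $m$, exactly as in the proof of Lemma~\ref{lem: sample covariance1_plus}. Then each residual becomes a monomial in $g(X_l)$'s multiplied by the four $f_{(I)}$-factors, and the multinomial theorem reduces the bookkeeping to a finite sum over multi-indices, whose total contribution can be summed geometrically since $g(X_l)=O_P(p/n)$ and $p/n=o(1)$.

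The main obstacle will be the combinatorial bookkeeping: with four $f$-factors and three (or four) removed indices, the number of residual terms is much larger than in Lemma~\ref{lem:repeated Sherman-Morrison}, and one must carefully verify that every residual indeed carries at least one $\gamma$-factor and hence one extra $p/n$. A secondary difficulty is verifying the sharp $O(p^2)$ versus $O(p)$ distinction between the two identities; this hinges on whether $X_1$ is shared across the two quadratic forms (giving an extra $\mathrm{tr}(\Sigma^{-1}\Sigma)=p$ factor from the projection onto $X_1$) or whether the two pairs are disjoint (in which case independence kills the cross-projection and leaves only the single $p$). Once this structural point is isolated, the remainder of the argument is mechanical and follows the template of Lemmas~\ref{lem:repeated Sherman-Morrison} and \ref{lem: sample covariance1_plus}.
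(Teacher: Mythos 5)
Your proposal is correct and follows essentially the same route as the paper: repeated Sherman--Morrison peeling of indices $1,2,3$ (resp. $1,2,3,4$), identification of the $\gamma$-free main term with the right-hand side, and bounding each $\gamma$-carrying residual by conditioning on $X_1$ and using operator-norm/projection bounds, with the $O(p^2)$ versus $O(p)$ distinction arising exactly as you say from whether the two quadratic forms share the index $1$ or are disjoint. The only cosmetic difference is that the paper dispenses with the Taylor expansion of $\gamma$ here and simply uses $\gamma_{(1)}(1)<1$ together with positive semi-definiteness of the associated quadratic forms (rewriting $f(\Sigma^{-1};\cdot)=f(0;\cdot)-X_i^{\top}\Sigma^{-1}X_j$ in the two non-PSD cases), whereas you invoke the sharper $\gamma=O_P(p/n)$; both suffice for the claimed error terms.
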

    \begin{proof}
    We first apply Lemma \ref{lem:Sherman-Morrison} to the first term in LHS. By \eqref{eq:repeat0} in the proof of Lemma \ref{lem:repeated Sherman-Morrison}, and then we have
    \begin{align*}
        & f_{(0)} (\SigmaX^{-1};(1,2))f_{(0)}(-\SigmaX^{-1};(1,2))f_{(0)}(\SigmaX^{-1};(1,3))f_{(0)}(-\SigmaX^{-1};(1,3))\} \\
        &= f_{(1)}(\SigmaX^{-1};(1,2))f_{(1)}(-\SigmaX^{-1};(1,2))f_{(1)}(\SigmaX^{-1};(1,3))f_{(1)}(-\SigmaX^{-1};(1,3))\}\\
        &\quad - \gamma^3_{(1)}(1) \Big[f^2_{(1)}(0;(1,2))f_{(1)}(0;(1,3))\{f_{(1)}(\SigmaX^{-1};(1,3))+f_{(1)}(-\SigmaX^{-1};(1,3))\} \\&\quad\quad+ f^2_{(1)}(0;(1,3))f_{(1)}(0;(1,2))\{f_{(1)}(\SigmaX^{-1};(1,2))+f_{(1)}(-\SigmaX^{-1};(1,2))\}\Big] \\
        &\quad + \gamma^2_{(1)}(1) \Big[f_{(1)}(0;(1,2))f_{(1)}(0;(1,3))\{f_{(1)}(\SigmaX^{-1};(1,2))f_{(1)}(\SigmaX^{-1};(1,3)) \\&\quad\quad\quad+ f_{(1)}(\SigmaX^{-1};(1,2))f_{(1)}(-\SigmaX^{-1};(1,3)) + f_{(1)}(-\SigmaX^{-1};(1,2))f_{(1)}(\SigmaX^{-1};(1,3))
        \\&\quad\quad\quad  + f_{(1)}(-\SigmaX^{-1};(1,2))f_{(1)}(-\SigmaX^{-1};(1,3))\} \\
        & \quad \quad +f^2_{(1)}(0;(1,3))f_{(1)}(-\SigmaX^{-1};(1,2))f_{(1)}(\SigmaX^{-1};(1,2)) +  f^2_{(1)}(0;(1,2))f_{(1)}(-\SigmaX^{-1};(1,3))f_{(1)}(\SigmaX^{-1};(1,3))\Big] \\
        &\quad - \gamma_{(1)}(1) \Big[f_{(1)}(0;(1,2))f_{(1)}(\SigmaX^{-1};(1,3))f_{(1)}(-\SigmaX^{-1};(1,3))\{f_{(1)}(\SigmaX^{-1};(1,2))+f_{(1)}(-\SigmaX^{-1};(1,2))\} \\
        &\quad\quad+f_{(1)}(0;(1,3))f_{(1)}(\SigmaX^{-1};(1,2))f_{(1)}(-\SigmaX^{-1};(1,2))\{f_{(1)}(\SigmaX^{-1};(1,3))+f_{(1)}(-\SigmaX^{-1};(1,3))\} \Big] \\
        &\quad +\gamma^4_{(1)}(1)f^2_{(1)}(0;(1,2))f^2_{(1)}(0;(1,3)).
    \end{align*}

%Similar to what we did in \eqref{eqgamma}, we can write
%\[f_{(1)}(\Sigma^{-1};(1,j)) = f_{(1)}(0;(1,j)) + X_1^{\top}\Sigma^{-1}X_j.\]
%From this, we can conclude that all quadratic forms are greater than or equal to zero. Because $\gamma_1(1)<1$, we can directly bound each quadratic form itself. 

We can control each term by analyzing 
\begin{align*}
& \gamma^l_{(1)}(1)\cdot f_{(1,2,3)}^{k_1}(\SigmaX^{-1};(1,2)) \cdot f_{(1,2,3)}^{k_2}(-\SigmaX^{-1};(1,2)) \cdot f_{(1,2,3)}^{k_3}(0;(1,2)) \\
& \cdot f_{(1,2,3)}^{q_1}(\SigmaX^{-1};(1,3)) \cdot f_{(1,2,3)}^{q_2}(-\Sigma^{-1};(1,3)) \cdot f_{(1,2,3)}^{q_3}(0;(1,3)),
\end{align*}
where $l\in\{1,2,3,4\},k_1+k_2+k_3=2,\ q_1+q_2+q_3=2,\ k_1,k_2,q_1,q_2\neq 2,\ \{k_1,k_2,q_1,q_2\} \in \mathbb{N}$. This term can be viewed as a quadratic form of $X_1$ except when $k_1=1,q_1=0$ or $k_1=1,q_1=0$, because the matrix associated with this term is positive semi-definite except these two scenarios. When $k_1=1,q_1=0$ or $k_1=1,q_1=0$, however, we can write
\[f_{(1)}(\Sigma^{-1};(1,j)) = f_{(1)}(0;(1,j)) - X_1^{\top}\Sigma^{-1}X_j,\]
for $j=2,3$, and all the terms can be bounded by 
\begin{align*}
      &\big\vert E\{f_{(1,2,3)}^{k_1}(\SigmaX^{-1};(1,2))f_{(1,2,3)}^{k_2}(-\SigmaX^{-1};(1,2))f_{(1,2,3)}^{k_3}(0;(1,2))\\
      &\quad\times f_{(1,2,3)}^{q_1}(\SigmaX^{-1};(1,3))f_{(1,2,3)}^{q_2}(-\Sigma^{-1};(1,3))f_{(1,2,3)}^{q_3}(0;(1,3))\}\big\vert,
      \end{align*}
      because $\gamma_{(1)}(1)<1$. Then we have
%We next bound $f^2_{(0)}(\SigmaX^{-1};(1,2))f^2_{(0)}(-\SigmaX^{-1};(1,3))$. By a similar argument, we only need to control the following term:
%\begin{align*}
%& f_{(1,2,3)}^{k_1}(\SigmaX^{-1};(1,2)) \cdot f_{(1,2,3)}^{k_2}(-\SigmaX^{-1};(1,2)) \cdot f_{(1,2,3)}^{k_3}(0;(1,2)) \\
%& \cdot f_{(1,2,3)}^{q_1}(\SigmaX^{-1};(1,3)) \cdot f_{(1,2,3)}^{q_2}(-\Sigma^{-1};(1,3)) \cdot f_{(1,2,3)}^{q_3}(0;(1,3)),
%\end{align*}
%where $k_1+k_2+k_3=2,\ q_1+q_2+q_3=2,\ k_1,k_2,q_1,q_2\neq 2,\ \{k_1,k_2,q_1,q_2\} \in \mathbb{N}$. Then we have 
    \begin{align*}
      &E\{f_{(1,2,3)}^{k_1}(\SigmaX^{-1};(1,2))f_{(1,2,3)}^{k_2}(-\SigmaX^{-1};(1,2))f_{(1,2,3)}^{k_3}(0;(1,2))\\
      &\quad\times f_{(1,2,3)}^{q_1}(\SigmaX^{-1};(1,3))f_{(1,2,3)}^{q_2}(-\Sigma^{-1};(1,3))f_{(1,2,3)}^{q_3}(0;(1,3))\}\\
      &= E\big[X_1^{\top} \{\HSigmaX^{-1}_{-(1,2,3)} - \SigmaX^{-1}\}^{k_1} \{\HSigmaX^{-1}_{-(1,2,3)} + \SigmaX^{-1}\}^{k_2}\Sigma\{\HSigmaX^{-1}_{-(1,2,3)} \}^{k_3}X_1\\
      &\quad \times X_1^{\top}\{\HSigmaX^{-1}_{-(1,2,3)} - \SigmaX^{-1}\}^{q_1} \{\HSigmaX^{-1}_{-(1,2,3)} + \SigmaX^{-1}\}^{q_2}\Sigma\{\HSigmaX^{-1}_{-(1,2,3)} \}^{q_3}X_1\big]\\
      &\leq \Vert X_1 \Vert_2^4 \cdot \Vert \HSigmaX^{-1}_{-(1,2,3)} - \SigmaX^{-1}\Vert ^{k_1+q_1}_{\op}\cdot\Vert\HSigmaX^{-1}_{-(1,2,3)} + \SigmaX^{-1}\Vert^{k_2+q_2}_{\op}\cdot \Vert\HSigmaX^{-1}_{-(1,2,3)}\Vert^{k_3+q_3}_{\op}\\
      & \lesssim M^2 p^2 \left( \sqrt{\frac{p}{n}} \right)^{k_1+q_1} \left( \frac{2}{\lambda_{\min}} \right)^{k_3+q_3} \left( \frac{3}{\lambda_{\min}} \right)^{k_2+q_2}\\
      &= O(p^2).
    \end{align*}
    The first equality follows because $X_2$ and $X_3$ are independent of $\HSigmaX^{-1}_{-(1,2,3)}$ and $X_1$. Then we obtain
    \[E\{f^2_{(0)}(\SigmaX^{-1};(1,2))f^2_{(0)}(-\SigmaX^{-1};(1,3))\} = E\big\{f^2_{(1,2,3)}(\SigmaX^{-1};(1,2))f^2_{(1,2,3)}(-\SigmaX^{-1};(1,3))\big\} + O(p^2).\]
    Finally, we analyze $f^2_{(0)}(\SigmaX^{-1};(1,2))f^2_{(0)}(-\SigmaX^{-1};(3,4))$. We are only left to control 
     \begin{align*}
    f_{(1,2,3,4)}^{k_1}(\SigmaX^{-1};& \,(1,2))f_{(1,2,3,4)}^{k_2}(0;(1,2))f_{(1,2,3,4)}^{k_3}(-\SigmaX^{-1}; (3,4))\\
    &\times f_{(1,2,3,4)}^{q_1}(\SigmaX^{-1};(1,2))f_{(1,2,3,4)}^{q_2}(0;(1,2))f_{(1,2,3,4)}^{q_3}(-\SigmaX^{-1}; (3,4)),
    \end{align*}
    where $k_1+k_2+k_3=2,\ q_1+q_2+q_3=2,\ k_1,k_2,q_1,q_2\neq 2,\ \{k_1,k_2,q_1,q_2\} \in \mathbb{N}$. As $X_2$, $X_3$, $X_4$ are independent of $\HSigmaX^{-1}_{-(1,2,3,4)}$ and $X_1$, we have 
    \begin{align*}
        E\Big[    &f_{(1,2,3,4)}^{k_1}(\SigmaX^{-1};(1,2))f_{(1,2,3,4)}^{k_2}(0;(1,2))f_{(1,2,3,4)}^{k_3}(-\SigmaX^{-1}; (3,4))\\
    &\times f_{(1,2,3,4)}^{q_1}(\SigmaX^{-1};(1,2))f_{(1,2,3,4)}^{q_2}(0;(1,2))f_{(1,2,3,4)}^{q_3}(-\SigmaX^{-1}; (3,4))\Big]\\
        &= E\big[X_1^{\top} \{\HSigmaX^{-1}_{-(1,2,3,4)} - \SigmaX^{-1}\}^{k_1+q_1}\{\HSigmaX^{-1}_{-(1,2,3,4)} \}^{k_2+q_2}\{\HSigmaX^{-1}_{-(1,2,3,4)} + \SigmaX^{-1}\}^{k_3+q_3}X_1\big]\\
        &\lesssim p\cdot \left( \sqrt{\frac{p}{n}} \right)^{k_1+q_1}\left( \frac{2}{\lambda_{\min}} \right)^{k_2+q_2}\left( \frac{3}{\lambda_{\min}} \right)^{k_3+q_3} \\
        &= O(p).
    \end{align*} 
    \end{proof}

\putbib[reference]
\end{bibunit}

\end{document}